\def\ProblemSpecBox{
	\@ifnextchar[\ProblemSpecBox@opt{\ProblemSpecBox@noopt}}
\def\ProblemSpecBox@opt[#1]#2{
	\protected@edef\@currentlabelname{#1}
	\protected@edef\@currentlabel{#1}
	\begin{mdframed}[
		innerlinewidth=0.5pt,
		innerleftmargin=10pt,
		innerrightmargin=10pt,
		innertopmargin = 10pt,
		innerbottommargin=10pt,
		skipabove=\dimexpr\topsep+\ht\strutbox\relax,
		roundcorner=5pt,
		frametitle={#2},
		frametitlerule=true,
		frametitlerulewidth=1pt]
	}
	\def\ProblemSpecBox@noopt#1{
		\ProblemSpecBox@opt[#1]{#1}
	}
	\def\endProblemSpecBox{
	\end{mdframed}
}
\newcommand{\ru}[1]{{#1}roblem {RU}} 
\newcommand{\gop}[1]{{#1}roblem {GO}} 
\newcommand{\rpe}[1]{{#1}roblem {RPE}} 
\newcommand{\pisiSE}{$\mathrm{\Pi\Sigma}$}
\newcommand{\rpisiSE}{$\mathrm{R}\mathrm{\Pi\Sigma}$}
\newcommand{\aE}{$\mathrm{A}$}
\newcommand{\rE}{$\mathrm{R}$}
\newcommand{\apE}{$\mathrm{AP}$}
\newcommand{\rpiE}{$\mathrm{R\Pi}$}
\newcommand{\pE}{$\mathrm{P}$}
\newcommand{\piE}{$\mathrm{\Pi}$}
\newcommand{\sigmaE}{$\mathrm{\Sigma}$}
\newcommand{\ii}{\mathbbm{i}} 
\newcommand{\ee}{\mathbbm{e}} 
\newcommand{\sm}{\setminus}
\newcommand{\s}{\sigma}
\newcommand{\bs}{\boldsymbol}
\newcommand{\myt}{\mathfrak{t}}
\newcommand{\CC}{\mathbb{C}}
\newcommand{\EE}{\mathbb{E}}
\newcommand{\HH}{\mathbb{E}}
\newcommand{\KK}{\mathbb{K}}
\newcommand{\NN}{\mathbb{N}}
\newcommand{\ZZ}{\mathbb{Z}}
\newcommand{\FF}{\mathbb{F}}
\newcommand{\QQ}{\mathbb{Q}}
\newcommand{\XX}{\mathbb{X}}
\newcommand{\VV}{\mathbb{V}}
\renewcommand{\AA}{\mathbb{A}}
\newcommand{\zvs}[1]{\{ {\bf 0}_{#1} \}} 
\newcommand{\dField}[2][\sigma]{( {#2}, {#1})} 
\newcommand{\KarrModule}[2]{{\bf M} \left( {#1}, {#2} \right)} 
\newcommand{\ringOfSeqs}{\mathbb{K}^{\mathbb{N}}} 
\newcommand{\ringOfEquivSeqs}[1][\KK]{\mathcal{S}({#1})} 
\DeclareMathOperator{\lcm}{lcm}
\DeclareMathOperator{\sign}{sign}
\DeclareMathOperator{\ev}{ev} 
\DeclareMathOperator{\ProdExpr}{ProdE}
\DeclareMathOperator{\Prod}{Prod}
\newcommand{\geno}[1]{\left\langle {#1} \right\rangle}
\newcommand{\ProdLst}[4]{ {{#1}^{{#4}_{#2}}_{#2}}\cdots{{#1}^{{#4}_{#3}}_{#3}} } 
\newcommand{\prodLst}[5]{ \prod_{{#1}={#2}}^{{#3}} {{#4}_{#1}}^{{#5}_{#1}} } 
\newcommand{\fourargsexpsubscript}[4]{ {#1}^{{#2}_{{#3},{#4}}}_{{#3},{#4}} } 
\newcommand{\fiveargsexpsubscript}[5]{ {#1}^{{#2}_{{#3},{{#4}_{#5}}}}_{{#3},{{#4}_{#5}}} } 
\newcommand{\vectorexpsubscript}[3]{{{\bs{#1}}^{{\bs{#2}}_{\bs{#3}}}_{\bs{#3}}}} 
\newcommand{\twoargstwosamesubscript}[2]{{#1}_{{#2},{#2}}} 
\newcommand{\threeargstwosamesubscript}[3]{ {#1}_{{#2},{{#3}_{#2}}} } 
\newcommand{\threeargssubscript}[3]{ {#1}_{{#2},{#3}} } 
\newcommand{\fourargssubscript}[4]{{#1}_{{#2},{{#3}_{#4}}}} 
\newcommand{\shp}{\mathrm{gcd}_{\sigma}} 
\newcommand{\const}{\mathrm{const}} 
\newcommand{\zv}[1]{  {\bf 0}_{#1} }
\newcommand{\zs}{\{ 0 \}} 
\newcommand{\abs}[1]{| {#1} |} 
\newcommand{\norm}[1]{\lVert { #1 } \rVert} 
\newcommand{\seqA}[4][0]{\big\langle {#2}_{#3}{(#4)} \big\rangle_{{#4} \geq {#1}} }
\newcommand{\constSeqA}[3][0]{{\big\langle {#2}^{#3} \big\rangle}_{{#3} \geq {#1}}} 
\newcommand{\funcSeqA}[3][0]{\big\langle {#2} \big\rangle_{{#3} \geq {#1}}} 
\newcommand{\polySeq}[2]{\big({#1}\big)^{#2}}
\newcommand{\rootOfUnitySeq}[2]{\big({(-1)^{\frac{1}{#1}}\big)^{#2}}}
\newcommand{\rootOfUnitySeqExp}[3]{\big(\big({(-1)^{\frac{1}{#1}}\big)^{#2}}\big)^{#3}}
\newcommand{\algSeq}[2]{\big(\sqrt{#1}\big)^{#2}}
\newcommand{\algSeqExp}[3]{\big(\big(\sqrt{#1}\big)^{#2}\big)^{#3}}
\newcommand{\intSeq}[2]{{#1}^{#2}} 
\newcommand{\intSeqExp}[3]{\big({#1}^{#2}\big)^{#3}} 
\newcommand{\polySeqExp}[3]{\big({\big({#1}\big)^{#2}}\big)^{#3}}
\newcommand{\myProduct}[4]{\smashoperator{\prod_{\mathclap{\substack{{#1}={#2}}}}^{{#3}}} {#4}} 
\newcommand{\myProdWithInnerUnderbrace}[5]{\smashoperator{\prod_{\mathclap{\substack{{#1}={#2}}}}^{{#3}}} 
	\underbrace{#4}_{=: \, #5}}
\newcommand{\Lst}[3]{{#1}_{#2},\dots,{#1}_{#3}} 
\newcommand{\Lstc}[4]{{#1}_{{#2},{#3}},\dots,{#1}_{{#2},{#4}}} 
\newcommand{\Lsth}[4]{{#1}_{#2},\dots,{#1}_{#3}, {#4}} 
\newsavebox\myboxA
\newsavebox\myboxB
\newlength\mylenA
\newcommand{\xoverline}[2][0.75]{%
	\sbox{\myboxA}{$\m@th#2$}%
	\setbox\myboxB\null
	\ht\myboxB=\ht\myboxA%
	\dp\myboxB=\dp\myboxA%
	\wd\myboxB=#1\wd\myboxA
	\sbox\myboxB{$\m@th\overline{\copy\myboxB}$}
	\setlength\mylenA{\the\wd\myboxA}
	\addtolength\mylenA{-\the\wd\myboxB}%
	\ifdim\wd\myboxB<\wd\myboxA%
	\rlap{\hskip 0.5\mylenA\usebox\myboxB}{\usebox\myboxA}%
	\else
	\hskip -0.5\mylenA\rlap{\usebox\myboxA}{\hskip 0.5\mylenA\usebox\myboxB}%
	\fi}
\newcommand{\dr}{difference ring} 
\newcommand{\df}{difference field}
\newcommand{\drE}{difference ring extension}
\newcommand{\dfE}{difference field extension}
\spnewtheorem{mtheorem}{Theorem}[section]{\bfseries}{\itshape}
\spnewtheorem{mdefinition}{Definition}[section]{\bfseries}{}
\spnewtheorem{mlemma}{Lemma}[section]{\bfseries}{\itshape}
\spnewtheorem{mexample}{Example}[section]{\bfseries}{}
\spnewtheorem{mcorollary}{Corollary}[section]{\bfseries}{\itshape}
\spnewtheorem{mproposition}{Proposition}[section]{\bfseries}{\itshape}
\begin{document}

\title*{Representing (\texorpdfstring{$\boldsymbol{q}$}{q}--)hypergeometric products and mixed versions in difference rings\thanks{Supported by the Austrian Science Fund (FWF) grant SFB F50 (F5009-N15)}
\bigskip
{\small \text{Dedicated to Sergei A. Abramov on the occasion of his 70th birthday}}}
\titlerunning{Representing (\texorpdfstring{$q$}{q}--)hypergeometric products and mixed versions in difference rings}
\author{Evans Doe Ocansey and Carsten Schneider}
\institute{Evans Doe Ocansey \at Research Institute for Symbolic Computation (RISC), Johannes Kepler University, Altenbergerstra\ss e 69, 4040, Linz, Austria, \email{eocansey@risc.jku.at}
\and Carsten Schneider \at Research Institute for Symbolic Computation (RISC), Johannes Kepler University, Altenbergerstra\ss e 69, 4040, Linz, Austria, \email{cschneid@risc.jku.at}}
%
%
\maketitle

\abstract{
	In recent years, Karr's difference field theory has been extended to the so-called \rpisiSE-extensions in which one can  represent not only indefinite nested sums and products that can be expressed by transcendental ring extensions, but one can also handle algebraic products of the form \texorpdfstring{$\alpha^n$}{alpha\^n} where \texorpdfstring{$\alpha$}{alpha} is a root of unity. In this article we supplement this summation theory substantially by the following building block. We provide new algorithms that represent a finite number of hypergeometric or mixed \texorpdfstring{$(\Lst{q}{1}{e})$}{(q1,...,qe)}-multibasic hypergeometric products in such a difference ring. This new insight provides a complete summation machinery that enables one to formulate such products and indefinite nested sums defined over such products in \rpisiSE-extensions fully automatically. As a side-product, one obtains compactified expressions where the products are algebraically independent among each other, and one can solve the zero-recognition problem for such products.}

\section{Introduction}

Symbolic summation in difference fields has been introduced by Karr's groundbreaking work~\cite{karr1981summation,karr1985theory}. 
He defined the so-called \pisiSE-fields $\dField{\FF}$ which are composed by a field $\FF$ and a field automorphism $\s:\FF\to\FF$. Here the field $\FF$ is built by a tower of transcendental field extensions whose generators either represent sums or products where the summands or multiplicands are elements from the field below.
In particular, the following problem has been solved: given such a \pisiSE-field $\dField{\FF}$ and given $f\in\FF$. Decide algorithmically, if there exists a $g\in\FF$ with 
\begin{equation}\label{Equ:DFTele}
f=\s(g)-g.
\end{equation}
Hence if $f$ and $g$ can be rephrased to expressions $F(k)$ and $G(k)$ in terms of indefinite nested sums and products, one obtains the telescoping relation
\begin{equation}\label{Equ:SeqTele}
F(k)=G(k+1)-G(k).
\end{equation}
Then summing this telescoping equation over a valid range, say $a\leq k\leq b$, one gets the 
identity $\sum_{k=a}^bF(k)=G(b+1)-G(a).$\\
In a nutshell, the following strategy can be applied: (I) construct an appropriate  \pisiSE-field $\dField{\FF}$ in which a given summand $F(k)$ in terms of indefinite nested sums and products is rephrased by $f\in\FF$; (II) compute $g\in\FF$ such that~\eqref{Equ:DFTele} holds; (III) rephrase $g\in\FF$ to  an expression $G(k)$ such that~\eqref{Equ:SeqTele} holds.

In the last years various new algorithms and improvements of Karr's difference field theory have been developed in order to obtain a fully automatic simplification machinery for nested sums. Here the key observation is that a sum can be either expressed in the existing difference field $\dField{\FF}$ by solving the telescoping problem~\eqref{Equ:SeqTele} or --if this is not possible-- it can be adjoined as a new extension on top of the already constructed field $\FF$ yielding again a \pisiSE-field; see Theorem~\ref{thm:rpiE-Criterion}(3) below. By a careful construction of $\dField{\FF}$ one can simplify sum expressions such that the nesting depth is minimized~\cite{Schneider:08c}, or the number~\cite{Schneider:15} or the degree~\cite{Schneider:07d} of the objects arising in the summands are optimized.

In contrast to sums, representing products in \pisiSE-fields is not possible in general. In particular, the alternating sign $(-1)^k$, which arises frequently in applications, can be represented properly only in a ring with zero divisors introducing relations such as
$(1-(-1)^k)(1+(-1)^k)=0$.
In~\cite{schneider2005product} and a streamlined version worked out in~\cite{Schneider:14}, this situation has been cured for the class of hypergeometric products of the form $\prod_{i=l}^kf(i)$ with $l\in\NN$ and $f(k)\in\QQ(k)$ being a rational function with coefficient from the rational numbers: namely, a finite number of such products can be always represented in a \pisiSE-field adjoined with the element $(-1)^k$.
In particular, nested sums defined over such products can be formulated automatically in difference rings built by the so-called~\rpisiSE-extensions~\cite{schneider2016difference,schneider2017summation}. This means that the difference rings are constructed by transcendental ring extensions and algebraic ring extensions with generators of the form $\alpha^n$
where $\alpha$ is a primitive root of unity. Within this setting~\cite{schneider2016difference,schneider2017summation}, one can then solve the telescoping problem for indefinite sums (see Eq.~\eqref{Equ:DFTele}) and more generally the creative telescoping problem~\cite{petkovvsek1996b} to compute linear recurrences for definite sums. Furthermore one can simplify the so-called d'Alembertian~\cite{Petkov:92,Abramov:94,Abramov:96,vanHoeij:99} 
or Liouvillian solutions~\cite{van2006galois,Petkov:2013} of linear recurrences which are given in terms of nested sums defined over hypergeometric products. For many problems coming, e.g., from combinatorics or particle physics (for the newest applications see~\cite{Sulzgruber:16} or~\cite{Schneider:16b}) this difference ring machinery with more than 100 extension variables works fine. But in more general cases, one is faced with nested sums defined not only over hypergeometric but also over mixed multibasic products. Furthermore, these products might not be expressible in $\QQ$ but only in an algebraic number field, i.e., in a finite algebraic field extension of $\QQ$.

In this article we will generalize the existing product algorithms~\cite{schneider2005product,Schneider:14} to  cover also this more general class of products.

\begin{mdefinition}\label{defn:prodObjs}
	Let $\KK = K(\Lst{q}{1}{e})$ be a rational function field over a field $K$ and let $\FF=\KK(x,\Lst{t}{1}{e})$ be a rational function field over $\KK$. $\prod_{k=l}^{n}f(k,q_{1}^{k},\dots,q_{e}^{k})$ is a \emph{mixed $(q_1,\dots,q_e)$-multibasic hypergeometric product} in $n$, if $f(x,\Lst{t}{1}{e})\in \FF\setminus\{0\}$ and $l\in\NN$ is chosen big enough (see Ex.~\ref{exa:DiffFieldOfqMixedSeqs} below) such that $f(\ell,q_{1}^{\ell},\dots,q_{e}^{\ell})$ has no pole and is non-zero for all $\ell\in\NN$ with $\ell\ge l$. If $f(\Lst{t}{1}{e})\in\FF$ which is free of $x$, then $\prod_{k=l}^{n}f(q_{1}^{k},\dots,q_{e}^{k})$ is called a $(q_1,\dots,q_e)$-\emph{multibasic hypergeometric product} in $n$. If $e=1$, then it is called a \emph{basic} or \emph{$q$-hypergeometric product} in $n$ where $q=q_{1}$. If $e=0$ and $f\in\KK(x)$, then $\prod_{k=l}^{n}f(k)$ is called a \emph{hypergeometric product} in $n$. Finally, if $f\in\KK$, it is called \emph{constant or geometric product} in $n$.\\ 
	Let $\bs{q}^{n}$ denote $q_{1}^{n},\dots,q_{e}^{n}$ and
	$\bs{t}$ denote $(t_1,\dots,t_e)$.
	Further, we define the set of ground expressions\footnote{Their elements are considered as expressions that can be evaluated for sufficiently large $n\in\NN$.} $\KK(n)=\{f(n)\mid f(x)\in\KK(x)\}$, $\KK(\bs{q}^{n})=\{f(\bs{q}^n)\mid f(\bs{t})\in\KK(\bs{t})\}$ and $\KK(n,\bs{q}^{n})=\{f(n,\bs{q}^n)\mid f(x,\bs{t})\in\KK(x,\bs{t})\}$.
        Moreover, we define       
        $\Prod(\XX)$ with $\XX\in\{\KK, \KK(n), \KK(\bs{q}^{n}),\KK(n,\bs{q}^{n})\}$ as 
	the set of all such products where the multiplicand is taken from $\XX$. Finally, we introduce the set of product expressions $\ProdExpr(\XX)$ as the set of all elements	
	\begin{equation}\label{Equ:ProdEDef}
	  \smashoperator{\sum_{\mathclap{\substack{{(\Lst{\nu}{1}{m})\in S}{}}}}^{{}}}{a_{(\Lst{\nu}{1}{m})}(n)\,P_{1}(n)^{\nu_{1}}\cdots P_{m}(n)^{\nu_{m}}}
	\end{equation}
	with $m\in\NN$, $S \subseteq \ZZ^{m}$ finite, $a_{(\Lst{\nu}{1}{m})}(n)\in\XX$ and $P_{1}(n),\dots,P_{m}(n)\in\Prod(\XX)$.
\end{mdefinition}

\noindent For this class where the subfield $K$ of $\KK$ itself can be a rational function field over an algebraic number field, we will solve the following problem.

\vspace*{-0.3cm}
\begin{ProblemSpecBox}[\rpe{P}]{
		{\bf \rpe{P}: Representation of Product Expressions.}
	}\label{prob:ProblemRPE}
	{
		Let $\XX_{\KK}\in\{\KK,\KK(n), \KK({\bs q}^{n}),\KK(n,{\bs q}^{n})\}$. \emph{Given} $P(n)\in\ProdExpr(\XX_{\KK})$;\\ \emph{find} $Q(n)\in\ProdExpr(\XX_{\KK^{\prime}})$ with $\KK^{\prime}$ a finite algebraic field extension\footnote{If $\KK=K(\Lst{\kappa}{1}{u})(\Lst{q}{1}{e})$ is a rational function field over an algebraic number field $K$, then in worst case $\KK$ is extended to $\KK^{\prime}=K^{\prime}(\kappa_1,\dots,\kappa_u)(q_1,\dots,q_e)$ where $K^{\prime}$ is an algebraic extension of $K$. Subsequently, all algebraic field extensions are finite.} of $\KK$ and a natural number $\delta$ with the following properties: 
		\begin{enumerate} 
			\item $P(n) = Q(n)$ for all $n \in \NN$ with $n \geq \delta$;
			\item The product expressions in $Q(n)$ (apart from products over roots of unity) are algebraically independent among each other.
			\item The zero-recognition property holds, i.e., $P(n)=0$ holds for all $n$ from a certain point on if and only if $Q(n)$ is the zero-expression. 
		\end{enumerate}
	}
\end{ProblemSpecBox}
Internally, the multiplicands of the products are factorized and the monic irreducible factors, which are shift-equivalent, are rewritten in terms of one of these factors; compare~\cite{abramov1971summation,Paule:95,schneider2005product,Petkov:10,ZimingLi:11}. Then using results of~\cite{schneider2010parameterized,Singer:08} we can conclude that products defined over these irreducible factors can be rephrased as transcendental difference ring extensions. Using similar strategies, one can treat the content coming from the monic irreducible polynomials, and obtains finally an \rpisiSE-extension in which the products can be rephrased. We remark that the normal forms presented in~\cite{ZimingLi:11} are closely related to this representation and enable one to check, e.g., if the given products are algebraically independent. Moreover, there is an algorithm~\cite{KZ:08} that can compute all algebraic relations for $c$-finite sequences, i.e., it finds certain ideals from $\ProdExpr(\KK)$ whose elements evaluate to zero.  Our main focus is different. We will compute alternative products which are by construction algebraically independent among each other and which enable one to express the given products in terms of the algebraic independent products. In particular, we will make this algebraic independence statement (see property (2) of \ref{prob:ProblemRPE}) very precise by embedding the constructed \rpisiSE-extension explicitly into the ring of sequences~\cite{petkovvsek1996b} by using results from~\cite{schneider2017summation}. The derived algorithms implemented in Ocansey's Mathematica package \texttt{NestedProducts} supplement the summation package \texttt{Sigma}~\cite{Sigma}
and enable one to formulate nested sums over such general products in the setting of \rpisiSE-extensions. As a consequence, it is now possible to apply completely automatically the summation toolbox~\cite{Petkov:10,bronstein2000solutions,karr1981summation,schneider2005product,Schneider:08c,Schneider:07d,schneider2010parameterized,Sigma,Schneider:14,Schneider:15,schneider2016difference,schneider2016symbolic,schneider2017summation} for simplification of indefinite and definite nested sums defined over such products.

The outline of the article is as follows. In Section~\ref{sec:RingOfSeqsDiffRingsAndDiffFields} we define the basic notions of \rpisiSE-extensions and present the main results to embed a difference ring built by \rpisiSE-extensions into the ring of sequences. In Section~\ref{sec:mainResult} our \ref{prob:ProblemRPE} is reformulated to Theorem~\ref{thm:ProblemRMHPE} in terms of these notions, and the basic strategy of how this problem will be tackled is presented. In Section~\ref{Sec:AlgProperties} the necessary properties of the constant field are worked out that enable one to execute our proposed algorithms. Finally, in Sections~\ref{Sec:HypergeometricCase} and~\ref{Sec:mixedHypergeometricCase} the hypergeometric case and afterwards the mixed multibasic case are treated. A conclusion is given in Section~\ref{Sec:Conclusion}.

\section{Ring of sequences, difference rings and difference fields}\label{sec:RingOfSeqsDiffRingsAndDiffFields}

In this section, we discuss the algebraic setting of {\dr s} (resp. fields) and the ring of sequences as they have been elaborated in \cite{karr1981summation,schneider2016difference,schneider2017summation}. In particular, we demonstrate how sequences generated by expressions in ${\ProdExpr(\KK(n))}$ (resp. ${\ProdExpr(\KK(n, {\bs q}^{n}))}$) can be modeled in this algebraic framework. 


\subsection{Difference fields and difference rings}\label{subsec:DiffFieldAndDiffRings}
A {\dr} (resp. field) $\dField{\AA}$ is a ring (resp. field) $\AA$ together with a ring (resp. field) automorphism $\s: \AA \rightarrow \AA$. Subsequently, all rings (resp. fields) are commutative with unity; in addition they contain the set of rational numbers $\QQ$, as a subring (resp. subfield). The multiplicative group of units of a ring (resp. field) $\AA$ is denoted by $\AA^{*}$.  A ring (resp. field) is computable if all of it's operations are computable. A difference ring (resp. field) $\dField{\AA}$ is computable if $\AA$ and $\s$ are both computable. Thus, given a computable difference ring (resp. field), one can decide if $\s(c) = c$. The set of all such elements for a given difference ring (resp. field) denoted by $$\const{\dField{\AA}}=\{c\in\AA\,|\,\s(c)=c\}$$ forms a subring (resp. subfield) of $\AA$. In this article, $\const{\dField{\AA}}$ will always be a field called the constant field of $\dField{\AA}$. Note that it contains $\QQ$ as a subfield. For any difference ring (resp. field) we shall denote the constant field by $\KK$.

The construction of difference rings/fields will be accomplished by a tower of difference ring/field extensions.
A {\dr} $\dField[\tilde{\s}]{\tilde{\AA}}$ is said to be a \emph{\drE} \index{\drE} of a {\dr} $\dField{\AA}$ if $\AA$ is a subring of $\tilde{\AA}$ and  for all $a \in \AA $, $\tilde{\s}(a) = \s(a)$ (i.e., $\tilde{\s} |_{\AA} = \s$). The definition of a \emph{\dfE} \index{\dfE} is the same by only replacing the word ring with field. In the following we do not distinguish anymore between $\s$ and $\tilde{\s}$. 

In the following we will consider two types of product extensions.
Let $\dField{\AA}$ be a {\dr} (in which products have already been defined by previous extensions). Let $\alpha \in \AA^{*}$ be a unit and consider the ring of Laurent polynomials $\AA[t, t^{-1}]$ (i.e., $t$ is transcendental over $\AA$). Then there is a unique {\drE} $\dField{\AA[t, t^{-1}]}$ of $\dField{\AA}$ with $\s(t)=\alpha\,t$ and $\s(t^{-1}) = \alpha^{-1}\,t^{-1}$. The extension here is called a \emph{product-extension} (in short \pE-extension) and the generator $t$ is called a \pE-monomial. 
Suppose that $\AA$ is a field and $\AA(t)$ is a rational function field (i.e., $t$ is transcendental over $\AA$). Let $\alpha \in \AA^{*}$. Then there is a unique \dfE\, $\dField{\AA(t)}$ of $\dField{\AA}$ with $\s(t)=\alpha\,t$. We call the extension a \emph{\pE-field extension} and $t$ a \pE-monomial. In addition, we get the chain of extensions $\dField{\AA} \le \dField{\AA[t,t^{-1}]} \le \dField{\AA(t)}$.

Furthermore, we consider extensions which model algebraic objects like $\zeta^{k}$ where $\zeta$ is a $\lambda$-th root of unity for some $\lambda \in \NN$ with $\lambda > 1$. Let $\dField{\AA}$ be a {\dr} and let $\zeta\in\AA^{*}$ be a primitive $\lambda$-th root of unity, (i.e., $\zeta^{\lambda}=1$ and $\lambda$ is minimal). Take the {\drE} $\dField{\AA[y]}$ of  $\dField{\AA}$ with $y$ being transcendental over $\AA$ and $\s(y) = \zeta\,y$. Note that this construction is also unique. Consider the ideal $I:=\geno{y^{\lambda}-1}$ and the quotient ring $\EE := \AA[y]/I$. Since $I$ is closed under $\s$ and $\s^{-1}$ i.e., $I$ is a reflexive difference ideal, we have a ring automorphism $\s : \EE \to \EE$ defined by $\s(h+I) = \s(h) + I$. In other words, $\dField{\EE}$ is a {\dr}. Note that by this construction the ring $\AA$ can naturally be embedded into the ring $\EE$ by identifying $a \in \AA$ with $a + I \in \EE$, i.e., $a \mapsto a + I$. Now set $\vartheta := y + I$. Then
$\dField{\AA[\vartheta]}$ is a {\drE} of $\dField{\AA}$ subject to the relations $\vartheta^{\lambda} = 1$ and $\s(\vartheta) = \zeta\,\vartheta$. This extension is called an algebraic extension (in short \aE-extension) of order $\lambda$. The generator, $\vartheta$ is called an \aE-monomial and we define 
$\lambda=\min\{n>0\,|\,\zeta^{n}=1\}$ as its order. Note that the \aE-monomial $\vartheta$, with the relations $\vartheta^{\lambda}=1$ and $\s(\vartheta) = \zeta\,\vartheta$ models $\zeta^{k}$ with the relations $(\zeta^{k})^{\lambda} = 1$ and $\zeta^{k+1}=\zeta\,\zeta^{k}$. In addition, the ring $\AA[\vartheta]$ is not an integral domain (i.e., it has zero-divisors) since $(\vartheta - 1)\,(\vartheta^{\lambda-1}+\cdots+\vartheta+1) = 0$ but $(\vartheta - 1) \neq 0 \neq (\vartheta^{\lambda-1}+\cdots+\vartheta+1)$. 

We introduce the following notations for convenience. Let $\dField{\EE}$ be a {\drE} of $\dField{\AA}$ with $t\in\EE$.  $\AA\langle t \rangle$ denotes the ring of Laurent polynomials $\AA[t,\tfrac{1}{t}]$ (i.e., $t$ is transcendental over $\AA$) if $\dField{\AA[t,\tfrac{1}{t}]}$ is a \pE-extension of $\dField{\AA}$. Lastly, $\AA\langle t \rangle$ denotes the ring $\AA[t]$ with $t\notin\AA$ but subject to the relation $t^{\lambda}=1$ if $\dField{\AA[t]}$ is an \aE-extension of $\dField{\AA}$ of order $\lambda$. 	We say that the {\drE} $\dField{\AA\langle t \rangle}$ of $\dField{\AA}$ is an \apE-extension (and $t$ is an \apE-monomial) if it is an \aE- or a \pE-extension. Finally, we call $\dField{\AA\langle t_{1} \rangle\dots\langle t_{e} \rangle}$ a (nested) \apE-extension/\pE-extension of $\dField{\AA}$ it is built by a tower of such extensions.

Throughout this article, we will restrict ourselves to the following classes of extensions as our base field.

\begin{mexample}\label{exa:RatDF}
	Let $\KK(x)$ be a rational function field and define the field automorphism $\s:\KK(x)\to\KK(x)$ with $\s(f)=f|_{x\,\mapsto\, x+1}$. We call $\dField{\KK(x)}$ the \emph{rational {\df}} over $\KK$.
\end{mexample}

\begin{mexample}\label{exa:qMixedDF}
        Let $\KK=K(q_1,\dots,q_e)$ be a rational function field (i.e., the $q_i$ are transcendental among each other over the field $K$ and let $\dField{\KK(x)}$ be the rational {\df} over $\KK$. Consider a \pE-extension $\dField{\EE}$ of $\dField{\KK(x)}$ with $\EE=\KK(x)[t_{1},\frac{1}{t_{1}}]\dots[t_{e},\frac{1}{t_{e}}]$ and $\s(t_{i})=q_{i}\,t_{i}$ for $1\le i\le e$. Now consider the field of fractions $\FF=Q(\EE) = \KK(x)(t_{1})\dots(t_{e})$. We also use the shortcut $\bs{t}=(t_1,\dots,t_e)$ and write $\FF=\KK(x)(\bs{t})=\KK(x,\bs{t})$.      
        Then $\dField{\FF}$ is a \pE-field extension of the {\df} $\dField{\KK(x)}$. It is also called the \emph{mixed ${\bs q}$-multibasic {\df}} over $\KK$. If $\FF = \KK(t_{1})\dots(t_{e})=\KK(\bs{t})$ which is free of $x$, then $\dField{\FF}$ is called the \emph{${\bs q}$-multibasic {\df}} over $\KK$. Finally, if $e = 1$, then $\FF = \KK(t_{1})$ and $\dField{\FF}$ is called a \emph{$q$-} or a \emph{basic {\df}} over $\KK$.
\end{mexample}

\noindent Based on these ground fields we will define now our products. In the first sections we will restrict to the hypergeometric case.

\begin{mexample}\label{exa:A-ExtInDiffRingSetting}
	Let $\KK=\QQ\big(\ii,\,(-1)^{\frac{1}{6}}\big)$ and let $\dField{\KK(x)}$ be a rational {\df}. Then the product expressions 
	\begin{equation}\label{eqn:prdtsExprsOverRootsOfUnits}
	\myProduct{k}{1}{n}{(-1)^{\frac{1}{6}}}, \quad \myProduct{k}{1}{n}{(-1)^{\frac{1}{2}}}
	\end{equation}
	from $\Prod(\KK(n))$ can be represented in an \aE-extension as follows. Here $\ii$ is the complex unit which we also write as $(-1)^{\frac{1}{2}}$. Now take the \aE-extension $\dField{\KK(x)[\vartheta_{1}]}$ of $\dField{\KK(x)}$ with $\s(\vartheta_{1}) = (-1)^{\frac{1}{6}}\,\vartheta_{1}$ of order $12$. The \aE-monomial $\vartheta_{1}$ models $\rootOfUnitySeq{6}{n}$ with the shift-behavior 
	$S_{n}\rootOfUnitySeq{6}{n} =\rootOfUnitySeq{6}{n+1}= (-1)^{\frac{1}{6}}\,\rootOfUnitySeq{6}{n}.$ 
	Further, $\dField{\KK(x)[\vartheta_{1}][\vartheta_{2}]}$ is also an \aE-extension of $\dField{\KK(x)[\vartheta_{1}]}$ with $\s(\vartheta_{2})=\ii\,\vartheta_{2}$ of order $4$. The generator $\vartheta_{2}$ models $(\ii)^{n}$ with $S_{n}(\ii)^{n}=(\ii)^{n+1}=\ii\,(\ii)^{n}$.
\end{mexample}	

\begin{mexample}\label{exa:P-ExtOverConstPolysInDiffRingSetting}
	The product expressions
	\begin{equation}\label{eqn:constHyperGeoPrdtsExprs}
	\myProduct{k}{1}{n}{\sqrt{13}},\quad \myProduct{k}{1}{n}{7}, \quad \myProduct{k}{1}{n}{169}
	\end{equation}
	from $\Prod(\KK(n))$ with $\KK=\QQ(\sqrt{13})$ are represented in a \pE-extension of the rational {\df} $\dField{\KK(x)}$ with $\s(x)=x+1$ as follows. 
	\begin{enumerate} 
		\item Consider the \pE-extension $\dField{\AA_{1}}$ of $\dField{\KK(x)}$ with $\AA_{1}=\KK(x)[y_{1},\tfrac{1}{y_{1}}]$,  $\s(y_{1}) = (\sqrt{13})\,y_{1}$ and $\s(\tfrac{1}{y_{1}}) = \frac{1}{\sqrt{13}}\,\tfrac{1}{y_{1}}$. In this ring, we can model polynomial expressions in $\algSeq{13}{n}$ and $\algSeq{13}{-n}$ with the shift behavior $S_{n}\algSeq{13}{n} = \sqrt{13}\,\algSeq{13}{n}$ and $S_{n}\frac{1}{(\sqrt{13})^{n}} = \frac{1}{\sqrt{13}}\,\frac{1}{(\sqrt{13})^{n}}$. Here, $\algSeq{13}{n}$ and $\frac{1}{(\sqrt{13})^{n}}$ are rephrased by $y_{1}$ and $\frac{1}{y_{1}}$, respectively.
		\item Constructing the \pE-extension $\dField{\AA_{2}}$ of $\dField{\AA_{1}}$ with $\AA_{2}=\AA_{1}[y_{2},\frac{1}{y_{2}}]$, $\s(y_{2}) = 7\,y_{2}$ and $\s(\frac{1}{y_{2}}) = \frac{1}{7}\,\frac{1}{y_{2}}$, we are able to model polynomial expressions in $\intSeq{7}{n}$ and $\intSeq{7}{-n}$ with the shift behavior $S_{n}\intSeq{7}{n} = 7\,\intSeq{7}{n}$ and $S_{n}\frac{1}{\intSeq{7}{n}} = \frac{1}{7}\,\frac{1}{\intSeq{7}{n}}$ by rephrasing ${\intSeq{7}{n}}$ and $\frac{1}{\intSeq{7}{n}}$ with $y_{2}$ and $\frac{1}{y_{2}}$, respectively.
		\item Introducing the \pE-extension $\dField{\AA_{3}}$ of $\dField{\AA_{2}}$ with $\AA_{3}=\AA_{2}[y_{3},\tfrac{1}{y_{3}}]$, $\s(y_{3}) = 169\,y_{3}$ and $\s(\frac{1}{y_{3}}) = \frac{1}{169}\,\frac{1}{y_{3}}$, one can model polynomial expressions in $\intSeq{\left(169\right)}{n}$ and $\intSeq{\left(169\right)}{-n}$ with the shift behavior $S_{n}\intSeq{(169)}{n} =169\,\intSeq{(169)}{n}$ and $S_{n}\frac{1}{\intSeq{(169)}{n}}= \frac{1}{169}\,\frac{1}{\intSeq{\left(169\right)}{n}}$ by rephrasing $\intSeq{\left(169\right)}{n}$ and $\intSeq{\left(169\right)}{-n}$ by $y_{3}$ and $\frac{1}{y_{3}}$, respectively. 
	\end{enumerate}
\end{mexample}

\begin{mexample}\label{exa:P-ExtOverNonConstPolysInDiffRingSetting}
	The hypergeometric product expressions
	\begin{equation}\label{eqn:nonConstHyperGeoPrdtsExprs}
	P_{1}(n) = \myProduct{k}{1}{n}{k}, \quad P_{2}(n) = \myProduct{k}{1}{n}{\big(k+2\big)}
	\end{equation}
	from $\Prod(\QQ(n))$ can be represented in a \pE-extension defined over the rational {\df} $\dField{\QQ(x)}$ in the following way. Take the \pE-extension $\dField{\QQ(x)[z_{1},\tfrac{1}{z_{1}}]}$ of $\dField{\QQ(x)}$ with $\s(z_{1})=(x+1)\,z_{1}$ and $\s(\tfrac{1}{z_{1}})=\frac{1}{(x+1)}\,\tfrac{1}{z_{1}}$. In this extension, one can model polynomial expressions in the product expression $P_{1}(n)$ with the shift behavior $S_{n}P_{1}(n) =(n+1)\,P_{1}(n)$ and $S_{n} \frac{1}{P_{1}(n)} = \frac{1}{(n+1)}\,\frac{1}{P_{1}(n)}$ by rephrasing $P_{1}(n)$ and $\frac{1}{P_{1}(n)}$ by $z_{1}$ and $\frac{1}{z_{1}}$. Finally, taking the \pE-extension $\dField{\QQ(x)[z_{1},\tfrac{1}{z_{1}}][z_{2},\tfrac{1}{z_{2}}]}$ of $\dField{\QQ(x)[z_{1},\tfrac{1}{z_{1}}]}$ with  $\s(z_{2})=(x+3)\,z_{2}$ and $\s(\tfrac{1}{z_{2}})=\frac{1}{(x+3)}\,\frac{1}{z_{2}}$, we can represent polynomial expressions in the product expression $P_{2}(n)$ with the shift-behavior  $S_{n}P_{2}(n)=(n+3)\,P_{2}(n)$ and $S_{n} \frac{1}{P_{2}(n)}=\frac{1}{(n+3)}\,\frac{1}{P_{2}(n)}$ by rephrasing $P_{2}(n)$ and $\frac{1}{P_{2}(n)}$ by $z_{2}$ and $\frac{1}{z_{2}}$, respectively. 
\end{mexample}

In order to solve \rpe{P} within the next sections, we rely on a refined construction of $AP$-extensions. More precisely, we will represent our products in \rpiE-extensions.

\begin{mdefinition}\label{defn:rpisiSE}
An \apE-extension (\aE- or \pE-extension) $\dField{\AA\langle t_{1} \rangle\dots\langle t_{e} \rangle}$ of $\dField{\AA}$ is called an \rpiE-extension (\rE- or \piE-extension) if $\const\dField{\AA\langle t_{1} \rangle\dots\langle t_{e} \rangle}=\const\dField{\AA}$. Depending on the type of extension, we call $t_i$ an \rE-/\piE-/\rpiE-monomial. Similarly, let $\AA$ be a field. Then we call a tower of \pE-extensions $\dField{\AA(t_1)\dots(t_e)}$ of $\dField{\AA}$ a \piE-extension if $\const\dField{\AA(t_1)\dots(t_e)}=\const\dField{\AA}$.
\end{mdefinition}

\noindent We concentrate mainly on product extensions and skip the sum part that has been mentioned in the introduction. Still, we need to handle the very special case of the rational difference field $\dField{\KK(x)}$ with $\sigma(x)=x+1$ or the mixed ${\bs q}$-multibasic version. Thus it will be convenient to introduce also the field version of \sigmaE-extensions~\cite{karr1981summation,schneider2001symbolic}.

\begin{mdefinition}
Let $\dField{\FF(t)}$ be a difference field extension of $\dField{\FF}$ with $t$ transcendental over $\FF$ and $\sigma(t)=t+\beta$ with $\beta\in\FF$. This extension is called a \sigmaE-extension\footnote{In~\cite{karr1981summation} also the more general case $\sigma(t)=\alpha\,t+\beta$ with $\alpha\in\FF^*$ is treated. In the following we restrict to the simpler case $\alpha=1$ which is less technical but general enough to cover all problems that we observed in practical problem solving.} if $\const\dField{\FF(t)}=\const\dField{\FF}$. In this case $t$ is also called a \sigmaE-monomial. $\dField{\FF(t)}$ is called a \pisiSE-extension of $\dField{\FF}$ if it is either a \piE- or a \sigmaE-extension. I.e., $t$ is transcendental over $\FF$, $\const\dField{\FF(t)}=\const\dField{\FF}$ and $t$ is a \piE-monomial ($\sigma(t)=\alpha\,t$ for some $\alpha\in\FF^*$) or $t$ is a \sigmaE-monomial ($\sigma(t)=t+\beta$ for some $\beta\in\FF$). $\dField{\FF(t_1)\dots(t_e)}$ is a \pisiSE-extension of $\dField{\FF}$ if it is a tower of \pisiSE-extensions.
\end{mdefinition}

\noindent Note that there exist criteria which can assist in the task to check if during the construction the constants remain unchanged. The reader should see \cite[Proof $3.16$, $3.22$ and  $3.9$]{schneider2016difference} for the proofs. For the field version, see also \cite{karr1981summation}.

\begin{mtheorem}\label{thm:rpiE-Criterion}
	Let $\dField{\AA}$ be a \dr. Then the following statements hold.
	\begin{enumerate} 
		\item Let $\dField{\AA[t,\tfrac{1}{t}]}$ be a \pE-extension of $\dField{\AA}$ with $\s(t)=\alpha\,t$ where $\alpha\in\AA^{*}$. Then this is a \piE-extension (i.e., $\const{\dField{\AA[t,\tfrac{1}{t}]}} = \const{\dField{\AA}}$) iff there are no $g\in\AA\sm\zs$ and $v\in\ZZ\sm\zs$ with $\s(g)=\alpha^{v}\,g$. 
		\item Let $\dField{\AA[\vartheta]}$ be an \aE-extension of $\dField{\AA}$ of order $\lambda>1$ with $\s(\vartheta)=\zeta\,\vartheta$ where $\zeta\in\AA^{*}$. Then this is an \rE-extension (i.e., $\const{\dField{\AA[\vartheta]}} = \const{\dField{\AA}}$) iff there are no $g\in\AA\sm\zs$ and $v\in\{1,\dots,\lambda-1\}$ with $\s(g)=\zeta^{v}\,g$. If it is an \rE-extension, $\zeta$ is a primitive $\lambda$th root of unity.
		\item Let $\AA$ be a field and let $\dField{\AA(t)}$ be a difference field extension of $\dField{\FF}$ with $t$ transcendental over $\FF$ and $\sigma(t)=t+\beta$ with $\beta\in\FF$. Then this is a \sigmaE-extension (i.e., $\const\dField{\FF(t)}=\const\dField{\FF}$) if there is no $g\in\FF$ with $\sigma(g)=g+\beta$.
	\end{enumerate} 
\end{mtheorem}

Concerning our base case difference fields (see Examples~\ref{exa:RatDF} and~\ref{exa:qMixedDF}) the following remarks are relevant. The rational difference field $\dField{\KK(x)}$ is a \sigmaE-extension of $\dField{\KK}$ by part~(3) of Theorem~\ref{thm:rpiE-Criterion} and using the fact that there is no $g\in\KK$ with $\sigma(g)=g+1$. Thus 
$\const\dField{\KK(x)}=\const\dField{\KK}=\KK$.  Furthermore, the mixed ${\bs q}$-multibasic difference field $\dField{\FF}$ with $\FF=\KK(x)(t_{1})\dots(t_{e})$ is a \piE-field extension of $\dField{\KK(x)}$. See Corollary~\ref{cor:MixedMultibasicDiffFieldAsPiExt} below. As a consequence, we have that $\const\dField{\FF}=\dField{\KK(x)}=\KK$. 

\noindent We give further examples and non-examples of \rpiE-extensions.

\begin{mexample}\label{exa:rpisiSE}
	\begin{enumerate} 
		\item In Example~\ref{exa:A-ExtInDiffRingSetting}, the \aE-extension $\dField{\KK(x)[\vartheta_{1}]}$ is an \rE-extension of $\dField{\KK(x)}$ of order $12$ since there are no $g\in\KK(x)^{*}$ and $v\in\{1,\dots,11\}$ with $\s(g)=\big((-1)^{\sfrac{1}{6}}\big)^{v}\,g$. However, the \aE-extension $\dField{\KK(x)[\vartheta_{1}][\vartheta_{2}]}$ is not an \rE-extension of $\dField{\KK(x)[\vartheta_{1}]}$ since with $g = \vartheta_{1}^{3}\in\KK(x)[\vartheta_{1}]$ and $v=1$, we have $\s(g)=\ii\,g$. In particular, we get $c = \vartheta_{1}^{3}\,\vartheta_{2}\in \const\dField{\KK(x)[\vartheta_{1}][\vartheta_{2}]}\sm\const\dField{\KK(x)[\vartheta_{1}]}$.
		\item The \pE-extension $\dField{\AA_{1}}$ of $\dField{\KK(x)}$ in Example~\ref{exa:P-ExtOverConstPolysInDiffRingSetting}(1) with $\s(y_{1})=\sqrt{13}\,y_{1}$ is a \piE-extension of $\dField{\KK(x)}$ as there are no $g\in\KK(x)^{*}$ and $v\in\ZZ\sm\zs$ with $\s(g)=\big(\sqrt{13}\big)^{v}\,g$. Similarly, the \pE-extension $\dField{\AA_{2}}$ in Example~\ref{exa:P-ExtOverConstPolysInDiffRingSetting}(2) with $\s(y_{2})=7\,y_{2}$ is also a \piE-extension of $\dField{\AA_{1}}$ since there does not exist a $g\in\AA_{1}\sm\zs$ and a $v\in\ZZ\sm\zs$ with $\s(g)=\intSeq{7}{v}\,g$. However, the \pE-extension $\dField{\AA_{3}}$ in part (3) of Example~\ref{exa:P-ExtOverConstPolysInDiffRingSetting} is not a \piE-extension of $\dField{\AA_{2}}$ since with $g=y_{1}^{4}\in\AA_{2}$ we have $\s(g)=(169)\,g$. In particular, $w=y_{1}^{-4}\,y_{3}\in\const\dField{\KK(x)[y_{1},\tfrac{1}{y_{1}}][y_{2},\tfrac{1}{y_{2}}][y_{3},\tfrac{1}{y_{3}}]}\sm\const\dField{\KK(x)[y_{1},\tfrac{1}{y_{1}}][y_{2},\tfrac{1}{y_{2}}]}$.
		\item Finally, in Example~\ref{exa:P-ExtOverNonConstPolysInDiffRingSetting} the \pE-extension $\dField{\QQ(x)[z_{1},\tfrac{1}{z_{1}}]}$ is a \piE-extension of $\dField{\QQ(x)}$ with $\s(z_{1})=(x+1)\,z_{1}$ since there are no $g\in\QQ(x)^{*}$ and $v\in\ZZ\sm\zs$ with $\s(g)=(x+1)^{v}\,g$. But the \pE-extension $\dField{\QQ(x)[z_{1},\tfrac{1}{z_{1}}][z_{2},\tfrac{1}{z_{2}}]}$ with $\s(z_{2})=(x+3)\,z_{2}$ is not a \piE-extension of $\dField{\QQ(x)[z_{1},\tfrac{1}{z_{1}}]}$ since with $g=(x+2)\,(x+1)\,z_{1}$ and $v=1$ we have $\s(g)=(x+3)\,g$. In particular, we get $c=\frac{g}{z_{2}}\in\const\dField{\QQ(x)\geno{z_{1}}\geno{z_{2}}}\sm\const\dField{\QQ(x)\geno{z_{1}}}$.
	\end{enumerate}
\end{mexample}

\noindent We remark that in~\cite{karr1981summation} and~\cite{schneider2016difference} algorithms have been developed that can carry out these checks if the already designed difference ring is built by properly chosen \rpiE-extensions. In this article we are more ambitious. We will construct \apE-extensions carefully such that they are automatically \rpiE-extensions and such that the products under consideration can be rephrased within these extensions straightforwardly. In this regard, we will utilize the following lemma.

\begin{lemma}\label{Lemma:RExtOverPiSi}
Let $\dField{\FF}$ be a \pisiSE-extension of $\dField{\KK}$ with $\const\dField{\KK}=\KK$. Then the \aE-extension $\dField{\FF[\vartheta]}$ of $\dField{\FF}$ with order $\lambda>1$ is an \rE-extension. 
\end{lemma}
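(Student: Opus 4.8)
The plan is to apply part~(2) of Theorem~\ref{thm:rpiE-Criterion}. By that criterion, the \aE-extension $\dField{\FF[\vartheta]}$ of order $\lambda>1$ with $\s(\vartheta)=\zeta\,\vartheta$ (where $\zeta\in\FF^*$ is the primitive $\lambda$th root of unity giving rise to the extension) is an \rE-extension if and only if there are no $g\in\FF\sm\zs$ and $v\in\{1,\dots,\lambda-1\}$ with $\s(g)=\zeta^{v}\,g$. So the whole task reduces to ruling out such solutions $g$. The key observation is that $\zeta$, being a root of unity, lies in $\KK=\const\dField{\KK}$: indeed $\zeta^\lambda=1$ forces $\zeta$ to be a constant, hence $\zeta\in\const\dField{\FF}=\KK$ (using the hypothesis that $\dField{\FF}$ is a \pisiSE-extension of $\dField{\KK}$, so its constant field is still $\KK$). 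Therefore every $\zeta^v$ with $1\le v\le\lambda-1$ is a nonzero constant, and the equation to be excluded has the shape $\s(g)=c\,g$ with $c\in\KK^*$ a root of unity different from $1$.

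First I would reduce to the case that $g$ is a ``monomial'' in the top \pisiSE-generator. Write $\FF=\KK(x_1)\dots(x_r)$ as a tower of \pisiSE-extensions, and argue by induction on the number of generators $r$. For the induction step, suppose $g\in\FF\sm\zs$ satisfies $\s(g)=\zeta^{v}\,g$ with $1\le v\le\lambda-1$, and let $t=x_r$ be the topmost generator over the subfield $\mathbb{G}$ with $\FF=\mathbb{G}(t)$. If $t$ is a \sigmaE-monomial ($\s(t)=t+\beta$, $\beta\in\mathbb{G}$): comparing leading coefficients (in $t$) on both sides of $\s(g)=\zeta^v g$ — using that $\s$ fixes the degree in $t$ since $\s(t)=t+\beta$ — one sees the leading coefficient $\ell\in\mathbb{G}$ satisfies $\s(\ell)=\zeta^v\ell$, reducing to $\mathbb{G}$; and the numerator/denominator argument shows $g$ must actually lie in $\mathbb{G}[t]$, in fact be a constant multiple of a power of... but more simply one can use that $g$ has a pole only if $\s(g)/g$ does, and $\zeta^v$ has none, so $g\in\mathbb{G}[t]$, and then the leading-coefficient comparison closes the step. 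If $t$ is a \piE-monomial ($\s(t)=\alpha\,t$, $\alpha\in\mathbb{G}^*$): write $g=\sum_{j} g_j t^j$ (a finite Laurent-type sum after clearing denominators, or directly using that $\s(g)/g=\zeta^v$ has no zeros or poles, so $g$ is a Laurent monomial $g=h\,t^j$ with $h\in\mathbb{G}^*$); substituting gives $\s(h)\,\alpha^j\,t^j=\zeta^v\,h\,t^j$, i.e.\ $\s(h)=(\zeta^v\alpha^{-j})\,h$ with $h\in\mathbb{G}$. If $j=0$ this reduces directly to $\mathbb{G}$. If $j\neq0$, then $c:=h\,t^j\in\FF$ would satisfy $\s(c)=\alpha^j\s(h)\,t^j=\zeta^v\,h\,t^j=\zeta^v c$... which is just $g$ again; instead I use that $\s(h)=(\zeta^v\alpha^{-j})h$ exhibits $\zeta^{-v}\,h\,t^j$... let me rather argue: $t$ being a \piE-monomial over $\dField{\mathbb{G}}$ means by Theorem~\ref{thm:rpiE-Criterion}(1) there is no $\bar g\in\mathbb{G}\sm\zs$ and $\bar v\in\ZZ\sm\zs$ with $\s(\bar g)=\alpha^{\bar v}\bar g$; but from $\s(h)=\zeta^v\alpha^{-j}h$ together with $\s(\zeta)=\zeta$ one gets, raising to the $\lambda$th power, $\s(h^\lambda)=\alpha^{-j\lambda}h^\lambda$ with $h^\lambda\in\mathbb{G}\sm\zs$ and $-j\lambda\ne 0$, contradicting that $t$ is a \piE-monomial. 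Hence $j=0$ and we are reduced to $\mathbb{G}$.

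Iterating, we arrive at the base case $g\in\KK\sm\zs$ with $\s(g)=\zeta^v\,g$; since $\s|_\KK=\mathrm{id}$ this forces $\zeta^v=1$, contradicting $1\le v\le\lambda-1$ (as $\zeta$ is a primitive $\lambda$th root of unity). Thus no such $g$ exists, and by Theorem~\ref{thm:rpiE-Criterion}(2), $\dField{\FF[\vartheta]}$ is an \rE-extension.

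I expect the main obstacle to be the bookkeeping in the induction step for the \piE-case: one must be careful that ``$\s(g)/g$ has no zeros or poles, hence $g$ is a Laurent monomial'' is justified — this uses that in a \pE- or \sigmaE-tower the valuation at each irreducible (resp.\ the degree) behaves predictably under $\s$, so that the divisor of $\s(g)$ is a $\s$-translate of the divisor of $g$, and equating it with the divisor of $\zeta^v g=$ (divisor of $g$) forces the divisor to be $\s$-invariant and supported away from the \piE-generator, i.e.\ a power of $t$ with unit coefficient. The trick of raising $h$ to the $\lambda$th power to kill $\zeta$ and invoke Theorem~\ref{thm:rpiE-Criterion}(1) is the technical heart that makes the \piE-step go through; the \sigmaE-step is comparatively routine. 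Everything else is the standard reduction used repeatedly in the difference-ring literature (cf.\ \cite{karr1981summation,schneider2016difference}).
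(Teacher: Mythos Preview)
Your proof is correct, but it takes a more hands-on route than the paper's. The paper's argument is two lines: it invokes \cite[Lemma~3.5]{karr1985theory} to obtain $\const\dField[\sigma^k]{\FF}=\const\dField{\FF}=\KK$ for all $k\ge 1$, and then cites \cite[Proposition~2.20]{schneider2017summation}, which packages precisely the implication ``if all $\sigma^k$-constants equal $\KK$, then any \aE-extension is an \rE-extension.'' The underlying mechanism of that proposition is: if $\sigma(g)=\zeta^v g$ then $\sigma^\lambda(g)=g$, so $g$ is a $\sigma^\lambda$-constant, hence $g\in\KK$, whence $\zeta^v=1$.

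Your induction on the tower is essentially a direct re-derivation of this mechanism, peeling off one \pisiSE-generator at a time rather than appealing to Karr's black-box statement about $\sigma^k$-constants. Your $\lambda$-th-power trick in the \piE-step is the same idea localized: from $\sigma(h)=\zeta^v\alpha^{-j}h$ you pass to $\sigma(h^\lambda)=\alpha^{-j\lambda}h^\lambda$, which is the ``$\sigma^\lambda$-constant'' argument in disguise. So the two approaches are not genuinely different in spirit; yours is more self-contained, while the paper's is shorter by outsourcing to known structural results. Two small remarks on your write-up: first, the justification that $\zeta\in\KK$ (``$\zeta^\lambda=1$ forces $\zeta$ to be a constant'') is not a proof as stated; the clean reason is that $\zeta$ is algebraic over $\QQ$ and $\FF$ is purely transcendental over $\KK$, hence $\zeta\in\KK$. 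Second, your \sigmaE-step and the Laurent-monomial reduction in the \piE-step are exactly the content of Proposition~\ref{pro:FindingGMultExt}; citing that would remove most of the hand-waving and streamline the argument considerably.
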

\begin{proof}
By \cite[Lemma $3.5$]{karr1985theory} we have $\const{\dField[\s^{k}]{\FF}}=\const{\dField{\FF}}$ for all $k\in\NN\setminus\{0\}$. Thus with \cite[Proposition $2.20$]{schneider2017summation}, $\dField{\FF[\vartheta]}$ is an \rE-extension of $\dField{\FF}$.	 
\end{proof}

\subsection{Ring of sequences}\label{subsec:RingOfSeqs}

We will elaborate how \rpiE-extensions can be embedded into the difference ring of sequences~\cite{schneider2017summation}; compare also \cite{van2006galois}. Precisely this feature will enable us to handle condition (2) of~\ref{prob:ProblemRPE}.

Let $\KK$ be a field containing $\QQ$ as a subfield and let $\NN$ be the set of non-negative integers. We denote by $\ringOfSeqs$ the set of all sequences $\seqA{a}{}{n} = \geno{a(0),\,a(1),\,a(2),\,\dots\,}$ whose terms are in $\KK$. With component-wise addition and multiplication, $\ringOfSeqs$ forms a commutative ring. The field $\KK$ can be naturally embedded into $\ringOfSeqs$ as a subring, by identifying $c \in \KK$ with the constant sequence $\geno{c,c,c,\dots\,} \in \ringOfSeqs$. Following the construction in \cite[Sec. $8.2$]{petkovvsek1996b}, we turn the shift operator
\[
S : \begin{cases}
\ringOfSeqs & \rightarrow \ \ringOfSeqs \\
\geno{a(0),\,a(1),\,a(2),\,\dots\,} & \mapsto \ \geno{a(1),\,a(2),\,a(3),\,\dots\,}
\end{cases}
\]
into a ring automorphism by introducing an equivalence relation $\sim$ on sequences in $\ringOfSeqs$. Two sequences $\seqA{a}{}{n}$ and $\seqA{b}{}{n}$ are said to be equivalent if and only if there exists a natural number $\delta$ such that $a(n) = b(n)$ for all $n \ge \delta$. The set of equivalence classes form a ring again with component-wise addition and multiplication which we will denote by $\mathcal{S}(\KK)$. For simplicity, we denote the elements of $\ringOfEquivSeqs$ (also called germs) by the usual sequence notation as above. Now it is obvious that $S:\ringOfEquivSeqs \to \ringOfEquivSeqs$ is a ring automorphism. Therefore, $\dField[S]{\ringOfEquivSeqs}$ forms a {\dr} called the  (\emph{difference}) \emph{ring of sequences} (\emph{over} $\KK$). 
\begin{mexample}\label{exa:APS-ExtInSeqSetting}
	The hypergeom.\ products in~\eqref{eqn:prdtsExprsOverRootsOfUnits}, \eqref{eqn:constHyperGeoPrdtsExprs} and \eqref{eqn:nonConstHyperGeoPrdtsExprs} yield the sequences 	{\fontsize{9}{0}\selectfont 		
	$\mathcal{M}=\big\{ \funcSeqA{\rootOfUnitySeq{6}{n}}{n},\,\,
	\funcSeqA{\ii^{n}}{n},\,\,  \funcSeqA{\algSeq{13}{n}}{n},\, \funcSeqA{\intSeq{7}{n}}{n},\,\funcSeqA{\intSeq{\left(169\right)}{n}}{n},\,\, \langle{P_{1}(n)}\rangle_{n\ge 0},\,\, \langle{P_{2}(n)}\rangle_{n\ge 0}\big\}$}
	with $S\funcSeqA{a_{n}}{n} := \funcSeqA{S_{n}\,a_{n}}{n}$ for $a_{n}\in\mathcal{M}$.
\end{mexample}

\begin{mdefinition}\label{defn:DiffRingHomOrMon}
	Let $\dField{\AA}$ and $\dField[\sigma^{\prime}]{\AA^{\prime}}$ be two \dr s. We say that $\tau: \AA \to \AA^{\prime}$ is a \emph{{\dr} homomorphism} between the \dr s $\dField{\AA}$ and $\dField[\sigma^{\prime}]{\AA^{\prime}}$ if $\tau$ is a ring homomorphism and for all $f \in \AA$, $\tau(\s(f)) = \s^{\prime}(\tau(f))$. If $\tau$ is injective then it is called a \emph{{\dr} monomorphism} or a \emph{{\dr} embedding}. In this case $\dField{\tau(\AA)}$ is a sub-difference ring of $\dField[\sigma^{\prime}]{\AA^{\prime}}$ where $\dField{\AA}$ and $\dField{\tau(\AA)}$ are the same up to renaming with respect to $\tau$. If $\tau$ is a bijection, then it is a \emph{{\dr} isomorphism} and we say $\dField{\AA}$ and $\dField[\sigma^{\prime}]{\AA^{\prime}}$ are isomorphic.\\
	Let $\dField{\AA}$ be a {\dr} with constant field $\KK$. A {\dr} homomorphism (resp. monomorphism) $\tau:\AA \to \ringOfEquivSeqs$ is called $\KK$-\emph{homomorphism} (resp. -\emph{monomorphism}) if for all $c \in \KK$ we have that $\tau(c) = {\bs c} := \geno{c,c,c,\dots\,}$.
\end{mdefinition}

The following lemma is the key tool to embed {\dr s} constructed by \rpiE-extensions into the ring of sequences.

\begin{mlemma}\label{lem:injectiveHom}
	Let $\dField{\AA}$ be a {\dr} with constant field $\KK$. Then:
	\begin{enumerate} 
		\item The map $\tau:\AA\to\ringOfEquivSeqs$ is a $\KK$-homomorphism if and only if there is a map $\ev:\AA\times\NN\to\KK$ with $\tau(f) = \geno{\ev(f,0),\,\ev(f,1),\dots}$	for all $f\in\AA$ satisfying the following properties:
		\begin{enumerate} 
			\item for all $c \in \KK$, there is a natural number $\delta \geq 0$ such that
			\begin{align*}
			\forall\,n\geq\delta\,:\,\ev(c,n)=c;
			\end{align*}
			\item for all $f,g\in\AA$ there is a natural number $\delta \geq 0$ such that 
			\begin{align*}
			\forall\,n\geq\delta\,&:\,\ev(f\,g,n)=\ev(f,n)\,\ev(g,n),
			\\
			\forall\,n\geq\delta\,&:\,\ev(f+g,n)=\ev(f,n)+\ev(g,n);
			\end{align*}
			\item for all $f\in\AA$ and $i\in\ZZ$, there is a natural number $\delta \geq 0$ such that
			\begin{align*}
			\forall\,n\geq\delta\,:\,\ev(\s^{i}(f),n)=\ev(f,n+i).
			\end{align*}
		\end{enumerate}
		\item Let $\dField{\AA\langle t \rangle}$ be an \apE-extension of $\dField{\AA}$ with $\s(t)=\alpha\,t$ ($\alpha\in\AA^*$) and suppose that $\tau:\AA\to\ringOfEquivSeqs$ as given in part~(1) is a $\KK$-homomorphism.\\	
		Take some big enough $\delta\in\NN$ such that $\ev(\alpha,n)\neq0$ for all $n\ge\delta$. Further, take $u\in\KK^{*}$; if $t^{\lambda}=1$ for some $\lambda>1$, we further assume that $u^{\lambda}=1$ holds. Consider the map $\tau^{\prime}:\AA\langle t \rangle \to \ringOfEquivSeqs$ with $\tau(f)=\geno{\ev(f,n)}_{n\ge0}$ where the evaluation function $\ev^{\prime}:\AA\langle t \rangle\times\NN\to\KK$ is defined by 
		\[
		\ev^{\prime}(\sum_{i}h_{i}\,t^{i}, n)=\sum_{i}\ev(h_{i},n)\,\ev^{\prime}(t,n)^{i}
		\]
		with 
		\[
		\ev^{\prime}(t,n)=u\prod_{k=\delta}^{n}\ev(\alpha,k-1).
		\]
		Then $\tau$ is a $\KK$-homomorphism.
		\item If $\dField{\AA}$ is a field and $\dField{\EE}$ is a (nested) \rpiE-extension of $\dField{\AA}$, then any $\KK$-homomorphism $\tau: \EE\to \ringOfEquivSeqs$ is injective.
	\end{enumerate}
\end{mlemma}

\begin{proof} \
	\begin{enumerate} 
		\item The proof follows by \cite[Lemma $2.5.1$]{schneider2001symbolic}.
		\item The proof follows by \cite[Lemma $5.4$(1)]{schneider2017summation}.
		\item By \cite[Theorem $3.3$]{schneider2017summation}, $\dField{\EE}$ is simple that is, any ideal of $\EE$ which is closed under $\s$ is either $\EE$ or $\zs$ . Thus by \cite[Lemma $5.8$]{schneider2017summation} $\tau^{\prime}$ is injective.\qed 
	\end{enumerate}
\end{proof}

\noindent In this article, we will apply part $(2)$ of Lemma~\ref{lem:injectiveHom} iteratively. As base case, we will use the following {\df s} that can be embedded into the ring of sequences. 

\begin{mexample}\label{exa:DiffFieldOfRatSeqs}
	Take the rational {\df} $\dField{\KK(x)}$ over $\KK$ defined in Example~\ref{exa:RatDF} and consider the map
	$\tau : \KK(x) \to \ringOfEquivSeqs$ defined by $\tau(\frac{a}{b})=\langle\ev\big(\frac{a}{b}, n \big)\big)\rangle_{n \geq 0}$ with $a,b\in\KK[x]$ and $b\neq0$  where 
	\begin{equation}\label{Equ:EvRatDefinition}
	\ev\big(\tfrac{a}{b}, n\big) := 
			\begin{cases}
			0, & \text{if } b(n) = 0 \\
			\frac{a(n)}{b(n)}, & \text{if } b(n) \neq 0.
			\end{cases}\end{equation} 
	Then by Lemma~\ref{lem:injectiveHom}(1) it follows that $\tau:\KK(x)\to \ringOfEquivSeqs$ is a $\KK$-homomorphism. 
	We can define the function:
	\begin{equation}\label{eqn:hyperGeoShiftBoundedFxns}
	Z(p) = \max\big(\{k \in \NN\,|\,p(k)=0\}\big)+1 \ \text{ for any } p \in \KK[x]
	\end{equation}
	with $\max(\varnothing)=-1$. Now let $f=\tfrac{a}{b}\in\KK(x)$ where $a,b\in\KK[x]$, $b\neq0$. Since $a(x)$, $b(x)$ have only finitely many roots, it follows that $\tau(\frac{a}{b}) = {\bs 0}$ if and only if $\frac{a}{b}=0$. Hence $\ker(\tau)=\zs$ and thus $\tau$ is injective. Summarizing, we have constructed a $\KK$-embedding, $\tau:\KK(x)\to\ringOfEquivSeqs$ where the {\df} $\dField{\KK(x)}$ is identified in the {\dr} of $\KK$-sequences $\dField[S]{\ringOfEquivSeqs}$ as the {sub-\dr} of $\KK$-sequences $\dField[S]{\tau(\KK(x))}$. We call $\dField[S]{\tau(\KK(x))}$ the \emph{\df\, of rational sequences}\index{\df\, of rational sequences}.
\end{mexample}

\begin{mexample}\label{exa:DiffFieldOfqMixedSeqs}
	Take the mixed ${\bs q}$-multibasic {\df} $\dField{\FF}$ with $\FF = \KK(x,{\bs t})$ defined in Example~\ref{exa:qMixedDF}. Then, $\tau:\FF \to \ringOfEquivSeqs$ defined by $\tau(\frac{a}{b})=\langle\ev\big(\frac{a}{b}, n \big)\big\rangle_{n \geq 0}$ with $a,b\in\KK[x,{\bs t}]$ and $b\neq0$ where 
			\begin{equation}\label{Equ:EvqMixed}
			\ev\big(\tfrac{a}{b}, n\big) := 
			\begin{cases}
			0, & \text{if } b(n, {\bs q}^{n}) = 0 \\
			\frac{a(n, \, {\bs q}^{n})}{b(n, \, {\bs q}^{n})}, & \text{if } b(n, \, {\bs q}^{n}) \neq 0
			\end{cases} 
			\end{equation}	
	is a $\KK$-homomorphism. We define the function
	\begin{equation} \label{eqn:mixedHyperGeoShiftBoundedFxns}
	Z(p) = \max\big(\{k\in\NN\,|\,p(k,{\bs q}^{k})= 0\}\big)+1 \ \text{ for any } p \in \KK[x,{\bs t}]
	\end{equation}
	with $\max(\varnothing)=-1$. We will use the fact that this set of zeros is finite if $p\neq0$ and that $Z(p)$ can be computed; see~\cite[Sec. 3.2]{bauer1999multibasic}.  For any rational function, $f=\frac{g}{h} \in \FF\setminus\{0\}$ with $g,h\in\KK[x,{\bs t}]$, let $\delta = \max(\{Z(g), Z(h)\})\in\NN$. Then $f(n)\neq 0$ for all $n\ge \delta$ and thus $\tau(f)\neq{\bs 0}$. Hence $\ker(\tau)=\zs$ and thus $\tau$ is injective. In summary, we have constructed a $\KK$-embedding $\tau:\FF\to\ringOfEquivSeqs$ where the {\df} $\dField{\FF}$ is identified in $\dField[S]{\ringOfEquivSeqs}$ as $\dField[S]{\tau(\FF)}$ which we call the \emph{{\df} of mixed ${\bs q}$-multibasic rational sequences}\index{{\df} of mixed ${\bs q}$-multibasic rational sequences}. 
\end{mexample}

\section{Main result}\label{sec:mainResult}
We shall solve \ref{prob:ProblemRPE} algorithmically by proving the following main result in Theorem~\ref{thm:ProblemRMHPE}. Here the specialization $e=0$ covers the hypergeometric case. Similarly, taking ${\bs q}$-multibasic hypergeometric products in~\eqref{eqn:MixedHypergeoPdts} and suppressing $x$ yield the multibasic case. Further, setting $e=1$ provides the $q$-hypergeometric case.

\begin{mtheorem}\label{thm:ProblemRMHPE}
Let $\KK=K(\kappa_1,\dots,\kappa_u)(q_1,\dots,q_e)$ be a rational function field over a field $K$ and consider the mixed ${\bs q}$-multibasic hypergeometric products
		\begin{equation}\label{eqn:MixedHypergeoPdts}
		P_{1}(n) = \myProduct{k}{\ell_{1}}{n}{h_{1}(k, \bs{q}^{k})},\,\,\,\dots,\,\,\,P_{m}(n) = \myProduct{k}{\ell_{m}}{n}{h_{m}(k,\bs{q}^{k})}\ \in\,\Prod(\KK(n,{\bs q}^{n}))
		\end{equation}
		with $\ell_{i} \in \NN$ and $h_{i}(x,{\bs t})\in\KK(x,{\bs t})$ s.t.\ $h_{i}(k,{\bs q}^{k})$ has no pole and is non-zero for $k \geq \ell_{i}$.\\ 
		Then there exist irreducible monic polynomials $\Lst{f}{1}{s}\in\KK[x,{\bs t}]\sm\KK$, nonnegative integers $\ell'_1,\dots,\ell'_s$ and a finite algebraic field extension $K'$ of $K$ with
		a $\lambda$-th root of unity $\zeta\in K'$ and elements $\Lst{\alpha}{1}{w}\in {K^{\prime}}^{*}$ which are not roots of unity with the following properties.\\ 
		One can choose natural numbers $\mu_{i},\delta_{i}\in\NN$ for $1\le i \le m$, integers $u_{i,j}$ with $1\le i \le m$, $1\le j \le w$, integers $v_{i,j}$ with $1\le i \le m$, $1\le j \le s$ and rational functions $r_{i}\in\KK(x,{\bs t})^{*}$ for $1\le i \le m$ such that the following holds:
		\begin{enumerate} 
			\item For all $n \in \NN$ with $n \geq \delta_{i}$, 
			\fontsize{9.4}{0}\selectfont
				\begin{equation}\label{eqn:simpMixedHyperGeoPrdts}
				\hspace*{-0.6cm}P_{i}(n)=\big(\zeta^{n}\big)^{\mu_{i}} \big({\alpha^{n}_{1}}\big)^{u_{i,1}}\hspace*{-0.1cm}\cdots \big({\alpha^{n}_{w}}\big)^{u_{i,w}}{r_{i}(n, {\bs q}^{n})} \left( {\smashoperator{\prod_{k=\ell^{\prime}_{1}}^{n}} f_{1}(k, {\bs q}^{k})} \right)^{\hspace*{-0.1cm}{v_{i,1}}} \hspace*{-0.25cm}\cdots \left( {\smashoperator{\prod_{k=\ell^{\prime}_{s}}^{n}} f_{s}(k, {\bs q}^{k})} \right)^{\hspace*{-0.1cm}{v_{i,s}}}\hspace*{-0.2cm}.
				\end{equation}
			\normalsize
			\item The sequences with entries from the field $\KK'=K'(\kappa_1,\dots,\kappa_u)(q_1,\dots,q_e)$, 
			\begin{equation}\label{set:algIndepSeqs}
			\funcSeqA{\alpha_{1}^{n}}{n}, \dots, \funcSeqA{\alpha_{w}^{n}}{n}, \funcSeqA{\myProduct{k}{\ell_{1}^{\prime}}{n}{f_{1}(k,{\bs q}^{k})}}{n},\dots,\funcSeqA{\myProduct{k}{\ell_{s}^{\prime}}{n}{f_{s}(k,{\bs q}^{k})}}{n}, 
			\end{equation}
			are among each other algebraically independent over  $\tau\big(\KK^{\prime}(x,{\bs t})\big)\big[ \constSeqA{\zeta}{n} \big]$; here $\tau : \KK^{\prime}(x,{\bs t}) \to \ringOfEquivSeqs[\KK^{\prime}]$ is a difference ring monomorphism  where $\tau(\frac{a}{b})=\langle\ev\big(\frac{a}{b}, n \big)\big\rangle_{n \geq 0}$ for $a,b\in\KK^{\prime}[x,{\bs t}]$ is defined by~\eqref{Equ:EvqMixed}.
		\end{enumerate}
		If $K$ is a strongly $\s$-computable field (see Definition~\ref{defn:stonglySigmaComputable} below), then the components in \eqref{eqn:simpMixedHyperGeoPrdts} are computable.
\end{mtheorem}

Namely, Theorem~\ref{thm:ProblemRMHPE} provides a solution to \ref{prob:ProblemRPE} as follows.
Let $P(n)\in\ProdExpr(\KK(n,{\bs q}^{n}))$ be defined as in~\eqref{Equ:ProdEDef}
with $S \subseteq \ZZ^{m}$ finite, $a_{(\Lst{\nu}{1}{m})}(n)\in\KK(n,{\bs q}^n)$ and where the products $P_{i}(n)$ are given as in \eqref{eqn:MixedHypergeoPdts}.
Now assume that we have computed all the components as stated in Theorem~\ref{thm:ProblemRMHPE}. Then determine $\lambda\in\NN$ such that all $a_{(\Lst{n}{1}{m})}(n)$ have no pole for $n\ge\lambda$, and set $\delta=\max(\lambda,\Lst{\delta}{1}{m})$. Moreover, replace all $P_{i}$ with $1\leq i\leq m$ by their right-hand sides of \eqref{eqn:simpMixedHyperGeoPrdts} in the expression $P(n)$ yielding the expression $Q(n)\in\ProdExpr(\KK'(n,{\bs q}^{n}))$. Then by this construction we have $P(n)=Q(n)$ for all $n\ge\delta$. Furthermore, part~(2) of Theorem~\ref{thm:ProblemRMHPE} shows part~(2) of \ref{prob:ProblemRPE}.\\ 
Finally, we look at the zero-recognition statement of part~(3) of \ref{prob:ProblemRPE}. If $Q=0$, then $P(n)=0$ for all $n\geq\delta$ by part (1) of \ref{prob:ProblemRPE}. Conversely, if $P(n)=0$ for all $n$ from a certain point on, then also $Q(n)=0$ holds for all $n$ from a certain point on by part (1). 
Since the sequences~\eqref{set:algIndepSeqs} are algebraically independent over $\tau(\KK^{\prime}(x,{\bs t}))[\langle\zeta^n\rangle_{n\geq0}]$, the expression $Q(n)$ must be free of these products. 
Consider the mixed ${\bs q}$-multibasic difference field $\dField{\KK'(x,\bs{t})}$ and the \aE-extension $\dField{\KK'(x,\bs{t})[\vartheta]}$ of $\dField{\KK'(x,\bs{t})}$ of order $\lambda$ with $\sigma(\vartheta)=\zeta\,\vartheta$. 
By Corollary~\ref{cor:MixedMultibasicDiffFieldAsPiExt} below it follows that the mixed ${\bs q}$-multibasic difference field $\dField{\KK'(x,\bs{t})}$ is a \pisiSE-extension of $\dField{\KK'}$ with $\const\dField{\KK'}=\KK'$. Thus by Lemma~\ref{Lemma:RExtOverPiSi} it follows that the \aE-extension is an \rE-extension. In particular, it follows by Lemma~\ref{lem:injectiveHom} that the homomorphic extension of $\tau$ from $\dField{\KK'(x,\bs{t})}$ to $\dField{\KK'(x,\bs{t})[\vartheta]}$ with $\tau(\vartheta)=\langle\zeta^n\rangle_{n\geq0}$ is a $\KK'$-embedding. Since $Q(n)$ is a polynomial expression in $\zeta^n$ with coefficients from $\KK'(n,{\bs q}^{n})$ ($\zeta^n$ comes from~\eqref{eqn:simpMixedHyperGeoPrdts}), we can find an $h(x,\bs{t},\vartheta)\in\KK'(x,\bs{t})[\vartheta]$ such that the expression $Q(n)$ equals $h(n,\bs{q}^n,\zeta^n)$.
Further observe that $\tau(h)$ and the produced sequence of $Q(n)$ agree from a certain point on. Thus $\tau(h)=\bs{0}$ and since $\tau$ is a $\KK'$-embedding, $h=0$. Consequently, $Q(n)$ must be the zero-expression.

We will provide a proof (and an underlying algorithm) for Theorem~\ref{thm:ProblemRMHPE} by tackling the following subproblem formulated in the {\dr} setting.

\begin{svgraybox}
	\emph{Given} a mixed ${\bs q}$-multibasic {\df} $\dField{\FF}$ with $\FF=\KK(x)(t_{1})\dots(t_{e})$ where $\s(x)=x+1$ and $\s(t_{\ell})=q_{\ell}\,t_{\ell}$ for $1\le\ell\le e$; given $\Lst{h}{1}{m}\in\FF^{*}$. \emph{Find} an \rpiE-extension $\dField{\AA}$ of $\dField{\KK^{\prime}(x)(t_{1})\dots(t_{e})}$ where $\KK^{\prime}$ is an algebraic field extension of $\KK$ and $\Lst{g}{1}{m}\in\AA\sm\zs$ where $\s(g_{i})=\s(h_{i})\,g_{i}$ for $1\leq i\leq m$.
\end{svgraybox}

\noindent Namely, taking the special case $\FF=\KK(x)$ with $\s(x)=x+1$, we will tackle the above problem in Theorem~\ref{thm:RPiMonomsReps}, and we will derive the general case in Theorem~\ref{imptThm:MixedCaseCombinedPeriodZeroShiftCoprimeAndContentsAsRPIExtension}. Then based on the particular choice of the $g_i$ this will lead us directly to Theorem~\ref{thm:ProblemRMHPE}. 

\noindent We will now give a concrete example of the above strategy for hypergeometric products. An example for the mixed ${\bs q}$-multibasic situation is given in Example~\ref{exa:MixedHyperGeoPrdtsInSeqSetting}.

\begin{mexample}\label{exa:ConstructRPiExt}
Take the rational function field $\KK=K(\kappa)$ defined over the algebraic number field $K=\QQ\big((\ii+\sqrt{3}),\sqrt{-13}\big)$ and take the rational function field $\KK(x)$ defined over $\KK$. Now consider the hypergeometric product expressions
	\begin{equation}\label{eqn:HyperGeoPrdtsInDiffRing}
	P(n) =  
	\myProduct{k}{1}{n}{h_{1}(k)}\, + \,
	\myProduct{k}{1}{n}{h_{2}(k)}\, + \\ 
	\myProduct{k}{1}{n}{h_3(k)}\in\ProdExpr(\KK(n))
	\end{equation}
with
\begin{equation}\label{Equ:RatXh}
h_1(x)=\tfrac{-13\,\sqrt{-13}\,\kappa}{x},\,
h_2(x)=\tfrac{-784\,(\kappa+1)^{2}\,x}{13\,\sqrt{-13}\,(\ii+\sqrt{3})^{4}\,\kappa\,(x+2)^{2}},\,
h_3(x)=\tfrac{-17210368\,(\kappa+1)^{5}\,x}{13\,\sqrt{-13} \,(\ii + \sqrt{3})^{10}\,\kappa\,(x+2)^{5}}
\end{equation}
where $h_1,h_2,h_3\in\KK(x)$. With our algorithm (see
	Theorem~\ref{imptThm:CombinedShiftCoprimeAndContentsAsRPIExtension} below) we construct the algebraic field extension $K'=\QQ((-1)^{\frac{1}{6}},\sqrt{13})$ of $K$, take the rational function field $\KK^{\prime}=K'(\kappa)$ and define on top the rational {\df} $\dField{\KK'(x)}$ with $\s(x)=x+1$. Based on this, we obtain the \rpiE-extension $\dField{\AA}$ of $\dField{\KK'(x)}$ with
	\begin{equation}\label{alg:RPiExtension}
		\AA=\KK^{\prime}(x)[\vartheta][y_{1}, y^{-1}_{1}][{y_{2}}, y^{-1}_{2}][y_{3}, y^{-1}_{3}][y_{4}, y^{-1}_{4}][z, z^{-1}]
	\end{equation}
	and the automorphism $\s: \AA \rightarrow \AA$ defined by $\s(\vartheta)=(-1)^{\frac{1}{6}}\vartheta$, $\s(y_{1})=\sqrt{13}\,y_{1}$, $\s(y_{2})=7\,y_{2}$, $\s(y_{3})=\kappa\,y_{3}$, $\s(y_{4})=(\kappa+1)\,y_{4}$, and $\s(z)=(x+1)\,z$; note that $\const{\dField{\AA}}=\KK^{\prime}$. Now consider the difference ring homomorphism $\tau:\AA\to\ringOfEquivSeqs[\KK^{\prime}]$ which we define as follows. For the base field $\dField{\KK^{\prime}(x)}$ we take the difference ring embedding  $\tau(\frac{a}{b})=\langle\ev\big(\frac{a}{b}, n \big)\rangle_{n \geq 0}$ for  $a,b\in\KK^{\prime}[x]$ where $\ev$ is defined in~\eqref{Equ:EvRatDefinition}. Further, applying iteratively part~(2) of Lemma~\ref{lem:injectiveHom} we obtain the difference ring homomorphism $\tau:\AA\to\ringOfEquivSeqs[\KK^{\prime}]$ determined by $\tau(\vartheta)=\langle\rootOfUnitySeq{6}{n} \rangle_{n\geq0}$, $\tau(y_{1})=\langle\algSeq{13}{n} \rangle_{n\geq0}$, $\tau(y_{2})=\langle\intSeq{7}{n} \rangle_{n\geq0}$,  $\tau(y_{3})=\langle\kappa^{n}\rangle_{n\geq0}$,$\tau(y_{4})=\langle(\kappa+1)^{n}\rangle_{n\geq0}$ and $\tau(z)=\langle n! \rangle_{n\geq0}$.
	In addition, since $\dField{\AA}$ is an \rpiE-extension of $\dField{\KK^{\prime}(x)}$, it follows by part (3) of Lemma~\ref{lem:injectiveHom} that $\tau$ is a $\KK^{\prime}$-embedding. 
	This implies that  
	$\tau(\KK^{\prime}(x))[\tau(\vartheta)][\tau(y_{1}),\tau(y_{1}^{-1})]\dots[\tau(y_{4}),\tau(y_{4}^{-1})][\tau(z),\tau(z^{-1})]$	is a Laurent polynomial ring over the ring $\tau(\KK^{\prime}(x))[\tau(\vartheta)]$ with $\tau(\vartheta) = \constSeqA[\delta]{\big((-1)^{\frac{1}{6}}\big)}{\hspace*{-0.06cm}n}$.
	Further, we find
		\begin{equation}\label{eqn:SimpHyperGeoInDiffRing}
		Q^{\prime}=\underbrace{\frac{\vartheta^{9}\,y_{1}^{3}\,y_{3}}{z}}_{=:\,g_{1}} + 4\,\underbrace{\frac{\vartheta^{11}\,y_{2}^{2}\,y_{4}^{2}}{(x+1)^{2}\,(x+2)^{2}\,y_{1}^{3}\,y_{3}\,z}}_{=:\,g_{2}} + 32\,\underbrace{\frac{\vartheta^{5}\,y_{2}^{5}\,y_{4}^{5}}{(x+1)^{5}\,(x+2)^{5}\,y_{1}^{3}\,y_{3}\,z^{4}}}_{=:\,g_{3}}
		\end{equation}	
	\normalsize 
	where $\s(g_{i}) = \s(h_{i})\,g_{i}$ for $i=1,2,3$. Thus the $g_{i}$ model the shift behaviors of the hypergeometric products with the multiplicands $h_{i}\in\KK(x)$. In particular, we have defined $Q'$ such that $\tau(Q')=\langle P(n)\rangle_{n\geq 0}$ holds.		
	Rephrasing  $x \leftrightarrow n$, $\vartheta \leftrightarrow \rootOfUnitySeq{6}{n}$, $y_{1} \leftrightarrow \algSeq{13}{n}$, $y_{2} \leftrightarrow \intSeq{7}{n}$, $y_{3} \leftrightarrow \intSeq{\kappa}{n}$, $y_{4} \leftrightarrow \polySeq{\kappa+1}{n}$and $ z \leftrightarrow  n!$ in  \eqref{eqn:SimpHyperGeoInDiffRing} we get
	\fontsize{9.3}{0}\selectfont
 		\begin{multline}\label{eqn:SimpHyperGeoPrdtsSeqSetting}
		\hspace*{-0.3cm}Q(n) = \frac{\rootOfUnitySeqExp{6}{n}{9}\,\algSeqExp{13}{n}{3}\,\kappa^{n}}{n!} + \frac{4\,\rootOfUnitySeqExp{6}{n}{11}\,\intSeqExp{7}{n}{2}\,\polySeqExp{\kappa+1}{n}{2}}{(n+1)^{2}\,(n+2)^{2}\,\algSeqExp{13}{n}{3}\,\kappa^{n}\,n!} \\ + \frac{32\,\rootOfUnitySeqExp{6}{n}{5}\,\intSeqExp{7}{n}{5}\,\polySeqExp{\kappa+1}{n}{5}}{(n+1)^{5}\,(n+2)^{5}\,\algSeqExp{13}{n}{3}\,\kappa^{n}\,\polySeq{n!}{4}}\in\ProdExpr(\KK(n)).
		\end{multline}	
     \normalsize
     Note: $n!$ and $a^n$ with $a\in\KK'^*$ are just shortcuts for $\prod_{k=1}^nk$ and $\prod_{k=1}^na$, respectively.
Based on the corresponding proof of Theorem~\ref{thm:ProblemRMHPE} at the end of Subsection~\ref{SubSec:HypergeoemtricCase} we can ensure that $P(n) = Q(n)$ holds for all $n \in \NN$ with $n\ge1$. Further details on the computation steps can be found in Examples~\ref{exa:SimpNonConstHyperGeoTermsInDiffRing} and~\ref{exa:constRPiExts} below.
 
	\end{mexample}

\section{Algorithmic preliminaries: strongly \texorpdfstring{$\sigma$}{sigma}-computable fields}\label{Sec:AlgProperties}

In Karr's algorithm~\cite{karr1981summation} and all the improvements~\cite{KS:06,Schneider:07d,Schneider:08c,Petkov:10,Schneider:15,schneider2016difference,schneider2017summation} one relies on certain algorithmic properties of the constant field $\KK$. Among those, one needs to solve the following problem.

\begin{ProblemSpecBox}[\gop{P}]{ 
		{\gop{P} for $\Lst{\alpha}{1}{w} \in K^{*}$} 
	}\label{prob:ProblemGO}
	{
		Given a field $K$ and $\Lst{\alpha}{1}{w} \in K^{*}$. Compute a basis of the submodule \vspace*{-0.1cm}
		\[
		\VV := \big\{(\Lst{u}{1}{w})\in\ZZ^{w}\,\Big|\,\prodLst{i}{1}{w}{\alpha}{u}=1\big\} \text{ of } \ZZ^{w} \text{ over } \ZZ. 
		\]
		\vspace*{-0.7cm}} 
\end{ProblemSpecBox}

\noindent In~\cite{schneider2005product} it has been worked out that \ref{prob:ProblemGO} is solvable in any rational function field $\KK=K(\kappa_1,\dots,\kappa_u)$ provided that one can solve \ref{prob:ProblemGO} in $K$ and that one can factor multivariate polynomials over $K$.
In this article we require the following stronger assumption: \ref{prob:ProblemGO} can be solved not only in  $K$ ($K$ with this property was called $\sigma$-computable in~\cite{schneider2005product,KS:06}) but also in any algebraic extension of it. 
\begin{mdefinition}\label{defn:stonglySigmaComputable}
A field $K$ is strongly $\sigma$-computable if the standard operations in $K$ can be performed, multivariate polynomials can be factored over $K$ and \ref{prob:ProblemGO} can be solved for $K$ and any finite algebraic field extension of $K$.
\end{mdefinition}

\noindent 

\noindent Note that Ge's algorithm~\cite{ge1993algorithms} solves \ref{prob:ProblemGO} over an arbitrary number field $K$. 
Since any finite algebraic extension of an algebraic number field is again an algebraic number field, it follows with Ge's algorithm, that any number field $K$ is $\sigma$-computable. 

Summarizing, in this article we can turn our theoretical results to algorithmic versions, if we assume that $\KK=K(\kappa_1,\dots,\kappa_u)$ is a rational function field over a field $K$ which is strongly $\sigma$-computable. In particular, the underlying algorithms are implemented in the package \texttt{NestedProducts} for the case that $K$ is a finite algebraic field extension of $\QQ$. 

Besides these fundamental properties of the constant field, we rely on further (algorithmic) properties that can be ensured by difference ring theory. Let $\dField{\FF[t]}$ be a {\dr} over the field $\FF$ with $t$ transcendental over $\FF$ and  $\s(t)=\alpha\,t+\beta$ where $\alpha\in\FF^{*}$ and $\beta\in\FF$. Note that for any $h\in\FF[t]$ and any $k\in\ZZ$ we have $\sigma^k(h)\in\FF[t]$. Furthermore, if $h$ is irreducible, then also $\s^{k}(h)$ is irreducible.\\
Two polynomials $f,\,h\in\FF[t]\sm\zs$ are said to be \emph{shift co-prime}\index{shift co-prime}, also denoted by $\shp(f,h)=1$, if for all $k\in\ZZ$ we have that $\gcd(f,\s^{k}(h))=1$. Furthermore, we say that $f$ and $h$ are shift-equivalent, 
denoted by $f\sim_{\s}h$, if there is a $k\in\ZZ$ with $\frac{\sigma^k(f)}{h}\in\FF$. If there is no such $k$, then we also write $f\nsim_{\s}h$.

It is immediate that $\sim_{\s}$ is an equivalence relation. In the following we will focus mainly on irreducible polynomials $f,h\in\FF[t]$. Then observe that $f\sim_{\s}h$ holds if and only if $\shp(f,h)\neq1$ holds. In the following it will be important to determine such a $k$. Here
we utilize the following property of \pisiSE-extensions whose proof can be found in~\cite[Thm.~4]{karr1981summation} (~\cite[Cor.~1,2]{bronstein2000solutions}
or~\cite[Thm.~2.2.4]{schneider2001symbolic}).

\begin{mlemma}\label{Lemma:Period}
Let $\dField{\FF(t)}$ be a \pisiSE-extension of $\dField{\FF}$ and $f\in\FF(t)^*$. Then $\frac{\sigma^k(f)}{f}\in\FF$ for some $k\neq0$ iff $\frac{\sigma(t)}{t}\in\FF$  and $f=u\,t^m$ with $u\in\FF^*$ and $m\in\ZZ$. 
\end{mlemma}

\noindent Namely, using this result one can deduce when such a $k$ is unique.

\begin{mlemma}\label{lem:KarrsSpecification}
	Let $\dField{\FF(t)}$ be a \pisiSE-extension of $\dField{\FF}$ with $\s(t)=\alpha\,t+\beta$ for $\alpha\in\FF^{*}$ and $\beta\in\FF$. Let $f,h\in\FF[t]$ be irreducible with $f\,\sim_{\s}\,h$. Then there is a unique $k\in\ZZ$ with $\frac{\s^{k}(f)}{h}\in\FF^{*}$ iff $\frac{\sigma(t)}{t}\notin\FF$ or $f=a\,t$ and $h=b\,t$ for some $a,b\in\FF^*$.
\end{mlemma}

\begin{proof}
	Suppose on the contrary that $\beta=0$ and $f=t=h$. Then $\frac{\s^{k}(f)}{h}\in\FF^{*}$ for all $k\in\ZZ$ and thus $k$ is not unique. Conversely, suppose that $\s^{k_{1}}(f)=u\,h$ and $\s^{k_{2}}(f)=v\,h$ with $k_{1}>k_{2}$. Then $\frac{\s^{k_{1}-k_{2}}(f)}{f}=\frac{u}{v}\in\FF^{*}$. Thus by Lemma~\ref{Lemma:Period}, $\frac{\sigma(t)}{t}\in\FF$ and $f=a\,t$ for some $a\in\FF^*$. Thus also $h=b\,t$ for some $b\in\FF^*$. \qed
\end{proof}

Consider the rational {\df} $\dField{\KK(x)}$ with $\s(x)=x+1$. Note that $x$ is a \sigmaE-monomial. Let $f, h\in\KK[x]\sm\KK$ be  irreducible polynomials. If $f\,\sim_{\s}\,h$, then there is a unique $k\in\ZZ$ with $\frac{\s^{k}(f)}{h}\in\KK$. Similarly for the mixed ${\bs q}$-multibasic difference field $\dField{\KK(x)(t_{1})\dots(t_{e})}$ with $\s(x)=x+1$ and $\s(t_{i})=q_{i}\,t_{i}$ for $1\le i\le e$ we note that the $t_i$ are \piE-monomials; see Corollary~\ref{cor:MixedMultibasicDiffFieldAsPiExt} below. For $1\leq i\leq e$ and $\EE=\KK(x)(t_{1})\dots(t_{i-1})$, let  $f, h\in\EE[t_{i}]$ be monic irreducible polynomials. If $f\,\sim_{\s}\,h$, then there is a unique $k\in\ZZ$ with $\frac{\s^{k}(f)}{h}\in\EE$ if and only $f\neq t_{i}\neq h$. In both cases, such a unique $k$ can be computed if one can perform the usual operations in $\KK$; \cite[Thm.~$1$]{KS:06}. Optimized algorithms for theses cases can be found in \cite[Section~$3$]{bauer1999multibasic}. In addition, the function $Z$ given in~\eqref{eqn:hyperGeoShiftBoundedFxns} or in~\eqref{eqn:mixedHyperGeoShiftBoundedFxns} can be computed due to~\cite{bauer1999multibasic}.
Summarizing, the following properties hold.

\begin{mlemma}
Let $\dField{\FF}$ be the rational or mixed $\textbf{q}$-multibasic difference field over $\KK$ as defined in Examples~\ref{exa:RatDF} and~\ref{exa:qMixedDF}. Suppose that the usual operations\footnote{This is the case if $\KK$ is strongly $\sigma$-computable, or if $\KK$ is a rational function field over a strongly $\sigma$-computable field.} in $\KK$ are computable. Then one compute
\begin{enumerate}
 \item the $Z$-functions given in~\eqref{eqn:hyperGeoShiftBoundedFxns} or in~\eqref{eqn:mixedHyperGeoShiftBoundedFxns};
 \item one can compute for shift-equivalent irreducible polynomials $f,h$ in $\KK[x]$ (or in $\KK(x)(t_1,\dots,t_{i-1})[t_i]$) a $k\in\ZZ$ with $\frac{\sigma^k(f)}{h}\in\KK$ (or $\frac{\sigma^k(f)}{h}\in\KK(x)(t_1,\dots,t_{i-1})$).
\end{enumerate}
\end{mlemma}

For further considerations, we introduce the following Lemma which gives a relation between two polynomials that are shift-equivalent. 
\begin{mlemma}\label{lem:ConstructShiftEquivPoly}
	Let $\dField{\FF(t)}$ be a {\df} over a field $\FF$ with $t$ transcendental over $\FF$ and $\s(t)=\alpha\,t+\beta$ where $\alpha\in\FF^{*}$ and $\beta\in\FF$. Let $f,\,h\in\FF[t]\sm\FF$ be monic and $f\sim_{\s}h$. Then there is a $g\in\FF(t)^{*}$ with $h= \frac{\s(g)}{g}\,f$.
\end{mlemma}

\begin{proof}
	Since $f\,\sim_{\s}\,h$, there is a $k\in\ZZ$ and $u\in\FF^{*}$ with $\s^{k}(f)=u\,h$. Note that $\deg(f)=\deg(h)=m$. By comparing coefficients of the leading terms and using that $f,h$ are monic, we get $u\,t^m=\s^{k}(t^{m})$. If $k\ge0$, set $g:=\prod_{i=0}^{k-1}\s^{i}(t^{-m}f)$. Then $\frac{\s(g)}{g}=\frac{\s^{k}(t^{-m}f)}{t^{-m}f}=\frac{\s^{k}(f)\,t^{m}}{f\,\s^{k}(t^{m})}=\frac{h\,u\,t^{m}}{f\,\s^{k}(t^{m})}=\frac{h}{f}$. Thus, $h= \frac{\s(g)}{g}\,f$. If $k<0$, set $g:=\prod_{i=1}^{-k}\s^{-i}(\frac{t^{m}}{f})$. Then $\frac{\s(g)}{g}=\frac{\s^{k}(t^{m}f^{-1})}{t^{m}f^{-1}}=\frac{t^{m}\,\s^{k}(f)}{\s^{k}(t^{m})\,f}=\frac{h\,u\,t^{m}}{f\,\s^{k}(t^{m})}=\frac{h}{f}$. Hence again $h= \frac{\s(g)}{g}\,f$. \qed
\end{proof}



\section{Algorithmic construction of \texorpdfstring{\rpiE}{RPi}-extensions for \texorpdfstring{$\ProdExpr(\KK(n))$}{ProdE(KK(n))}}\label{Sec:HypergeometricCase}

In this section we will provide a proof for Theorem~\ref{thm:ProblemRMHPE} for the case $\ProdExpr(\KK(n))$. Afterwards, this proof strategy will be generalized for the case $\ProdExpr(\KK(n, {\bs q}^{n}))$ in Section~\ref{Sec:mixedHypergeometricCase}. In both cases, we will need the following set from~\cite[Definition $21$]{karr1981summation}.
\begin{mdefinition}\label{defn:MKarrZModule}
	For a difference field $\dField{\FF}$ and ${\bs f} = (\Lst{f}{1}{s})\in(\FF^{*})^{s}$ we define 
	\[
	\KarrModule{{\bs f}}{\FF} = \left\{(\Lst{v}{1}{s})\in\ZZ^{s}\,\big|\,\tfrac{\s(g)}{g}=\ProdLst{f}{1}{s}{v}\,\text{ for some } g\in\FF^{*}\right\}. 
	\]
\end{mdefinition} 

\noindent Note that $\KarrModule{{\bs f}}{\FF}$ is a $\ZZ$-submodule of $\ZZ^{s}$ which has finite rank.
We observe further that for the special case $\const\dField{\AA}=\AA$ we have $\frac{\sigma(g)}{g}=1$ for all $g\in\AA^*$. Thus
\[
	\KarrModule{{\bs f}}{\AA} = \{(\Lst{v}{1}{s})\in\ZZ^{s}\,\vert\,\ProdLst{f}{1}{s}{v}=1\}
\]
which is nothing else but the set in~\ref{prob:ProblemGO}.

\noindent Finally, we will heavily rely on the following lemma that ensures if a \pE-extension forms a \piE-extension; compare also~\cite{Singer:08}.

\begin{mlemma}\label{lem:transcendentalCriterionForPrdts}
	Let $\dField{\FF}$ be a {\df} and let ${\bs f} = (\Lst{f}{1}{s})\in{(\FF^{*})}^{s}$. Then the following statements are equivalent. 
	\begin{enumerate} 
		\item There are no $(\Lst{v}{1}{s}) \in \ZZ^{s}\sm\zvs{s}$ and $g\in\FF^{*}$ with \eqref{eqn:multTelescoping}, i.e., $\KarrModule{{\bs f}}{\FF}=\zvs{s}$.
		\item One can construct a \piE-field extension $\dField{\FF(z_{1})\dots(z_{s})}$ of $\dField{\FF}$ with $\s(z_{i})=f_{i}\,z_{i}$, for $1\le i\le s$.
		\item One can construct a \piE-extension $\dField{\FF[z_{1},z^{-1}_{1}]\dots[z_{s},z^{-1}_{s}]}$ of $\dField{\FF}$ with $\s(z_{i})=f_{i}\,z_{i}$, for $1\le i\le s$.
	\end{enumerate}
\end{mlemma}

\begin{proof}
	$(1)\Leftrightarrow(2)$ is established by \cite[Theorem 9.1]{schneider2010parameterized}.  $(2) \Longrightarrow (3)$ is obvious while $(3) \Longrightarrow (2)$ follows by iterative application of \cite[Corollary $2.6$]{schneider2017summation}. \qed
\end{proof}

Throughout this section, let $\dField{\KK(x)}$ be the rational difference field over a constant field $\KK$, where $\KK=K(\kappa_1,\dots,\kappa_u)$ is a rational function field over a field $K$. For algorithmic reasons we will assume in addition that $K$ is strongly $\s$-computable (see Definition~\ref{Sec:AlgProperties}).
In Subsection~\ref{Subsec:K} we will treat Theorem~\ref{thm:ProblemRMHPE} first for the special case $\ProdExpr(K)$. Next, we treat the case $\ProdExpr(\KK)$ in Subsection~\ref{Sec:ConstantRat}.
In Subsection~\ref{Sec:NonConstantRat} we present simple criteria to check if a tower of \piE-monomials $t_i$ with $\sigma(t_i)/t_i\in\KK[x]$ forms a \piE-extension. Finally, in Subsection~\ref{SubSec:HypergeoemtricCase} we will utilize this extra knowledge to construct \piE-extensions for the full case $\ProdExpr(\KK(n))$.

\subsection{Construction of \texorpdfstring{\rpiE}{RPi}-extensions for \texorpdfstring{$\ProdExpr(K)$}{ProdE(K)}}\label{Subsec:K}

Our construction is based on the following theorem.

\begin{mtheorem} \label{thm:RPiMonomsReps}
	Let $\Lst{\gamma}{1}{s} \in K^{*}$. Then there is an algebraic field extension $K^{\prime}$ of $K$ together with a $\lambda$-th root of unity $\zeta\in K^{\prime}$ and elements ${\bs \alpha} = (\Lst{\alpha}{1}{w}) \in {K^{\prime}}^{w}$ with $\KarrModule{{\bs \alpha}}{K^{\prime}} = \zvs{w}$ such that for all $i=1,\dots,s$,
	\begin{equation}\label{eqn:rootOfUnityAndNoIntRelsAlgNums}
	\gamma_{i}=\zeta^{\mu_{i}}\,\alpha_{1}^{u_{i,{1}}}\cdots\alpha_{w}^{u_{i,{w}}}
	\end{equation}
	holds for some $1 \leq\mu_{i} \leq \lambda$ and $(u_{i,{1}},\dots,u_{i,{w}})\in~\ZZ^{w}$.\\ 
	If $K$ is strongly $\sigma$-computable, then $\zeta$, the $\alpha_i$ and the $\mu_i,u_{i,j}$ can be computed.
\end{mtheorem}

\begin{proof}
	We prove the Theorem by induction on $s$. The base case $s=0$ obviously holds. Now assume that there are a $\lambda$-th root of unity $\zeta$, elements ${\bs \alpha}=(\Lst{\alpha}{1}{w}) \in ({K^{\prime}}^{*})^{w}$ with $\KarrModule{{\bs \alpha}}{K^{\prime}}=\zvs{w}$, $1 \leq \mu_{i}\leq \lambda$ and $(v_{i,{1}}, \dots v_{i,{w}}) \in \ZZ^{w}$ such that $\gamma_{i} = \zeta^{\mu_{i}}\,\alpha_{1}^{v_{i,{1}}}\cdots\alpha_{w}^{v_{i,{w}}}$	holds for all $1\leq i\leq s-1$.\\ 
	Now consider in addition $\gamma_s\in K^*$.
	First suppose the case $\KarrModule{(\Lst{\alpha}{1}{w}, \gamma_{s})}{K^{\prime}}=\zvs{w+1}$. With $\alpha_{w+1}:=\gamma_{s}$, we can write $\gamma_{s}$ as $\gamma_{s} = \zeta^{\lambda}\,\ProdLst{\alpha}{1}{w}{\upsilon}\,\alpha_{w+1}$ with $\lambda=\upsilon_{1}=\cdots=\upsilon_{w}=0$. Further, with $v_{i,w+1}=0$, we can write $\gamma_{i} = \zeta^{\mu_{i}}\,\alpha_{1}^{v_{i,{1}}}\cdots\alpha_{w}^{v_{i,{w}}}\,\alpha_{w+1}^{v_{i,w+1}}$ for all $1 \leq i \leq s-1$. This completes the proof for this case.\\ 
	Otherwise, suppose that $\KarrModule{(\Lst{\alpha}{1}{w},\gamma_{s})}{K^{\prime}}\neq\zvs{w+1}$ and take $(\Lsth{\upsilon}{1}{w}{u_{s}})\in\KarrModule{(\Lst{\alpha}{1}{w},\gamma_{s})}{K^{\prime}}\sm\zvs{w+1}$. Note that $u_{s}\neq0$ since $\KarrModule{{\bs \alpha}}{K^{\prime}}=\zvs{w}$. Then take all the non-zero integers in $(\Lsth{\upsilon}{1}{w}{u_{s}})$ and define $\delta$ to be their least common multiple. Define $\bar{\alpha}_{j}:= \alpha^{\sfrac{1}{|u_{s}|}}_{j}\in K^{\prime\prime}$ for $1\leq j \leq w$ where $K^{\prime\prime}$ is some algebraic field extension of $ K^{\prime}$ and let $\lambda^{\prime} = \lcm(\delta, \lambda)$. Take a primitive  $\lambda^{\prime}$-th root of unity $\zeta^{\prime}:= \ee^{\frac{2\,\pi\,\ii}{\lambda^{\prime}}}$. Then we can express $\gamma_{s}$ in terms of $\Lst{\bar{\alpha}}{1}{w}$ by
	\begin{equation}\label{Eq:usRoot}
	\gamma_{s}={(\zeta^{\prime})}^{\nu_{s}}\myProduct{j}{1}{w}{\alpha_{j}^{-\frac{\upsilon_{j}}{u_{s}}}} = {(\zeta^{\prime})}^{\nu_{s}}\myProduct{j}{1}{w}{(\xoverline{\alpha}_{j}) ^{-\upsilon_{j}\,\cdot\,\sign(u_{s})}}
	\end{equation}
	with $1\le\nu_{s}\le\lambda^{\prime}$. Note that for each $j, -\upsilon_{j} \cdot \sign(u_{s}) \in \ZZ$. Thus we have been able to represent $\gamma_{s}$ as a power product of $\zeta^{\prime}$ and elements $\xoverline{\bs \alpha} = (\Lst{\xoverline{\alpha}}{1}{w}) \in ({K^{\prime\prime}}^{*})^{w}$ which are not roots of unity. Consequently, we can write $\gamma_{i}=(\zeta^{\prime})^{\mu_{i}}\,\bar{\alpha}_{1}^{u_{i,{1}}} \cdots \bar{\alpha}_{w}^{u_{i,{w}}}$ for $1 \leq i \leq s-1$,  where $u_{i,{j}} = \abs{u_{s}} \, v_{i,{j}}$ for $1 \leq j \leq w$ and $1 \leq \mu_{i} \leq \lambda^{\prime}$. Now suppose that $\KarrModule{\xoverline{\bs \alpha}}{K^{\prime\prime}} \neq \zvs{w}$. Then there is a $(\Lst{m}{1}{w}) \in \ZZ^{w} \sm \zvs{w}$ such that 
	\[
	1=\myProduct{j}{1}{w} {(\bar{\alpha}_{j})^{m_{j}}} = \myProduct{j}{1}{w}{\Big( \alpha^{\frac{1}{|u_{s}|}}_{j} \Big)^{m_{j}}} \Longrightarrow \myProduct{j}{1}{w}{\Big( \alpha^{\frac{1}{|u_{s}|}}_{j} \Big)^{|u_{s}|\,m_{j}}} = 1^{|u_{s}|} \iff \myProduct{j}{1}{w}{\alpha^{m_{j}}_{j}} = 1
	\]
	with $(\Lst{m}{1}{w}) \neq \zv{w}$; contradicting the assumption that $\KarrModule{{\bs \alpha}}{K^{\prime}} = \zvs{w}$ holds. Consequently, $\KarrModule{\xoverline{\bs \alpha}}{K^{\prime\prime}}  = \zvs{w}$ which completes the induction step.\\
	Suppose that $K$ is strongly $\s$-computable. Then one can decide if $\KarrModule{{\bs \alpha}}{K^{\prime}}$ is the zero-module, and if not one can compute a non-zero integer vector. All other operations in the proof rely on basic operations that can be carried out. \qed
\end{proof}

\begin{remark}\label{Remark:PiExtforConstantCase}
Let $\Lst{\gamma}{1}{s} \in K^{*}$ and suppose that the ingredients
	$\zeta$, $\Lst{\alpha}{1}{w}$ and the $\mu_i$ and $u_{i,j}$ are given as stated in Theorem~\ref{thm:RPiMonomsReps}.
        Let $n\in\NN$.
	Then by \eqref{eqn:rootOfUnityAndNoIntRelsAlgNums} we have that
	\[
	\gamma_{i}^{n}=\myProduct{k}{1}{n}{\gamma_{i}}= \myProduct{k}{1}{n}{\zeta^{\mu_{i}}}\,\myProduct{k}{1}{n}{\alpha_{1}^{u_{i,{1}}}}\cdots\myProduct{k}{1}{n}{\alpha_{w}^{u_{i,{w}}}} =  \big(\zeta^{n}\big)^{\mu_{i}}\,\big( \alpha^{n}_{1}\big)^{u_{i,{1}}} \cdots \big( \alpha^{n}_{w}\big)^{u_{i,{w}}}.
	\]
	The following remarks are relevant.
	\begin{enumerate} 
		\item Since $\KarrModule{{\bs \alpha}}{K^{\prime}}=\zvs{w}$, we know that there are no $g\in{K^{\prime}}^{*}$, and $(\Lst{u}{1}{w})\in\ZZ^{w}\sm\zvs{w}$ with $1 = \frac{\s(g)}{g} = \ProdLst{\alpha}{1}{w}{u}$. In short we say that $\Lst{\alpha}{1}{w}$ satisfy no integer relation. Thus it follows by Lemma~\ref{lem:transcendentalCriterionForPrdts} that there is a \piE-extension $\dField{\EE}$ of $\dField{K^{\prime}}$ with $\EE=K^{\prime}[y_{1},y^{-1}_{1}]\dots[y_{w}, y^{-1}_{w}]$ and $\s(y_{j})=\alpha_{j}\,y_{j}$ for $j = 1,\dots,w$.
		\item Consider the \aE-extension $\dField{\EE[\vartheta]}$ of $\dField{\EE}$  with $\s(\vartheta)=\zeta\,\vartheta$ of order $\lambda$. By Lemma~\ref{Lemma:RExtOverPiSi} this is an \rE-extension. (Take the quotient field of $\EE$, apply Lemma~\ref{Lemma:RExtOverPiSi}, and then take the corresponding subring.)	
		\item Summarizing, the product expressions $\gamma_{1}^{n},\dots,\gamma_{s}^{n}$ can be rephrased in the \rpiE-extension $\dField{K^{\prime}[y_{1},y^{-1}_{1}]\dots[y_{w}, y^{-1}_{w}][\vartheta]}$ of $\dField{K^{\prime}}$. Namely, we can represent $\alpha_{j}^{n}$ by $y_{j}$ and $\zeta^{n}$ by $\vartheta$.
	\item If $K=\QQ$ (or if $K$ is the quotient field of a certain unique factorization domain), this result can be obtained without any extension, i.e., $K=K'$; see~\cite{schneider2005product}.

	\end{enumerate} 
\end{remark}

So far, Ocansey's Mathematica package \texttt{NestedProducts} contains the algorithmic part of Theorem~\ref{thm:RPiMonomsReps} if $K$ is an algebraic number field, i.e., a finite algebraic field extension of the field of rational numbers $\QQ$. More precisely, the field is given in the form $K=\QQ(\theta)$ together with an irreducible polynomial $f(x)\in\QQ[x]$ with $f(\theta)=0$ such that the degree $n:=\deg f$ is minimal ($f$ is also called the minimal polynomial of $\theta$).  Let $\Lst{\theta}{1}{n}\in\CC$ be the roots of the minimal polynomial $f(x)$. Then the mappings $\varphi_{i}:\QQ(\theta)\to\CC$ defined as $\varphi_{i}(\sum_{j=0}^{n-1}\gamma_{j}\,\theta^{j})=\sum_{j=0}^{n-1}\gamma_{j}\,\theta_{i}^{j}$ with $\gamma_{j}\in\QQ$ are the embeddings of $\QQ(\theta)$ into the field of complex numbers $\CC$ for all $i=1,\dots,n$. Note that any finite algebraic extension $K^{\prime}$ of $K$ can be also represented in a similar manner and can be embedded into $\CC$. 
Subsequently, we consider algebraic numbers as elements in the subfield $\varphi_{i}(\QQ(\theta))$ of $\CC$ for some $i$.

Now let $K$ be such a number field. Applying the underlying algorithm of Theorem~\ref{thm:RPiMonomsReps} to given $\Lst{\gamma}{1}{s} \in K^{*}$ might lead to rather complicated algebraic field extensions in which the $\alpha_i$ are represented. It turned out that the following strategy improved this situation substantially. Namely, consider the map, $\norm{\, \,}:K\to\mathbb{R}$ where $\mathbb{R}$ is the set of real numbers with $\gamma \mapsto \geno{\gamma, \gamma}^{\sfrac{1}{2}}$ where $\geno{\gamma, \gamma}$ denotes the product of $\gamma$ with its complex conjugate. In this setting, one can solve the following problem.
\vspace*{-0.2cm}
\begin{ProblemSpecBox}[\ru{P}]{ 
		\bf \ru{P} for $\gamma\in K^{*}$.
	}\label{prob:ProblemRU}
	{
		\emph{Given} $\gamma\in K^{*}$. \emph{Find}, if possible, a root of unity $\zeta$ such that
		$\gamma = \norm{\gamma}\,\zeta$
		holds.
	}
\end{ProblemSpecBox}

\begin{mlemma}\label{lem:RootOfUnity}
	If $K$ is an algebraic number field, then \ref{prob:ProblemRU} for $\gamma \in
K^{*}$ is solvable in $K$ or some finite algebraic extension $K^{\prime}$ of $K$.
\end{mlemma}

\begin{proof}
	Let $\gamma\in K=\QQ(\alpha)$ where $p(x)$ is the minimal polynomial of
$\alpha$. We consider two cases. Suppose that $\norm{\gamma}\notin K$.
Then using the Primitive Element Theorem (see, e.g., 
\cite[pp. 145]{winkler2012polynomial}) we can construct a new minimal polynomial which represents the algebraic field extension $K'$ of $K$  
with $\norm{\gamma}\in K'$. Define
$\zeta:=\frac{\gamma}{\norm{\gamma}}\in K^{\prime}$. Note that
$\norm{\zeta}=1$. It remains to check if $\zeta$ is a root of unity\footnote{$\zeta$ lies on the unity circle. However, not every algebraic number on the unit circle is a root of unity: Take for instance $\frac{1-\sqrt{3}}{2}+\frac{3^{\frac{1}{4}}}{\sqrt{2}}\,\ii$ and its complex conjugate; they are on the unit circle, but they are roots of the polynomial $x^{4}-2\,x^{3}-2\,x^{2}+1$ which is irreducible in $\QQ[x]$ and which is not a cyclotomic polynomial. For details on number fields containing such numbers see \cite{parry1975units}.},
i.e., if there is an $n\in\NN$ with $\zeta^{n}=1$. This is constructively decidable 
since $K^{\prime}$ is strongly $\s$-computable. In the second case we have
$\norm{\gamma}\in K$, and thus $\zeta := \frac{\gamma}{\norm{\gamma}}\in K$.
Since $K$ is strongly $\s$-computable, one can decide again constructively if there is an $n\in\NN$
with $\zeta^{n}=1$.
\end{proof}

As preprocessing step (before we actually apply Theorem~\ref{thm:RPiMonomsReps}) we check algorithmically if we can solve  \ref{prob:ProblemRU}
for each of the algebraic numbers $\Lst{\gamma}{1}{s}$.
Extracting their roots of unity and applying Proposition~\ref{pro:CommonRootOfUnity}, we can compute a common 
$\lambda$-th root of unity that will represent all the other roots of unity. 

\begin{mproposition}\label{pro:CommonRootOfUnity}
	Let $a$ and $b$ be distinct primitive roots of unity of order
$\lambda_{a}$ and $\lambda_{b}$, respectively. Then there is a primitive
$\lambda_{c}$-th root of unity $c$ such that for some $0 \le m_{a},m_{b}
< \lambda_{c}$ we have $c^{m_{a}} = a$ and $c^{m_{b}}= b$.
\end{mproposition}

\begin{proof}
	Take primitive roots of unity of orders $\lambda_{a}$ and $\lambda_{b}$,
say, $\alpha=\ee^{\frac{2\,\pi\,\ii}{\lambda_{a}}}$ and
$\beta=\ee^{\frac{2\pi\ii}{\lambda_{b}}}$. Let $a=\alpha^{u}$ and
$b=\beta^{v}$ for $0\le u < \lambda_{a}$ and $0\le v < \lambda_{b}$.
Define $\lambda_{c}:= \lcm(\lambda_{a}, \lambda_{b})$ and take a
primitive $\lambda_{c}$-th root of unity,
$c=\ee^{\frac{2\pi\ii}{\lambda_{c}}}$. Then with
$m_{a}=u\,\frac{\lambda_{c}}{\lambda_{a}}\mod{\lambda_{c}}$ and
$m_{b}=v\,\frac{\lambda_{c}}{\lambda_{b}} \mod{\lambda_{c}}$ the Proposition is
proven. \qed
\end{proof}

\begin{mexample}\label{exa:SimpProdOverRootOfUnityAndAlgebraicNumsInDiffRing}
	With $K=\QQ(\ii+\sqrt{3},\sqrt{-13})$, we can extract the following products
	\begin{equation}\label{set:productOverAlgebraicNumbers}
		\myProdWithInnerUnderbrace{k}{1}{n}{-13\sqrt{-13}}{\gamma'_1},\quad \myProdWithInnerUnderbrace{k}{1}{n}{\frac{-784}{13\,\sqrt{-13}\,(\ii+\sqrt{3})^4}}{\gamma'_2},\quad \myProdWithInnerUnderbrace{k}{1}{n}{\frac{-17210368}{13\,\sqrt{-13}\,(\ii+\sqrt{3})^{10}}}{\gamma'_3}
	\end{equation}	
	from \eqref{eqn:HyperGeoPrdtsInDiffRing}. Let $\gamma_{1} = -13$, $\gamma_{2} = \sqrt{-13}$, $\gamma_{3} = -784$, $\gamma_{4} = 13$, $\gamma_{5} = (\ii+\sqrt{3})$ and $\gamma_{6} = -17210368$. Applying \ref{prob:ProblemRU} to each $\gamma_{i}$ we get the roots of unity $1, -1, \ii, \frac{\ii+\sqrt{3}}{2}$ with orders $1, 2, 4, 12$, respectively. By Proposition~\ref{pro:CommonRootOfUnity}, the order of the common root of unity is $12$. Among all the possible $12$-th root of unity, we take  $\zeta := (-1)^{\sfrac{1}{6}} = \ee^{\frac{\pi\,\ii}{6}}$. Note that we can express the other roots of unity with less order in terms of our chosen root of unity, $\zeta$. In particular, we can write $1, -1, \ii$ as $\zeta^{12}, \zeta^{6}, \zeta^{3}$, respectively. Consequently, \eqref{set:productOverAlgebraicNumbers} simplifies to
	\fontsize{9.32837}{0}\selectfont
	\begin{equation}\label{eqn:productOverAlgebraicNumbersWithRootOfUnitySimplified}
	\rootOfUnitySeqExp{6}{n}{9}\,\myProduct{k}{1}{n}{13\,\sqrt{13}},\quad\rootOfUnitySeqExp{6}{n}{11}\, \myProduct{k}{1}{n}{\frac{49}{13\,\sqrt{13}}},\quad\rootOfUnitySeqExp{6}{n}{5}\,\myProduct{k}{1}{n}{\frac{16807}{13\,\sqrt{13}}}.
	\end{equation}
	\normalsize	 
	The pre-processing step yields the numbers $\gamma_{1}^{*}=\sqrt{13}$, $\gamma_{2}^{*}=13$, $\gamma_{3}^{*} = 49$ and $\gamma_{4}^{*}=~16807$ which are not roots of unity. Now we carry out the steps worked out in the proof of Theorem~\ref{thm:RPiMonomsReps}. The package \texttt{NestedProducts} uses Ge's algorithm~\cite{ge1993algorithms} to given $\alpha_{1}=\sqrt{13}$ and $\alpha_{2}'=49$ and finds out that there is no integer relation, i.e, $\KarrModule{(\alpha_{1},\alpha_{2}')}{K'}=\zvs{2}$ with $K'=\QQ\big((-1)^{\frac{1}{6}},\sqrt{13}\big)$. For the purpose of working with primes whenever possible, we write $\alpha_{2}'=\alpha_{2}^{2}$ where $\alpha_{2}=7$. Note that, $\KarrModule{(\alpha_{1},\alpha_{2})}{K'}=\zvs{2}$.  
	Now take the \apE-extension $\dField{K^{\prime}[\vartheta][y_{1},y^{-1}_{1}][y_{2}, y^{-1}_{2}]}$ of $\dField{K^{\prime}}$ with $\s(\vartheta)=(-1)^{\frac{1}{6}}$, $\s(y_{1})=\sqrt{13}\,y_{1}$ and $\s(y_{2})=7\,y_{2}$. By our construction and Remark~\ref{Remark:PiExtforConstantCase} it follows that the \apE-extension is an \rpiE-extension. Further, with $\alpha_{1}$ and $\alpha_{2}$ we can write $13 = \big(\sqrt{13}\big)^{2} \cdot 7^{0}$, $49 = \big(\sqrt{13}\big)^{0} \cdot 7^{2}$ and $16807 = \big(\sqrt{13}\big)^{0} \cdot 7^{5}$. Hence for $a'_1=\vartheta^{9}\,y_{1}^{3}$, $a'_2=\frac{\vartheta^{11}\,y_{2}^{2}}{y_{1}^{3}}$, $a'_3=\frac{\vartheta^{5}\,y_{2}^{5}}{y_{1}^{3}}$ 
	we get $\sigma(a'_i)=\gamma'_i\,a'_i$ for $i=1,2,3$. Thus the shift behavior of the products in~\eqref{set:productOverAlgebraicNumbers} is modeled by $a'_1,a'_2,a'_3$, respectively. In particular, the products in~\eqref{set:productOverAlgebraicNumbers} can be rewritten to
	\fontsize{9.2}{0}\selectfont
	\begin{equation}
		\hspace*{-0.15cm}\rootOfUnitySeqExp{6}{n}{9}\,\algSeqExp{13}{n}{3},\quad\rootOfUnitySeqExp{6}{n}{11}\, {\frac{\intSeqExp{7}{n}{2}}{\algSeqExp{13}{n}{3}}},\quad\rootOfUnitySeqExp{6}{n}{5}\,{\frac{\intSeqExp{7}{n}{5}}{\algSeqExp{13}{n}{3}}}. \vspace{-0.25cm}
	\end{equation}
\end{mexample}

\subsection{Construction of \texorpdfstring{\rpiE}{RPi}-extensions for \texorpdfstring{$\ProdExpr(\KK)$}{ProdE(KK)}}\label{Sec:ConstantRat}

Next, we treat the case that $\KK=K(\kappa_1,\dots,\kappa_u)$ is a rational function field where we suppose that $K$ is strongly $\s$-computable.

\begin{mtheorem} \label{thm:RPiMonomsRepsKRat}
Let $\KK=K(\kappa_1,\dots,\kappa_u)$ be a rational function field over a field $K$ and let
$\Lst{\gamma}{1}{s} \in \KK^{*}$. Then there is an algebraic field extension $K^{\prime}$ of $K$ together with a $\lambda$-th root of unity $\zeta\in K^{\prime}$ and elements ${\bs \alpha} = (\Lst{\alpha}{1}{w}) \in {K^{\prime}(\kappa_1,\dots,\kappa_u)}^{w}$ with $\KarrModule{{\bs \alpha}}{{K^{\prime}(\kappa_1,\dots,\kappa_u)}} = \zvs{w}$ such that for all $i=1,\dots,s$ we have~\eqref{eqn:rootOfUnityAndNoIntRelsAlgNums} for some $1 \le\mu_{i}\le\lambda$ and $(u_{i,{1}},\dots,u_{i,{w}})\in~\ZZ^{w}$.\\ 
If $K$ is strongly $\sigma$-computable, then $\zeta$, the $\alpha_i$ and the $\mu_i, u_{i,j}$ can be computed.
\end{mtheorem}

\begin{proof}
There are monic irreducible\footnote{It would suffice to require that the $f_i\in K[\kappa_1,\dots,\kappa_u]\setminus K$ are monic and pairwise co-prime. For practical reasons we require in addition that the $f_i$ are irreducible. For instance, suppose we have to deal with $(\kappa(\kappa+1))^n$. Then we could take $f_1=\kappa(\kappa+1)$ and can adjoin the \piE-monomial $\sigma(t)=f_1\,t$ to model the product. However, if in a later step also the unforeseen products $\kappa^n$ and $(\kappa+1)^n$ arise, one has to split $t$ into two monomials, say $t_1,t_2$, with $\sigma(t_1)=\kappa\,t_1$ and $\sigma(t_2)=(\kappa+1)\,t_2$. Requiring that the $f_i$ are irreducible avoids such undesirable redesigns of an already constructed \rpiE-extension.} pairwise different polynomials $f_1,\dots,f_s$ from $K[\kappa_1,\dots,\kappa_u]$ and elements $c_1,\dots,c_s\in K^*$ such that for all $i$ with $1\leq i\leq s$ we have
\begin{equation}\label{Equ:PolyRep}
\gamma_i=c_i\,f_1^{z_{i,1}}\,f_2^{z_{i,2}}\dots f_s^{z_{i,s}}
\end{equation}
with $z_{i,j}\in\ZZ$. By Theorem~\ref{thm:RPiMonomsReps} there exist $\bs \alpha=(\alpha_1,\dots,\alpha_w)\in ({K^{\prime}}^*)^w$ in an algebraic field extension $K^{\prime}$ of $K$ with $\KarrModule{{\bs \alpha}}{K^{\prime}} = \zvs{w}$ and a root of unity $\zeta\in K^{\prime}$ such that
\begin{equation}\label{Equ:ConstantRep}
c_i=\zeta^{\mu_{i}}\,\alpha_{1}^{u_{i,{1}}}\cdots\alpha_{w}^{u_{i,{w}}}
\end{equation}
holds for some $m_i,u_{i,j}\in\NN$. Hence $\gamma_i=\zeta^{\mu_{i}}\,\alpha_{1}^{u_{i,{1}}}\cdots\alpha_{w}^{u_{i,{w}}}f_1^{z_{i,1}}\,f_2^{z_{i,2}}\dots f_s^{z_{i,s}}.$ Now let $(\nu_1,\dots,\nu_w,\lambda_1,\dots,\lambda_s)\in\ZZ^{w+s}$ with $1=\alpha_1^{\nu_1}\,\alpha_2^{\nu_2}\dots \alpha_w^{\nu_w}\,f_1^{\lambda_1}f_2^{\lambda_2}\dots f_s^{\lambda_s}.$ Since the $f_i$ are all irreducible and the $\alpha_i$ are from $K'\sm\zs$, it follows that $\lambda_1=\dots=\lambda_s=0$. Note that 
$\alpha_1^{\nu_1}\,\alpha_2^{\nu_2}\dots \alpha_w^{\nu_w}=1$ holds in $K^{\prime}$ if and only if it holds in $K^{\prime}(\kappa_1,\dots,\kappa_u)$. Thus by 
$\KarrModule{{\bs \alpha}}{K^{\prime}} = \zvs{w}$ we conclude that $\nu_1=\dots=\nu_w=0$. Consequently,
$\KarrModule{(\alpha_1,\dots,\alpha_w,f_1,\dots,f_s)}{K^{\prime}(\kappa_1,\dots,\kappa_u)}=\zvs{w+s}$.\\
Now suppose that the computational aspects hold. Since one can factorize polynomials in $K[\kappa_1,\dots,\kappa_u]$, the representation~\eqref{Equ:PolyRep} is computable. In particular, the representation~\eqref{Equ:ConstantRep} is computable by Theorem~\ref{thm:RPiMonomsReps}.
This completes the proof. \qed
\end{proof}

\noindent Note that again Remark~\ref{Remark:PiExtforConstantCase} is relevant where $K'(\Lst{\kappa}{1}{u})$ takes over the role of $K'$: using Theorem~\ref{thm:RPiMonomsRepsKRat} in combination with 
Lemma~\ref{lem:transcendentalCriterionForPrdts} we can construct a \piE-extension in which we can rephrase products defined over $\KK$.
Further, we remark that the package~\texttt{NestedProducts} implements this machinery for the case that $K$ is an
algebraic number field. Summarizing, we allow products that depend on extra parameters. This will be used for the multibasic case with $\KK=K(q_1,\dots,q_e)$ for a field $K$ ($K$ might be again, e.g., a rational function field defined over an algebraic number field). We remark further that for the field $\KK=\QQ(\kappa_1,\dots,\kappa_u)$ this result can be accomplished without any field extension, i.e., $\KK'=\KK$; see~\cite{schneider2005product}.

\begin{mexample}[Cont. \ref{exa:SimpProdOverRootOfUnityAndAlgebraicNumsInDiffRing}]\label{exa:SimpProdOverConstantFieldWithParamenters}
	Let $\KK'=K'(\kappa)$ with $K'=\QQ\big((-1)^{\frac{1}{6}},\sqrt{13}\big)$ and consider
	\begin{equation}\label{Equ:ContentProd}
		\myProdWithInnerUnderbrace{k}{1}{n}{-13\sqrt{-13}\,\kappa}{\gamma_1},\quad \myProdWithInnerUnderbrace{k}{1}{n}{\frac{-784\,(\kappa+1)^{2}}{13\,\sqrt{-13}\,(\ii+\sqrt{3})^{4}\,\kappa}}{\gamma_2},\quad \myProdWithInnerUnderbrace{k}{1}{n}{\frac{-17210368\,(\kappa+1)^{5}}{13\,\sqrt{-13}\,(\ii+\sqrt{3})^{10}\,\kappa}}{\gamma_3}
	\end{equation}
	which are instances of the products from \eqref{eqn:HyperGeoPrdtsInDiffRing}. By Example~\ref{exa:SimpProdOverRootOfUnityAndAlgebraicNumsInDiffRing} the products in~\eqref{set:productOverAlgebraicNumbers} can be modeled in the \rpiE-extension $\dField{K'[\vartheta][y_{1},y^{-1}_{1}][y_{2}, y^{-1}_{2}]}$ of $\dField{K'}$. Note that $\kappa, (\kappa+1)\in K[\kappa]\sm K$ are both irreducible over $K$. Thus $\KarrModule{(\sqrt{13},7, \kappa,\kappa+1)}{\KK'}=\zvs{4}$ holds. Consequently by Remark~\ref{Remark:PiExtforConstantCase}, $\dField{\KK'[\vartheta]\geno{y_{1}}\geno{y_{2}}\geno{y_{3}}\geno{y_{4}}}$ is an \rpiE-extension of $\dField{\KK'}$ with $\s(y_{3})=\kappa\,y_{3}$ and $\s(y_{4})=(\kappa+1)\,y_{4}$. Here the \piE-monomials $y_{3}$ and $y_{4}$ model $\kappa^{n}$ and $(\kappa+1)^{n}$, respectively. In particular, for $i=1,2,3$	we get $\sigma(a_i)=\gamma_i\,a_i$ with 
	\begin{align}\label{Equ:GammaShiftRat}
	a_1=\vartheta^{9}\,y_{1}^{3}\,y_3,&& a_2=\tfrac{\vartheta^{11}\,y_{2}^{2}\,y_4^2}{y_{1}^{3}\,y_3},&& a_3=\tfrac{\vartheta^{5}\,y_{2}^{5}\,y_4^5}{y_{1}^{3}\,y_3}.
	\end{align}
In short, $a_1,a_2,a_3$ model the shift behaviors of the products in~\eqref{Equ:ContentProd}, respectively.
\end{mexample}

\subsection{Structural results for single nested \texorpdfstring{\piE-extensions}{Pi-extensions}}\label{Sec:NonConstantRat}

Finally, we focus on products where non-constant polynomials are involved. Similar to 
Theorem~\ref{thm:RPiMonomsRepsKRat} we will use irreducible factors as main building blocks to define our \piE-extensions. The crucial refinement is that these factors are also shift co-prime; compare also~\cite{schneider2005product,Schneider:14}. Here the following two lemmas will be utilized.

\begin{mlemma}\label{lem:ShiftCoPrimeImpliesEmptyKarrModule}
	Let $\dField{\FF(t)}$ be a \pisiSE-extension of $\dField{\FF}$ with $\sigma(t)=\alpha\,t+\beta$ ($\alpha\in\FF^*$ and $\beta=0$ or $\alpha=1$ and $\beta\in\FF$). Let ${\bs f}=(\Lst{f}{1}{s}) \in (\FF[t]\sm\FF)^{s}$. Suppose that 
	\begin{equation}\label{sta:shiftCoPrime}
	\forall\,i,j\,(1\le i < j \le s): \,\shp(f_{i},f_{j}) = 1
	\end{equation}
	holds and that for $i$ with $1\leq i\leq s$ we have that\footnote{We note that~\eqref{sta:InternalshiftCoPrime} could be also rephrased in terms of Abramov's dispersion~\cite{abramov1971summation,bronstein2000solutions}.}
	\begin{equation}\label{sta:InternalshiftCoPrime}
	\tfrac{\sigma(f_i)}{f_i}\in\FF \,\,\vee\,\, \forall\, k\in\ZZ\setminus\{0\}: \gcd(f_i,\sigma^k(f_i))=1.
	\end{equation}
	Then for all $h\in\FF^{*}$ there does not exist $(\Lst{v}{1}{s}) \in \ZZ^{s} \sm \zvs{s}$ and $g \in \FF(t)^{*}$ with  
	\begin{equation}\label{eqn:genMultTelescoping}
	\frac{\s(g)}{g} = \ProdLst{f}{1}{s}{v}\,h.
	\end{equation}
	In particular, $\KarrModule{{\bs f}}{\FF(t)}=\zvs{s}$.  
\end{mlemma}

\begin{proof}
  Suppose that~\eqref{sta:shiftCoPrime} and~\eqref{sta:InternalshiftCoPrime} hold. Now let $h\in\FF^{*}$ and assume that there are a $g\in\FF(t)^{*}$ and  $(\Lst{v}{1}{s})\in\ZZ^{s}\sm\zvs{s}$ with \eqref{eqn:genMultTelescoping}. Suppose that $\beta=0$ and $g=u\,t^m$ for some $m\in\ZZ$ and some $u\in\FF^{*}$. Then $\frac{\sigma(g)}{g}\in\FF$. Hence $\nu_i=0$ for $1\leq i\leq s$ since the $f_i$ are pairwise co-prime by~\eqref{sta:shiftCoPrime}, a contradiction. Thus we can take a monic irreducible factor, say $p\in\FF[t]\sm\FF$ of $g$ 
	where $p\neq t$ if $\beta=0$. In addition, among all these possible factors we can choose one with the property that for $k>0$, $\s^{k}(p)$ is not a factor in $g$.
	Note that this is possible by Lemma~\ref{lem:KarrsSpecification}. Then $\s(p)$ does not cancel in $\frac{\s(g)}{g}$. Thus $\s(p)\,|\,f_{i}$ for some  $i$  with $1\le i \le s$.
	On the other hand, let $r\le 0$ be minimal such that $\s^{r}(p)$ is the irreducible factor in $g$ with the property that $\s^{r}(p)$ does not occur in $\s(g)$. Note that this is again possible by Lemma~\ref{lem:KarrsSpecification}. Then $\s^{r}(p)$ does not cancel in $\frac{\s(g)}{g}$. Therefore, $\s^{r}(p)\,|\,f_{j}$ for some $j$ with $1\le j\le s$.
	Consequently, $\shp(f_{i},f_{j})\neq1$. By~\eqref{sta:shiftCoPrime} it follows that $i=j$. In particular by~\eqref{sta:InternalshiftCoPrime} it follows that 
	$\sigma(f_i)/f_i\in\FF$.  
	By Lemma~\ref{Lemma:Period} this implies $f_i=w\,t^m$ with $m\in\ZZ$, $w\in\FF^*$ and $\beta=0$. In particular, $p=t$, which we have already excluded. In any case, we arrive at a contradiction and conclude that $v_{1}=\cdots=v_{e}=0$. \qed 
\end{proof}

\noindent Note that condition~\eqref{sta:shiftCoPrime} implies that the $f_i$ are pairwise shift-coprime. In addition condition~\eqref{sta:InternalshiftCoPrime} implies that two different irreducible factors in $f_i$ are shift-coprime. The next lemma considers the other direction.

\begin{mlemma} \label{lem:EmptyKarrModuleAndShiftCoPrime}
	Let $\dField{\FF(t)}$ be a {\dfE} of $\dField{\FF}$ with $t$ transcendental over $\FF$ and $\s(t)=\alpha\,t+\beta$ where $\alpha\in \FF^{*}$ and $\beta\in\FF$. Let ${\bs f}=(\Lst{f}{1}{s})\in (\FF[t]\sm\FF)^{s}$ be irreducible monic polynomials. If there are no $(\Lst{v}{1}{s}) \in \ZZ^{s}\sm\zvs{s}$ and $g \in \FF(t)^{*}$ with 
	\begin{equation}\label{eqn:multTelescoping}
	\tfrac{\s(g)}{g} = \ProdLst{f}{1}{s}{v},
	\end{equation}
	i.e., if $\KarrModule{{\bs f}}{\FF(t)} = \zvs{s}$, then \eqref{sta:shiftCoPrime} holds.
\end{mlemma}

\begin{proof}
	Suppose there are $i,j$ with $1\le i < j\le s$ and $\shp(f_{i},f_{j}) \neq 1$. Since $f_i,f_j$ are irreducible, $f_i\sim f_j$. Thus by Lemma~\ref{lem:ConstructShiftEquivPoly} there is a $g \in\FF(t)^{*}$ with $f_{i}=\frac{\s(g)}{g}\,f_{j}$. Hence $\frac{\s(g)}{g} = f_{i}\,f_{j}^{-1}$ and thus we can find a $(\Lst{v}{1}{s}) \in \ZZ^{s} \sm \zvs{s}$ with \eqref{eqn:genMultTelescoping}. \qed
\end{proof}

\noindent Summarizing, we arrive at the following result. 

\begin{mtheorem}\label{imptThm:ShiftCoPrimePolynomialsAsPiExtension}
	Let $\dField{\FF(t)}$ be a \pisiSE-extension of $\dField{\FF}$. Let ${\bs f} = (\Lst{f}{1}{s})\in(\FF[t]\sm\FF)^{s}$ be irreducible monic polynomials. Then the following statements are equivalent. 
	\begin{enumerate} 
		\item $\forall\,i,j : 1\le i < j \le s, \,\shp(f_{i},f_{j}) = 1$. 
		\item There does not exist $(\Lst{v}{1}{s})\in\ZZ^{s}\sm\zvs{s}$ and $g\in\FF(t)^{*}$ with 
		$\frac{\s(g)}{g} = \ProdLst{f}{1}{s}{v}$,
		i.e., $\KarrModule{{\bs f}}{\FF(t)} = \zvs{s}$.
		\item One can construct a \piE-field extension $\dField{\FF(t)(z_{1})\dots(z_{s})}$ of $\dField{\FF(t)}$ with $\s(z_{i})=f_{i}\,z_{i}$, for $1\le i\le s$.
		\item One can construct a \piE-extension $\dField{\FF(t)[z_{1},z^{-1}_{1}]\dots[z_{s},z^{-1}_{s}]}$ of $\dField{\FF(t)}$ with $\s(z_{i})=f_{i}\,z_{i}$, for $1\le i\le s$.
	\end{enumerate}
\end{mtheorem}

\begin{proof}
Since the $f_i$ are irreducible, the condition~\eqref{sta:InternalshiftCoPrime} always holds. Therefore $(1) \Longrightarrow (2)$ follows from Lemma~\ref{lem:ShiftCoPrimeImpliesEmptyKarrModule}. Further, $(2) \Longrightarrow (1)$ follows from Lemma~\ref{lem:EmptyKarrModuleAndShiftCoPrime}. The equivalences between (2), (3) and (4) follow by Lemma~\ref{lem:transcendentalCriterionForPrdts}. \qed
\end{proof}


\subsection{Construction of  \texorpdfstring{\rpiE}{RPi}-extensions for \texorpdfstring{$\ProdExpr(\KK(n))$}{ProdE(KK(n))}}\label{SubSec:HypergeoemtricCase}

Finally, we combine Theorems~\ref{thm:RPiMonomsRepsKRat} and~\ref{imptThm:ShiftCoPrimePolynomialsAsPiExtension} to obtain a \piE-extension in which expressions from  $\ProdExpr(\KK(n))$ can be rephrased in general. In order to accomplish this task, we will show in Lemma~\ref{lem:rPiExtCombined} that the \piE-monomials of the two constructions in the Subsections~\ref{Sec:ConstantRat} and~\ref{Sec:NonConstantRat} can be merged to one \rpiE-extension.
Before we arrive at this result some preparation steps are needed.

\begin{mlemma}\label{Lemma:AdjoinSigmaE}
	Let $\dField{\FF(t)}$ be a \sigmaE-extension of $\dField{\FF}$ with $\sigma(t)=t+\beta$ and let $\dField{\EE}$ be a \piE-extension of $\dField{\FF}$. Then one can construct a \sigmaE-extension $\dField{\EE(t)}$ of $\dField{\EE}$ with $\sigma(t)=t+\beta$.
\end{mlemma}
\begin{proof}
	Let $\dField{\EE}$ be a \piE-extension of $\dField{\FF}$ with $\EE=\FF(t_1)\dots(t_e)$ and suppose that there is a $g\in\EE$ with $\s(g)=g+\beta$. Let $i$ be minimal such that $g\in\FF(t_1)\dots(t_i)$. Since $\FF(t)$ is a \sigmaE-extension of $\FF$, it follows by part (3) of Theorem~\ref{thm:rpiE-Criterion} that there is no $g\in\FF$ with $\s(g)=g+\beta$. Then~\cite[Lemma~4.1]{karr1985theory} implies that $g$ cannot depend on $t_i$, a contradiction. Thus there is no $g\in\EE$ with $\s(g)=g+\beta$ and by part~(3) of Theorem~\ref{thm:rpiE-Criterion} we get the \sigmaE-extension $\dField{\EE(t)}$ of $\dField{\EE}$ with $\sigma(t)=t+\beta$. \qed
\end{proof}

\noindent As a by-product of the above lemma it follows that the mixed ${\bs q}$-multibasic difference field is built by \piE-monomials and one \sigmaE-monomial.

\begin{mcorollary}\label{cor:MixedMultibasicDiffFieldAsPiExt}
	The mixed ${\bs q}$-multibasic diff.\ ring $\dField{\FF}$ with $\FF=\KK(x)(t_{1})\dots(t_{e})$ from Example~\ref{exa:qMixedDF} is a \pisiSE-extension of $\dField{\KK}$. In particular, $\const\dField{\FF}=\KK$.
\end{mcorollary}

\begin{proof}
       Since the elements $\Lst{q}{1}{e}$ are algebraically independent among each other, there are no $g\in\KK^{*}$ and $(\Lst{v}{1}{e})\in\ZZ^{e}\sm\zvs{e}$ with $1=\frac{\s(g)}{g}=\ProdLst{q}{1}{e}{v}$. Therefore by Lemma~\ref{lem:transcendentalCriterionForPrdts}, $\dField{\EE}$ with $\EE=\KK(t_{1})\dots(t_{e})$ is a \piE-extension of $\dField{\KK}$ with $\s(t_{i})=q_{i}\,t_{i}$ for $1\le i\le e$. Since $\dField{\KK(x)}$ is a \sigmaE-extension of $\dField{\KK}$, we can activate Lemma~\ref{Lemma:AdjoinSigmaE} and can construct the \sigmaE-extension $\dField{\EE(x)}$ of $\dField{\EE}$. Note that $\const\dField{\EE(x)}=\const\dField{\EE}$ also implies that $\const\dField{\KK(x)(t_{1})\dots(t_{e})}=\const\dField{\KK(x)}=\KK$. In particular, the \pE-extension $\dField{\KK(x)(t_1)\dots(t_e)}$ of $\dField{\KK(x)}$ is a \piE-extension. Consequently, $\dField{\FF}$ is a \pisiSE-extension of $\dField{\KK}$.\qed
\end{proof}

\begin{mproposition}\label{pro:FindingGMultExt}
	Let $\dField{\FF(t_1)\dots(t_e)}$ be a \pisiSE-extension of $\dField{\FF}$ with
$\s(t_{i})=\alpha_{i}\,t_{i}+\beta_{i}$ where $\beta_{i}\neq0$ or
$\alpha_{i}=1$. Let $f \in
\FF^{*}$. If there is a $g\in\FF(\Lst{t}{1}{e})^{*}$ with
$\frac{\s(g)}{g}=f$, then $g=\omega\,\ProdLst{t}{1}{e}{v}$ where
$\omega\in\FF^{*}$. In particular,  $v_{i}=0$, if $\beta_{i}\neq0$ (i.e., $t_{i}$ is a \sigmaE-monomial) or $v_{i}\in\ZZ$,
if $\beta_{i}=0$ (i.e., $t_{i}$ is a \piE-monomial).
\end{mproposition}

\begin{proof}
	See \cite[Corollary 2.2.6, pp. 76]{schneider2001symbolic}. \qed
\end{proof}

\begin{mlemma}\label{lem:rPiExtCombined}
	Let $\dField{\KK(x)}$ be the rational difference field with 
	$\s(x)=x+1$ and let $\dField{\KK(x)[z_{1},z_{1}^{-1}]\dots[z_{s},z_{s}^{-1}]}$ be a 
	\piE-extension of $\dField{\KK(x)}$ as given in
	Theorem~\ref{imptThm:ShiftCoPrimePolynomialsAsPiExtension} (item~(4)). Further, let $\KK^{\prime}$ be an algebraic field extension of $\KK$ and let
	$\dField{\KK^{\prime}[y_{1},y_{1}^{-1}]\dots[y_{w},y_{w}^{-1}]}$ be a 
	\piE-extension of $\dField{\KK^{\prime}}$ with $\frac{\s(y_i)}{y_i}\in\KK^{\prime}\setminus\{0\}$. 
	Then the difference ring 
	$\dField{\EE}$ with $\EE=\KK^{\prime}(x)[y_{1}, 
	y^{-1}_{1}]\dots[{y_{w}},y^{-1}_{w}][z_{1},z^{-1}_{1}]\dots[z_{s},z^{-1}_{s}]$ is a \piE-extension of $\dField{\KK^{\prime}(x)}$.
	Furthermore,
	the \aE-extension $\dField{\EE[\vartheta]}$ of $\dField{\EE}$ with 
	$\s(\vartheta)=\zeta\,\vartheta$ of order $\lambda$ is an $R$-extension.
\end{mlemma}

\begin{proof}
	By iterative application of \cite[Corollary $2.6$]{schneider2017summation} it follows that $\dField{\FF}$ is a \piE-field extension of $\dField{\KK^{\prime}}$ with $\FF=\KK^{\prime}(y_{1})\dots(y_{w})$. Note that $\dField{\KK^{\prime}(x)}$ is a \sigmaE-extension of $\dField{\KK^{\prime}}$. Thus by Lemma~\ref{Lemma:AdjoinSigmaE} $\dField{\FF(x)}$ is a \sigmaE-extension of $\dField{\FF}$. We will show that $\dField{\HH}$ with $\HH=\FF(x)(z_{1})\dots(z_{s})$ forms a \piE-extension of $\dField{\FF(x)}$. Since $\dField{\KK(x)[z_{1},z_{1}^{-1}]\dots[z_{s},z_{s}^{-1}]}$ is a \piE-extension of $\dField{\KK(x)}$ as given in Theorem~\ref{imptThm:ShiftCoPrimePolynomialsAsPiExtension} (item~(4)), we conclude that also (item 2) of the theorem, i.e., condition~\eqref{sta:shiftCoPrime} holds. Now suppose that there is a $g\in\FF(x)^*$ and $(l_1,\dots,l_s)\in\ZZ^s$ with $\frac{\sigma(g)}{g}=f_1^{l_1}\dots f_s^{l_s}\in\KK(x)$. By reordering of the generators in $\dField{\FF(x)}$ we get the \piE-extension $\dField{\KK^{\prime}(x)(y_1)\dots(y_w)}$ of $\dField{\KK^{\prime}(x)}$. By Proposition~\ref{pro:FindingGMultExt} we conclude that	$g=q\,y_1^{n_1}\dots y_w^{n_w}$ with $n_1,\dots,n_w\in\ZZ$ and $q\in\KK^{\prime}(x)^*$. Thus $\frac{\sigma(g)}{g}=\frac{\sigma(q)}{q}\alpha_1^{n_1}\dots\alpha_w^{n_w}$ and hence 
	\begin{equation}\label{Equ:RelationInFieldBelow}
	\frac{\sigma(q)}{q}=u\,f_1^{l_1}\dots f_s^{l_s}
	\end{equation}
	for some $u\in{\KK^{\prime}}^*$.	
	Now suppose that $f_i,f_j\in\KK[x]\subset\FF[x]$ with $i\neq j$ are not shift-coprime in $\FF[x]$. Then there are a $k\in\ZZ$ and $v,\tilde{f}_i,\tilde{f}_j\in\FF[x]\setminus\FF$ with $\sigma^k(f_j)=v\,\tilde{f}_j$ and $f_i=v\,\tilde{f}_i$. But this implies that $f_i\,\frac{\tilde{f}_j}{\tilde{f}_i}=\sigma^k(f_j)\in\KK[x]$. Since $f_i,\sigma(f_j)\in\KK[x]$, this implies that $\frac{\tilde{f}_j}{\tilde{f}_i}\in\KK(x)$. Since $f_i,\sigma(f_j)$ are both irreducible in $\KK[x]$ we conclude that $\frac{\tilde{f}_j}{\tilde{f}_i}\in\KK$. Consequently, $f_i$ and $f_j$ are also not shift-coprime in $\KK[x]$, a contradiction. Thus the condition~\eqref{sta:shiftCoPrime} holds not only in $\KK[x]$ but also in $\FF[x]$. Now suppose that $\gcd(f_i,\sigma^k(f_i))\neq1$ holds in $\FF[x]$ for some $k\in\ZZ\setminus\{0\}$. By the same arguments as above, it follows that $\sigma^k(f_i)=u\,f_i$ for some $u\in\KK$. By Lemma~\ref{Lemma:Period} we conclude that $f_i=t$ and $\sigma(t)/t\in\FF$. Therefore also condition~\eqref{sta:InternalshiftCoPrime} holds. Consequently, we can activate Lemma~\ref{lem:ShiftCoPrimeImpliesEmptyKarrModule} and it follows for~\eqref{Equ:RelationInFieldBelow} that $l_1=\dots=l_m=0$. Consequently, we can apply Theorem~\ref{imptThm:ShiftCoPrimePolynomialsAsPiExtension} (equivalence (2) and (3)) and conclude that $\dField{\HH}$ is a \piE-extension of $\dField{\FF(x)}$. Finally, consider the \aE-extension $\dField{\HH[\vartheta]}$ of $\dField{\HH}$  with $\s(\vartheta)=\zeta\,\vartheta$ of order $\lambda$. By Lemma~\ref{Lemma:RExtOverPiSi} it is an \rE-extension. Finally, consider the sub-difference ring $\dField{\HH}$ with $\HH=\KK^{\prime}(x)[y_{1}, y^{-1}_{1}]\dots[{y_{w}},y^{-1}_{w}][z_{1},z^{-1}_{1}]\dots[z_{s},z^{-1}_{s}][\vartheta]$ which is an \apE-extension of $\dField{\KK^{\prime}(x)}$. Since $\const\dField{\HH}=\const\dField{\KK^{\prime}(x)}=\KK^{\prime}$, it is an \rpiE-extension. \qed
\end{proof}

\begin{remark}\label{remk:shiftR-Monomial}
        Take 
        $\dField{\HH}$ with $\HH=\KK^{\prime}(x)[y_{1}, y^{-1}_{1}]\dots[{y_{w}},y^{-1}_{w}][z_{1},z^{-1}_{1}]\dots[z_{s},z^{-1}_{s}][\vartheta]$ as constructed in Lemma~\ref{lem:rPiExtCombined}.       
        Then one can rearrange the generators in $\HH$ and gets the \rpiE-extension
	$\dField{\KK^{\prime}(x)[\vartheta][y_{1}, y^{-1}_{1}]\dots[{y_{w}},y^{-1}_{w}][z_{1},z^{-1}_{1}]\dots[z_{s},z^{-1}_{s}]}$ of $\dField{\KK^{\prime}(x)}$.
\end{remark}

With these considerations we can derive
the following theorem that enables one to construct \rpiE-extension for $\ProdExpr(\KK(n))$.

\begin{mtheorem}\label{imptThm:CombinedShiftCoprimeAndContentsAsRPIExtension}
	Let $\dField{\KK(x)}$ be the rational {\df} with $\s(x)=x+1$ where $\KK=K(\kappa_1,\dots,\kappa_u)$ is a rational function field over a field $K$. Let $\Lst{h}{1}{m}\in\KK(x)^{*}$. Then one can construct an \rpiE-extension $\dField{\AA}$ of $\dField{\KK^{\prime}(x)}$ with 
	\[
	\AA = \KK^{\prime}(x)[\vartheta][y_{1}, y^{-1}_{1}] \dots [{y_{w}}, y^{-1}_{w}][z_{1}, z^{-1}_{1}]\dots[ z_{s}, z^{-1}_{s}]
	\]
	and $\KK^{\prime}=K'(\kappa_1,\dots,\kappa_u)$ where $K^{\prime}$ is an algebraic field extension of $K$ such that
	\begin{itemize}
		\item $\s(\vartheta)=\zeta\,\vartheta$ where $\zeta\in K^{\prime}$ is a  $\lambda$-th root of unity;
		\item $\dfrac{\s(y_{j})}{y_{j}}=\alpha_{j}\in{\KK^{\prime}}\sm\zs$ for $1\le j\le w$ where the $\alpha_j$ are not roots of unity; 
		\item $\dfrac{\s(z_{\nu})}{z_{\nu}}=f_{\nu}\in\KK[x]\sm\KK$ are irreducible\footnote{Instead of irreducibility it would suffice to require that the $f_i$ satisfy property~\eqref{sta:InternalshiftCoPrime} for $1\leq i\leq s$. Restricting to irreducible factors simplifies the proof/construction below. In addition, it also turns the obtained difference ring to a rather robust version. E.g., suppose that one takes $f_1=x(2\,x+1)$ leading to the \piE-monomial $z$ with $\sigma(z)=x\,(2x+1)\,z$. Further, assume that one has to introduce unexpectedly also $x$ and $2\,x+1$ in a later computation. Then one has to split $z$ to the \piE-monomials $z_1,z_2$ with $\sigma(z_1)=x\,z_1$ and $\sigma(z_2)=(2x+1)\,z_2$, i.e., one has to redesign the already constructed \rpiE-extension. In short, irreducible polynomials provide an \rpisiSE-extension which most probably need not be redesigned if other products have to be considered.} and shift co-prime for $1\le\nu\le s$;
	\end{itemize}
	holds with the following property. For $1\leq i\leq m$ one can define 
	\begin{equation}\label{Equ:giFormRat}
		g_{i}=r_{i}\,\vartheta^{\mu_{i}}\,y_{1}^{u_{i,{1}}}\cdots y_{w}^{u_{i,{w}}}\,z_{1}^{v_{i,{1}}}\cdots z_{s}^{v_{i,{s}}}\in\AA
	\end{equation}
		with $0 \leq \mu_{i} \leq \lambda-1$, $u_{i,{1}},\dots,u_{i,{w}}$,$v_{i,{1}},\dots,v_{i,{s}} \in \ZZ$ and  $r_{i}\in\KK(x)^{*}$ such that 
	\begin{equation}\label{Equ:giShiftRat}
	\s(g_{i})=\s(h_{i})\,g_{i}.
	\end{equation}
	If $K$ is strongly $\sigma$-computable, the components of the theorem can be computed.
	\end{mtheorem}
\begin{proof}
	For $1\leq i\leq m$ we can take pairwise different monic irreducible polynomials $\Lst{p}{1}{n}\in\KK[x]\sm\KK$ $\Lst{\gamma}{1}{m}\in\KK^{*}$ and $\Lstc{d}{i}{1}{n}\in\ZZ$ such that $\sigma(h_{i})=\gamma_{i}\,p_{1}^{d_{i,1}}\cdots p_{n}^{d_{i,n}}$ holds. Note that this representation is computable if $K$ is strongly $\s$-computable. By Theorem~\ref{thm:RPiMonomsRepsKRat} it follows that there are a $\lambda$-th root of unity $\zeta\in K^{\prime}$, elements $\bs{\alpha}=(\Lst{\alpha}{1}{w}) \in ({\KK^{\prime}}^{*})^{w}$ with $\KarrModule{\bs{\alpha}}{\KK^{\prime}}=\zvs{w}$ and integer vectors $(u_{i,{1}},\dots,u_{i,{w}})\in\ZZ^{w}$ and $\mu_{i}\in\NN$ with $0\le\mu_{i}<\lambda$ such that $\gamma_{i}=\zeta^{\mu_{i}}\,\alpha_{1}^{u_{i,{1}}}\cdots\alpha_{w}^{u_{i,{w}}}$ holds for all $1 \le i\le m$.  Obviously, the $\alpha_j$ with $1\leq j\leq w$ are not roots of unity. 
	By Lemma~\ref{lem:transcendentalCriterionForPrdts} 
	we get the \piE-extension $\dField{\KK^{\prime}[y_{1},y^{-1}_{1}]\dots[y_{w},y^{-1}_{w}]}$ of $\dField{\KK^{\prime}}$ with $\s(y_{j})=\alpha_{j}\,y_{j}$ for $1\le j\le w$ and we obtain 
	\begin{equation}\label{Equ:ConstantPart}
	a_i=\vartheta^{\mu_{i}}\,y_{1}^{u_{i,{1}}}\cdots y_{w}^{u_{i,{w}}}
	\end{equation}
	with 
	\begin{equation}\label{Equ:ShiftConstantPart}
	\sigma(a_i)=\gamma_i\,a_i
	\end{equation}
	for $1\leq i\leq m.$
	Next we proceed with the non-constant polynomials in $\KK[x]\sm\KK$. Set $\mathcal{I}=\{\Lst{p}{1}{n}\}$. Then there is a partition $\mathcal{P}=\{\mathcal{E}_{1},\dots,\mathcal{E}_{s}\}$ of $\mathcal{I}$ with respect to $\sim_{\s}$, i.e., each $\mathcal{E}_{i}$ contains precisely the shift equivalent elements of $\mathcal{P}$. Take a representative from each equivalence class $\mathcal{E}_{i}$ in $\mathcal{P}$ and collect them in $\mathcal{F}:=\{\Lst{f}{1}{s}\}$. Since each $f_{i}$ is shift equivalent with every element of $\mathcal{E}_{i}$, it follows by Lemma~\ref{lem:ConstructShiftEquivPoly} that for all $h\in\mathcal{E}_{i}$, there is a rational function $r\in\KK(x)^{*}$ with $h=\frac{\s(r)}{r}\,f_{i}$ for $1\le i \le s$. Consequently, we get $r_{i}\in\KK(x)^{*}$ and $v_{i,j}\in\ZZ$ with 
	$p_{1}^{d_{i,1}}\cdots p_{n}^{d_{i,n}}=\frac{\s(r_{i})}{r_{i}}\,f_{1}^{v_{i,1}}\cdots f_{s}^{v_{i,s}}$
	for all $1\le i \le s$.	Further, by this construction, we know that $\shp(f_{i},f_{j})=1$ for $1\le i< j\le s$. Therefore, it follows by Theorem~\ref{imptThm:ShiftCoPrimePolynomialsAsPiExtension} that we can construct the \piE-extension $\dField{\KK(x)[z_{1},z^{-1}_{1}]\dots[z_{s},z^{-1}_{s}]}$ of $\dField{\KK(x)}$ with $\s(z_{i})=f_{i}\,z_{i}$. Now define
	$b_i=r_{i}\,t_{1}^{v_{i,1}}\cdots t_{s}^{v_{i,s}}$.
	Then we get 
	\begin{equation}\label{Equ:Sigmabi}
	\sigma(b_i)=p_{1}^{d_{i,1}}\cdots p_{n}^{d_{i,n}}\,b_i.
	\end{equation}
	Finally, by Lemma~\ref{lem:rPiExtCombined} and Remark~\ref{remk:shiftR-Monomial} we end up at the \rpiE-extension $\dField{\AA}$ of $\dField{\KK'(x)}$ with $\AA = \KK^{\prime}(x)[\vartheta][y_{1},y^{-1}_{1}]\dots[{y_{w}},y^{-1}_{w}][z_{1},z^{-1}_{1}]\dots[z_{e},z^{-1}_{e}]$ with $\s(\vartheta)=\zeta\,\vartheta$, $\s(y_{j})=\alpha_{j}\,y_{j}$ for $1\le j\le w$ and $\s(z_{i})=f_{i}\,z_{i}$ for $1\leq i\leq s$.\\
	Now let $g_i$ be as defined in~\eqref{Equ:giFormRat}. Since $g_i=a_i\,b_i$ with~\eqref{Equ:ShiftConstantPart} and~\eqref{Equ:Sigmabi},
	we conclude that~\eqref{Equ:giShiftRat} holds.
	If $K$ is strongly $\s$-computable, all the ingredients delivered by Theorems~\ref{thm:RPiMonomsReps} and~\ref{imptThm:ShiftCoPrimePolynomialsAsPiExtension} can be computed. This completes the proof.\qed  
	\end{proof}

\begin{mexample}\label{exa:SimpNonConstHyperGeoTermsInDiffRing}
	Let $\KK=K(\kappa)$ be a rational function field over the algebraic number field $K=\QQ(\ii+\sqrt{3},\sqrt{-13})$ and take the rational {\df} $\dField{\KK(x)}$ with $\s(x)=x+1$. Given~\eqref{Equ:RatXh}, we can write
	\begin{align*}
	\sigma(h_1)=\gamma_1\,p_1^{-1},&& \sigma(h_2)=\gamma_2\,p_1\,p_2^{-2},&& \sigma(h_3)=p_1\,p_2^{-5}
	\end{align*}
	where the $\gamma_1,\gamma_2,\gamma_3$ are given in~\eqref{Equ:ContentProd} and where we set $p_1=x+1$, $p_2=x+3$ as our monic irreducible polynomials.	
 Note that $p_1$ and $p_2$ are shift equivalent: $\gcd(p_2,\s^{2}(p_1))=p_2$. Consequently both factors fall into the same equivalence class $\mathcal{E}=\{\s^{k}(x+1)\,|\,k\in\ZZ\}=\{\s^{k}(x+3)\,|\,k\in\ZZ\}$.
Take $p_1=x+1$ as a representative of the  equivalence class $\mathcal{E}$. Then by Lemma~\ref{lem:ConstructShiftEquivPoly}, it follows that there is a $g\in\KK(x)^{*}$ that connects the representatives to all other elements in their respective equivalence classes. In particular with our example we have $x+3 = \frac{\s(g)}{g}\,(x+1)$ where  $g = (x+1)\,(x+2)$. 
By Theorem~\ref{imptThm:ShiftCoPrimePolynomialsAsPiExtension}, it follows that $\dField{\KK(x)[z,z^{-1}]}$ is a \piE-extension of the {\df} $\dField{\KK(x)}$ with $\s(z) = (x+1)\,z$. In this ring, the \piE-monomial $z$ models $n!$.
By Lemma~\ref{lem:rPiExtCombined} the constructed difference rings $\dField{\KK'[\vartheta]\geno{y_{1}}\geno{y_{2}}\geno{y_{3}}\geno{y_{4}}}$ and $\dField{\KK(x)\geno{z}}$ from Example~\ref{exa:SimpProdOverConstantFieldWithParamenters} with
$\KK'= \QQ\big((-1)^{\frac{1}{6}}, \sqrt{13})(\kappa)$
can be merged into a single \rpiE-extension $\dField{\AA}$ where $\AA$ is \eqref{alg:RPiExtension} with the automorphism defined accordingly.
Further note that for $b_1=\frac1z$, $b_2=\frac1{(x+1)^2(x+2)^2\,z}$, $b_3=\frac1{(x+1)^5(x+2)^5z^4}$ we have that $\sigma(b_1)=p_1^{-1}\,b_1$, $\sigma(b_2)=p_1\,p_2^{-2}\,b_2$ and $\sigma(b_3)=p_1\,p_2^{-5}\,b_3$. Taking $a_1,a_2,a_3$ in~\eqref{Equ:GammaShiftRat} with $\sigma(\gamma_i)=a_i\,\gamma_i$ for $i=1,2,3$, we define $g_i=a_i\,b_i$ for $i=1,2,3$ and obtain $\sigma(g_i)=\sigma(h_i)\,g_i$. Note that the $g_i$ are precisely the elements given in~\eqref{eqn:SimpHyperGeoInDiffRing}.
\end{mexample}

Now we are ready to prove Theorem~\ref{thm:ProblemRMHPE} for the special case $\ProdExpr(\KK(n)$. Namely, consider the products 
$$P_{1}(n) = \myProduct{k}{\ell_{1}}{n}{h_{1}(k)},\dots,P_{m}(n) = \myProduct{k}{\ell_{m}}{n}{h_{m}(k)}\ \in\Prod(\KK(n))$$
with $\ell_i\in\NN$ where $\ell_i\geq Z(h_i)$. Further, suppose that we are given the components as claimed in Theorem~\ref{imptThm:CombinedShiftCoprimeAndContentsAsRPIExtension}.\\
$\bullet$ Now take the difference ring embedding  $\tau(\frac{a}{b})=\langle\ev\big(\frac{a}{b}, n \big)\rangle_{n \geq 0}$ for  $a,b\in\KK[x]$ where $\ev$ is defined in~\eqref{Equ:EvRatDefinition}. Then by iterative application of part~(2) of Lemma~\ref{lem:injectiveHom} we can construct the $\KK^{\prime}$-homomorphism $\tau:\AA\to\ringOfEquivSeqs[\KK^{\prime}]$ determined by the homomorphic extension of
\begin{itemize}
\item $\tau(\vartheta)=\langle\zeta^n\rangle_{n\geq0}$, 
\item $\tau(y_i)=\langle\alpha_i^n\rangle_{n\geq0}$ for $1\leq i\leq w$ and 
\item $\tau(z_i)=\langle\prod_{k=\ell'_i}^nf_i(k-1)\rangle_{n\geq0}$ with $\ell'_i=Z(f_i)+1$ for $1\leq i\leq s$.
\end{itemize}
In particular, since $\dField{\AA}$ is an \rpiE-extension of $\dField{\KK'(x)}$, it follows by part (3) of Lemma~\ref{lem:injectiveHom} that $\tau$ is a $\KK^{\prime}$-embedding.\\ 
$\bullet$ Finally, define for $1\leq i\leq m$ the product expression
$$G_i(n)=r_{i}(n)\,(\zeta^n)^{\mu_{i}}\,(\alpha_{1}^n)^{u_{i,{1}}}\cdots (\alpha_{w}^n)^{u_{i,{w}}}\,(\myProduct{k}{\ell'_1}{n}{f_1(k-1)})^{v_{i,{1}}}\cdots (\myProduct{k}{\ell'_s}{n}{f_s(k-1)})^{v_{i,{s}}}$$
from $\Prod(\KK'(n))$ and define $\delta_i=\max(\ell_i,\ell'_1,\dots,\ell'_s,Z(r_i))$.
Observe that $\tau(g_i)=\langle G'_i(n)\rangle_{n\geq0}$ with
\begin{equation}\label{Equ:GiPrime}
G'_i(n)=\begin{cases} 0 &\text{ if }0\leq n<\delta_i\\ G_i(n) & \text{ if }n\geq\delta_i.\end{cases}
\end{equation}
By~\eqref{Equ:giShiftRat} and the fact that $\tau$ is a $\KK'$-embedding, it follows that $S(\tau(g_i))=S(\tau(h_i))\,\tau(g_i)$. In particular, for $n\geq\delta_i$ we have that $G_i(n+1)=h_i(n+1)\,G_i(n)$. By definition, we have $P_i(n+1)=h_i(n+1)\,P_i(n)$ for $n\geq\delta_i\geq\ell_i$. Since $G_i(n)$ and $P_i(n)$ satisfy the same first order recurrence relation, they differ only by a multiplicative constant. Namely, setting $Q_i(n)=c\,G_i(n)$ with $c=\frac{P_i(\delta_i)}{G_i(\delta_i)}\in(\KK^{\prime})^*$ we have that $P_i(\delta_i)=Q_i(\delta_i)$ and thus $P_i(n)=Q_i(n)$ for all $n\geq\delta_i$. This proves part (1) of Theorem~\ref{thm:ProblemRMHPE}.\\
Since $\tau$ is a $\KK'$-embedding, the sequences $$\funcSeqA{\alpha_{1}^{n}}{n},\dots,\funcSeqA{\alpha_{w}^{n}}{n}, \funcSeqA{\myProduct{k}{\ell'_{1}}{n}{f_{1}(k-1)}}{n},\dots,\funcSeqA{\myProduct{k}{\ell'_{s}}{n}{f_{s}(k-1)}}{n}$$ 
are among each other algebraically independent over $\tau\big(\KK'(x)\big)\big[\funcSeqA{\zeta^{n}}{n}\big]$ which proves property $(2)$ of Theorem~\ref{thm:ProblemRMHPE}.

\begin{mexample}[Cont. Example~\ref{exa:SimpNonConstHyperGeoTermsInDiffRing}]\label{exa:constRPiExts}
We have $\sigma(g_i)=\sigma(h_i)\,g_i$ for $i=1,2,3$ where the $h_i$ and $g_i$ are given in~\eqref{Equ:RatXh} and~\eqref{eqn:SimpHyperGeoInDiffRing}, respectively. For the $\KK'$-embedding defined in Example~\ref{exa:ConstructRPiExt} we obtain $c_i\,\tau(g_i)=\langle P_i(n)\rangle_{n\geq0}$ with $P_i(n)=\prod_{k=1}^nh_i(k)$ and $c_1=1$, $c_2=4$ and $c_3=32$. Since there are no poles in the $g_i$ we conclude that for
\begin{align*}
G_1(n)&=\tfrac{\rootOfUnitySeqExp{6}{n}{9}\,\algSeqExp{13}{n}{3}\,\kappa^{n}}{n!},\quad\quad
&G_2(n)=\tfrac{4\,\rootOfUnitySeqExp{6}{n}{11}\,\intSeqExp{7}{n}{2}\,\polySeqExp{\kappa+1}{n}{2}}{(n+1)^{2}\,(n+2)^{2}\,\algSeqExp{13}{n}{3}\,\kappa^{n}\,n!},\\[-0.2cm]
G_3(n)&=\tfrac{32\,\rootOfUnitySeqExp{6}{n}{5}\,\intSeqExp{7}{n}{5}\,\polySeqExp{\kappa+1}{n}{5}}{(n+1)^{5}\,(n+2)^{5}\,\algSeqExp{13}{n}{3}\,\kappa^{n}\,\polySeq{n!}{4}}
\end{align*}
we have $P_i(n)=G_i(n)$ for $n\geq1$. With $P(n)=P_1(n)+P_2(n)+P_3(n)$ (see~\eqref{eqn:HyperGeoPrdtsInDiffRing}) and $Q(n)=G_1(n)+G_2(n)+G_3(n)$ (see~\eqref{eqn:SimpHyperGeoPrdtsSeqSetting}) we get $P(n)=Q(n)$ for $n\geq1$.
\end{mexample}

\section{Construction of
\texorpdfstring{\rpiE}{RPi}-extensions for
\texorpdfstring{$\ProdExpr(\KK(n,\bs{q}^{n}))$}{ProdExpr(K(n,qn))}}\label{Sec:mixedHypergeometricCase}

In this section we extend the results of
Theorem~\ref{imptThm:CombinedShiftCoprimeAndContentsAsRPIExtension} to the case $\ProdExpr(\KK(n,\bs{q}^{n}))$.
As a consequence, we will also prove Theorem~\ref{thm:ProblemRMHPE}.

\subsection{Structural results for nested \texorpdfstring{\piE-extensions}{Pi-extensions}}\label{Sec:NestedPiSigmaStructural}

In the following let $\dField{\FF_e}$ be a \pisiSE-extension of $\dField{\FF_0}$ with $\FF_e=\FF_{0}(\myt_1)\dots(\myt_e)$ with $\sigma(\myt_i)=\alpha_i\,\myt_i+\beta_i$ and $\alpha_i\in\FF_{0}^*$, $\beta_i\in\FF_{0}$ for $1\leq i\leq e$. We set $\FF_i=\FF_{0}(\myt_1)\dots(\myt_{i})$ and thus $\dField{\FF_{i-1}(\myt_i)}$ is a \pisiSE-extension of $\dField{\FF_{i-1}}$ for $1\leq i\leq e$.\\
We will use the following notations. For ${\bs f}=(f_1,\dots,f_s)$ and $h$ we 
write ${\bs f}\wedge h =\left(f_1,\dots,f_s, h\right)$
for the concatenation of $\bs{f}$ and $h$. Moreover, the concatenation
of $\bs{f}$ and $\bs{h}=(h_1,\dots,h_u)$ is denoted by ${\bs f}\wedge{\bs
h}=\left(f_1,\dots,f_s,h_1,\dots,h_u\right)$.

\begin{mlemma}\label{lem:MixedCasePiSigmaConstruction}
	Let $\dField{\FF_{e}}$ be a \pisiSE-extension of $\dField{\FF_0}$ as above.
If the polynomials in
$\bs{f_{i}}\in(\FF_{i-1}[\myt_{i}]\sm\FF_{i-1})^{s_{i}}$ for $1 \le i \le e$ and $s_i\in\NN\setminus\{0\}$ satisfy conditions~\eqref{sta:shiftCoPrime} and~\eqref{sta:InternalshiftCoPrime} with $\FF$ replaced by $\FF_{i-1}$, then
$\KarrModule{\bs{f_{1}}\wedge\dots\wedge\bs{f_{e}}}{\FF_{e}}=\zvs{s}$
where $s = s_{1}+\cdots+s_{e}$.
\end{mlemma}

\begin{proof}
Let
$\bs{v_{1}}\in\ZZ^{s_{1}},\dots,\bs{v_{e}}\in\ZZ^{s_{e}}$ and
$g\in\FF^{*}_{e}$ with
	\begin{align} \label{eqn:MultiMixedCasePMT}
	\frac{\s(g)}{g}={{\bs{f}}^{{\bs{v}}_{\bs{1}}}_{\bs{1}}} \,
{{\bs{f}}^{{\bs{v}}_{\bs{2}}}_{\bs{2}}} \cdots
{{\bs{f}}^{{\bs{v}}_{\bs{e}}}_{\bs{e}}}  .
	\end{align}
Suppose that not all $\bs{v_{i}}$ with $1\leq i\leq e$ are zero-vectors and let $r$ be maximal such that
$\bs{v_{r}} \neq \zv{s_{r}}$. Thus the right hand side of~\eqref{eqn:MultiMixedCasePMT} is in $\FF_r$ and it follows by Proposition~\ref{pro:FindingGMultExt} that $g=\gamma\,\ProdLst{\myt}{r+1}{e}{u}$ with $\gamma\in\FF_{r}^{*}$ and $u_i\in\ZZ$; if $\myt_i$ is a \sigmaE-monomial, then $u_i=0$. Hence
$$\frac{\s(\gamma)}{\gamma}
=\ProdLst{\alpha}{r+1}{e}{-u}\,{{\bs{f}}^{{\bs{v}}_{\bs{1}}}_{\bs{1}}} \cdots
{{\bs{f}}^{{\bs{v}}_{\bs{r-1}}}_{\bs{r-1}}} {{\bs{f}}^{{\bs{v}}_{\bs{r}}}_{\bs{r}}}=h\,\vectorexpsubscript{f}{v}{r}$$ 
with $h=\ProdLst{\alpha}{r+1}{e}{-u}\,\vectorexpsubscript{f}{v}{1} \cdots
\vectorexpsubscript{f}{v}{r-1}\in\FF_{r-1}^*$.
Since conditions~\eqref{sta:shiftCoPrime} and~\eqref{sta:InternalshiftCoPrime} with $\FF$ replaced by $\FF_{r-1}$ hold for these entries, Lemma~\ref{lem:ShiftCoPrimeImpliesEmptyKarrModule} is applicable and we get 
${\bs v_r}=\zv{s_{r}}$, a contradiction.  \qed
\end{proof}

\noindent We can now formulate a generalization of
Theorem~\ref{imptThm:ShiftCoPrimePolynomialsAsPiExtension} for nested \pisiSE-extensions. 

\begin{mtheorem}\label{imptThm:MixedCasePeriodZeroShiftCoPrimePolynomialsAsPiExtension}
	Let $\dField{\FF_{e}}$ be the \pisiSE-extension of $\dField{\FF_0}$ from above. For $1\leq i\leq e$, let
$\bs{f_{i}} =
(\Lstc{f}{i}{1}{s_{i}})\in(\FF_{i-1}[\myt_{i}]\sm\FF_{i-1})^{s_{i}}$ with $s_i\in\NN\setminus\{0\}$ containing irreducible monic polynomials.
Then the following statements are equivalent.
	\begin{enumerate} 
		\item $\shp(f_{i,{j}},f_{i,{k}})=1$ for all $1\le i\le e$ and $1 \le j <
k \le s_{i}$.
		\item There does not exist
$\bs{v_{1}}\in\ZZ^{s_{1}},\,\dots,\,\bs{v_{e}}\in\ZZ^{s_{e}}$ with $\bs{v_1}\wedge\dots\wedge\bs{v_e}\neq\bs{0}_{s}$
and $g \in \FF^{*}_{e}$ such that
		\[
		\dfrac{\s(g)}{g}=\vectorexpsubscript{f}{v}{1}\cdots\vectorexpsubscript{f}{v}{e}
		\]
		holds. That is,
$\KarrModule{\bs{f_{1}}\wedge\dots\wedge\bs{f_{e}}}{\FF_{e}}=\zvs{s}$
where $s=s_{1}+\cdots+s_{e}$.
		\item One can construct a \piE-field extension
$\dField{\FF_{e}(z_{1,{1}})\dots(z_{1,{s_{1}}})\dots(z_{e,{1}})\dots(z_{e,{s_{e}}})}$
of $\dField{\FF_{e}}$ with $\s(z_{i,{k}})=f_{i,{k}}\,z_{i,{k}}$ for $1
\le i \le e$ and $1 \le k \le s_{i}$.
		\item One can construct a \piE-extension $\dField{\EE}$ of $\dField{\FF_e}$ with the ring of Laurent polynomials $\EE=\FF_{e}[z_{1,{1}},z^{-1}_{1,{1}}]\dots[z_{1,{s_{1}}},z^{-1}_{1,{s_{1}}}]\dots[
z_{e,{1}},z^{-1}_{e,{1}}] \dots [z_{e,{s_{e}}}, z^{-1}_{e,{s_{e}}}]$ and $\s(z_{i,{k}}) = f_{i,{k}}\,z_{i,{k}}$ for $1
\le i \le e$ and $1 \le k \le s_{i}$.
	\end{enumerate}
\end{mtheorem}

\begin{proof}
	$(1) \Longrightarrow (2)$: Since the entries in $\bs{f_{i}}$ are shift co-prime and irreducible, conditions~\eqref{sta:shiftCoPrime} and~\eqref{sta:InternalshiftCoPrime} hold ($\FF$ replaced by $\FF_{i-1}$) and thus statement (2) follows by Lemma~\ref{lem:MixedCasePiSigmaConstruction}.

\noindent $(2) \Longrightarrow (3)$:
We prove the statement by induction on the number of \pisiSE-monomials $\myt_1,\dots,\myt_e$. For $e=0$ nothing has to be shown. Now suppose that the implication has been shown for $\FF_{e-1}$, $e\geq0$ and set $\EE=\FF_{e-1}(\twoargstwosamesubscript{z}{1},\dots,\threeargstwosamesubscript{z}{1}{s})\dots(\threeargssubscript{z}{e-1}{1},\dots,\threeargstwosamesubscript{z}{e-1}{s})$. Suppose that $\dField{\EE(\threeargssubscript{z}{e}{1},\dots,\threeargstwosamesubscript{z}{e}{s})}$ is not a \piE-extension of $\dField{\EE}$ and let $\ell$ be minimal with $s_{\ell}<s_{e}$ such that $\dField{\EE(\threeargssubscript{z}{e}{1},\dots,\fourargssubscript{z}{e}{s}{\ell})}$ is a \piE-extension of $\dField{\EE}$. Then by Theorem~\ref{thm:rpiE-Criterion}(1) there are a $v_{e,{s_{\ell}}}\in\ZZ\sm\zs$ and an $\omega \in \EE(\threeargssubscript{z}{e}{1},\dots,\fourargssubscript{z}{e}{s}{j})^{*}$ with $j=\ell-1$ such that $\s(\omega)=f^{v_{e,{s_{\ell}}}}_{e_{s_{\ell}}}\,\omega$ holds. By Proposition~\ref{pro:FindingGMultExt}, $\omega=g\,\fourargsexpsubscript{z}{v}{e}{1}\cdots\fiveargsexpsubscript{z}{v}{e}{s}{j}$ with $(\threeargssubscript{v}{e}{1},\dots,\fourargssubscript{v}{e}{s}{j})\in\ZZ^{s_{j}}$ and $g\in\FF^{*}_{e-1}$. Thus $\frac{\s(g)}{g}=\fourargsexpsubscript{f}{-v}{e}{1}\cdots\fiveargsexpsubscript{f}{-v}{e}{s}{j}\,\fiveargsexpsubscript{f}{v}{e}{s}{\ell}$.

\noindent	$(3) \Longrightarrow (2)$.	We prove the statement by induction on the number of \pisiSE-monomials $\myt_1,\dots,\myt_e$. For the base case $e=0$ nothing has to be shown. Now suppose that the implication has been shown already for $e-1$ \pisiSE-monomials and set $\EE=\FF_{e}(\twoargstwosamesubscript{z}{1},\dots,\threeargstwosamesubscript{z}{1}{s})\dots(\threeargssubscript{z}{e-1}{1},\dots,\threeargstwosamesubscript{z}{e-1}{s})$. Suppose that $\dField{\EE(\threeargssubscript{z}{e}{1},\dots,\threeargstwosamesubscript{z}{e}{s})}$ is a \piE-extension of $\dField{\EE}$ and assume on the contrary that there is a $g\in\FF_{e}^{*}$ and $\bs{v_{e}}\in\ZZ^{s_{e}}\sm\zvs{s_{e}}$ such that $\frac{\s(g)}{g}=\vectorexpsubscript{f}{v}{1}\,\cdots\vectorexpsubscript{f}{v}{e-1}\,\vectorexpsubscript{f}{v}{e}$	holds. Let $j$ be maximal with $v_{e,{j}}\neq 0$ and define $\gamma:=g\,\vectorexpsubscript{z}{-v}{1}\cdots\vectorexpsubscript{z}{-v}{e-1}\,\fourargsexpsubscript{z}{-v}{e}{1}\cdots\fourargsexpsubscript{z}{-v}{e}{j-1}\in\EE(\threeargssubscript{z}{e}{1},\dots,\threeargssubscript{z}{e}{j-1})^{*}$ where $\vectorexpsubscript{z}{-v}{i}=\fourargsexpsubscript{z}{-v}{i}{1}\cdots\fourargsexpsubscript{z}{-v}{i}{s_{i}}$ for $1\leq i<e$ and $g\in\FF^{*}_{e}$. Then $\frac{\s(\gamma)}{\gamma}= \fourargsexpsubscript{f}{v}{e}{j}$ with $v_{e_{j}}\neq 0$; a contradiction since $\dField{\EE(\threeargssubscript{z}{e}{1},\dots,z_{e,{j}})}$ is a \piE-extension of $\dField{\EE(\threeargssubscript{z}{e}{1},\dots,z_{e,{j-1}})}$ by Theorem~\ref{thm:rpiE-Criterion}.

\noindent$(2) \Longrightarrow (1)$. We prove the statement by induction on the number of \pisiSE-monomials $\myt_1,\dots,\myt_e$. For $e=0$ nothing has to be shown. Now assume that the implication holds for the first $e-1$ \pisiSE-monomials. Now suppose that there are $k,\ell$ with $1\le k,\ell\le s_{e}$ and $k\neq\ell$ such that $\shp(\threeargssubscript{f}{e}{k},\threeargssubscript{f}{e}{\ell})\neq1$ holds. Since $\shp(\threeargssubscript{f}{e}{k},\threeargssubscript{f}{e}{\ell})\neq1$ we know that they are shift equivalent and because $\threeargssubscript{f}{e}{k},\,\threeargssubscript{f}{e}{\ell}$ are monic it follows by Lemma~\ref{lem:ConstructShiftEquivPoly} that there is a $g \in \FF^{*}_{e}$ with $\frac{\s(g)}{g}\threeargssubscript{f}{e}{k}=\threeargssubscript{f}{e}{\ell}$ and thus $\frac{\s(g)}{g}=\vectorexpsubscript{f}{v}{1}\,\cdots \vectorexpsubscript{f}{v}{e}$ holds with ${\bs v_{1}}=\cdots{\bs v_{e-1}}=0$ and  ${\bs v_{e}}=(0,\dots,0,v_{e,{k}},0,\dots,0,v_{e,{\ell}},0,\dots,0)\in\ZZ^{s_{e}}\sm\zvs{s_{e}}$ where $v_{e,{k}}=-1$ and $v_{e,{\ell}}=1$.

\noindent$(3) \Longrightarrow (4)$ is obvious and $(4) \Longrightarrow(3)$ follows by \cite[Corollary $2.6$]{schneider2017summation}. \qed
\end{proof}

\subsection{Proof of the main result (Theorem~\ref{thm:ProblemRMHPE})}

Using the structural results for nested \pisiSE-extensions from the previous subsection, we are now in the position to handle the mixed ${\bs q}$-multibasic case. More precisely, we will generalize Theorem~\ref{imptThm:CombinedShiftCoprimeAndContentsAsRPIExtension} from the rational difference field to the mixed $\bs{q}$-multibasic difference field $\dField{\FF}$ with $\bs{q}=(q_1,\dots,q_{e-1})$. Here we assume that $\KK=K(\kappa_1,\dots,\kappa_u)(q_1,\dots,q_{e-1})$ is a rational function field over a field $K$ where $K$ is strongly $\sigma$-computable.
Following the notation from the previous subsection, we set $\FF_0:=\KK$ and $\FF_i=\FF_0(\myt_1)\dots(\myt_i)$ for $1\leq i\leq e$. This means that $\dField{\FF_0(\myt_1)}$ is the \sigmaE-extension of $\dField{\FF_0}$ with $\sigma(\myt_1)=\myt_1+1$ and $\dField{\FF_{i-1}(\myt_i)}$ is the \piE-extension of $\dField{\FF_{i-1}}$ with $\sigma(\myt_i)=q_{i-1}\,\myt_i$ for $2\leq i\leq e$.

As for the rational case we have to merge difference rings coming from different constructions.
Using Theorem~\ref{imptThm:MixedCasePeriodZeroShiftCoPrimePolynomialsAsPiExtension} instead of Theorem~\ref{imptThm:ShiftCoPrimePolynomialsAsPiExtension}, Lemma~\ref{lem:rPiExtCombined} generalizes straightforwardly to Lemma~\ref{lem:rPiExtCombined2}. Thus the proof is omitted here.

\begin{mlemma}\label{lem:rPiExtCombined2}
	Let $\dField{\FF_{e}}$ be the mixed ${\bs q}$-multi-basic difference field with $\FF_0=\KK$ from above. Further, let $\dField{\KK[y_{1},y_{1}^{-1}]\dots[y_{w},y_{w}^{-1}]}$ be a \piE-extension of $\dField{\KK}$ with $\frac{\s(y_i)}{y_i}\in\KK^*$ and $\dField{\FF_{e}[z_{1,{1}},z^{-1}_{1,{1}}]\dots[z_{1,{s_{1}}},z^{-1}_{1,{s_{1}}}]\dots[z_{e,{1}},z^{-1}_{e,{1}}] \dots [z_{e,{s_{e}}}, z^{-1}_{e,{s_{e}}}]}$ be a \piE-extension of $\dField{\FF_0}$ as given in Theorem~\ref{imptThm:MixedCasePeriodZeroShiftCoPrimePolynomialsAsPiExtension} with item (4). Then $\dField{\EE}$ with $\EE=\FF_e[y_{1},y_{1}^{-1}]\dots[y_{w},y_{w}^{-1}][z_{1,{1}},z^{-1}_{1,{1}}]\dots[z_{1,{s_{1}}},z^{-1}_{1,{s_{1}}}]\dots[z_{e,{1}},z^{-1}_{e,{1}}]\dots[z_{e,{s_{e}}}, z^{-1}_{e,{s_{e}}}]$ is	a \piE-extension of $\dField{\FF_e}$ . Furthermore, the \aE-extension $\dField{\EE[\vartheta]}$ of $\dField{\EE}$ with $\s(\vartheta)=\zeta\,\vartheta$ of order $\lambda$ is an \rE-extension.
\end{mlemma}


Gluing everything together, we obtain a generalization of
Theorem~\ref{imptThm:CombinedShiftCoprimeAndContentsAsRPIExtension}. Namely, one obtains an algorithmic construction of an \rpiE-extension in which one can represent a finite set of hypergeometric,  $q$-hypergeometric, ${\bs
q}$-multibasic hypergeometric and mixed ${\bs q}$-multibasic
hypergeometric products.

\begin{mtheorem}\label{imptThm:MixedCaseCombinedPeriodZeroShiftCoprimeAndContentsAsRPIExtension}
Let $\dField{\FF_{e}}$ be a mixed ${\bs q}$-multibasic {\dfE} of $\dField{\FF_{0}}$ with $\FF_{0}=\KK$ where $\KK=K(\kappa_1,\dots,\kappa_u)(q_1,\dots,q_{e-1})$ is a rational function field, $\s(\myt_{1})=\myt_{1}+1$ and $\s(\myt_{\ell})=q_{\ell-1}\,\myt_{\ell}$ for $2\leq\ell\leq e$. Let $\Lst{h}{1}{m}\in\FF_{e}^{*}$. Then one can define an \rpiE-extension $\dField{\AA}$ of
$\dField{\KK^{\prime}}$ with
	\small
		\begin{equation}\label{eqn:mixedHyperLaurentPoly}
		\AA=\KK^{\prime}(\myt_1)\dots(\myt_e)[\vartheta][y_{1},y^{-1}_{1}]\dots[y_{w},y^{-1}_{w}][z_{1,{1}},z^{-1}_{1,{1}}]\dots[z_{1,{s_{1}}},z^{-1}_{1,{s_{1}}}]\dots[z_{e,{1}},z^{-1}_{e,{1}}]\dots[z_{e,{s_{e}}},z^{-1}_{e,{s_{e}}}]
		\end{equation}
	\normalsize
and $\KK^{\prime}=K'(\kappa_1,\dots,\kappa_u)(q_1,\dots,q_{e-1})$ where $K^{\prime}$ is an algebraic field extension of $K$ such that
\begin{itemize}
	\item $\s(\vartheta)=\zeta\,\vartheta$ where $\zeta\in K^{\prime}$ is a $\lambda$-th root of unity.
	\item $\dfrac{\s(y_{j})}{y_{j}}=\alpha_{j}\in{\KK^{\prime}}\sm\zs$ for $1\le j\le w$ where the $\alpha_j$ are not roots of unity; 
	\item $\dfrac{\s(z_{i,{j}})}{z_{i,{j}}}=f_{i,{j}}\in\FF_{i-1}[\myt_{i}]\sm\FF_{i-1}$ are
monic, irreducible and shift co-prime;
	\end{itemize}
	holds with the following property. For $1\leq k\leq m$ one can define\footnote{We remark that this representation is related to the normal form given in~\cite{ZimingLi:11}.} 
		\begin{equation}\label{Equ:giForMixedCase}
		g_k=r_{k}\,\vartheta^{\mu_{k}}\,y_{1}^{u_{k,{1}}}\cdots
y_{1}^{u_{k,{w}}}\,z_{1,{1}}^{v_{k,1,{1}}}\cdots
z_{1,{s_{1}}}^{v_{k,1,{s_{1}}}}\,z_{2,{1}}^{v_{k,2,{1}}} \cdots
z_{2,{s_{2}}}^{v_{k,2,{s_{2}}}}\cdots z_{e,{1}}^{v_{k,e,{1}}}\cdots
z_{e,{s_{e}}}^{v_{k,e,{s_{e}}}}
		\end{equation}
		with $0\le\mu_k\le\lambda-1$, $u_{k,i}\in\ZZ$, $\nu_{k,i,j}\in\ZZ$ and 
$r_{k}\in\FF_{e}^{*}$ such that
$$\s(g_{k})=\s(h_{k})\,g_{k}.$$
	If $K$ is strongly $\sigma$-computable, the components of the theorem can be computed.
\end{mtheorem}

\begin{proof}
Take irreducible monic polynomials $\mathcal{B}=\{\Lst{p}{1}{n}\}\subseteq\FF_{0}[\myt_{1},\myt_{2},\dots,\myt_{e}]$ 
and take
$\Lst{\gamma}{1}{m}\in\FF_{0}^{*}$ such that for each $k$ with $1\leq k\leq m$ we get $\Lstc{d}{k}{1}{n}\in\ZZ$ with $\sigma(h_{k})=\gamma_{i}\,p_{1}^{d_{k,1}}\cdots p_{n}^{d_{k,n}}$.  Following the proof of Theorem~\ref{imptThm:CombinedShiftCoprimeAndContentsAsRPIExtension}, we can construct an \rpiE-extension $\FF_{0}^{\prime}(x)[\vartheta][y_{1}, y^{-1}_{1}] \dots [{y_{w}},y^{-1}_{w}]$ of $\dField{\FF_{0}^{\prime}(x)}$ with constant field $\FF_{0}^{\prime}=K'(\kappa_1,\dots,\kappa_u)(q_1,\dots,q_{e-1})$ where $K'$ is an algebraic extension of $K$ and the automorphism is defined as stated in Theorem~\ref{imptThm:MixedCaseCombinedPeriodZeroShiftCoprimeAndContentsAsRPIExtension} with the following property: we can define $a_k$ of the form~\eqref{Equ:ConstantPart} in this ring with~\eqref{Equ:ShiftConstantPart}.\\
Set $\mathcal{I}_{i}=\big\{\omega\in\mathcal{B}\,|\,\omega\in\KK[\myt_{1},\myt_{2},\dots,\myt_{i}]\sm\KK[\myt_{1},\myt_{2},\dots,\myt_{i-1}]\big\}$ for $1\leq i\leq e$ and define $I=\{1\leq i\leq e \mid \mathcal{I}_{i}\neq\{\}\}$.
Then for each $i\in I$ there is a partition $\mathcal{P}_{i}=\{\mathcal{E}_{i,{1}},\dots,\mathcal{E}_{i,{s_{i}}}\}$ of $\mathcal{I}_{i}$ w.r.t.\ the shift-equivalence of the automorphism defined for each $\myt_{i}$, i.e., each $\mathcal{E}_{i,j}$ with $1\leq j\leq s_i$ and $i\in I$ contains precisely the shift equivalent elements of 
$\mathcal{P}_{i}$. Take a representative from each equivalence class $\mathcal{E}_{i,{j}}$ in $\mathcal{P}_{i}$ and collect them in $\mathcal{F}_{i}:=\{\Lstc{f}{i}{1}{s_{i}}\}$. By construction it follows that property~(1) in Theorem~\ref{imptThm:MixedCasePeriodZeroShiftCoPrimePolynomialsAsPiExtension} holds; here we put all $\myt_i$ with $i\notin I$ in the ground field. Therefore by Theorem~\ref{imptThm:MixedCasePeriodZeroShiftCoPrimePolynomialsAsPiExtension} we obtain the \piE-extension $\dField{\FF_{e}(z_{1,{1}})\dots(z_{1,{s_{1}}})\dots(z_{e,{1}})\dots(z_{e,{s_{e}}})}$ of $\dField{\FF_{e}}$ with $\s(z_{i,{k}})=f_{i,{k}}\,z_{i,{k}}$ for all $i\in I$ and $1 \le k \le s_{i}$ with $s_i\in\NN\setminus\{0\}$; for $i\notin I$ we set $s_i=0$. By Lemma~\ref{lem:rPiExtCombined2} and Remark~\ref{remk:shiftR-Monomial}, $\dField{\AA}$ with~\eqref{eqn:mixedHyperLaurentPoly}
is an \rpiE-extension of $\dField{\FF_{0}^{\prime}(\myt_{1})\dots(\myt_{e})}$. Let $i,j$ with $i\in I$ and $1\le j \le s_{i}$. Since each $f_{i,j}$ is shift equivalent with every element of $\mathcal{E}_{i,j}$, it follows by Lemma~\ref{lem:ConstructShiftEquivPoly} that for all $h\in\mathcal{E}_{i,j}$, there is a rational function $0\neq r\in\FF_{i}\sm\FF_{i-1}$ with $h=\frac{\s(r)}{r}\,f_{i,j}$. 
Putting everything together we obtain for each $k$ with $1\leq k\leq m$, an $0\neq r_k\in\FF_{e}$ and ${\bs v_{k,i}}=(v_{k,i,1},\dots,v_{k,i,s_i})\in\ZZ^{s_{i}}$ with $p_{1}^{d_{k,1}}\cdots p_{n}^{d_{k,n}}=\frac{\s(r_{k})}{r_{k}}\,\vectorexpsubscript{f}{v}{1}\cdots\vectorexpsubscript{f}{v}{e}$. Note that for $$b_k:=r_k\,z_{1,{1}}^{v_{k,1,{1}}}\cdots z_{1,{s_{1}}}^{v_{k,1,{s_{1}}}}\,z_{2,{1}}^{v_{k,2,{1}}} \cdots z_{2,{s_{2}}}^{v_{k,2,{s_{2}}}}\cdots z_{e,{1}}^{v_{k,e,{1}}}\cdots z_{e,{s_{e}}}^{v_{k,e,{s_{e}}}}\in\AA$$
we have that $\s(b_k)=p_{1}^{d_{k,1}}\cdots p_{n}^{d_{k,n}}\,b_k.$ Now let $g_k\in\AA$ be as defined in~\eqref{Equ:giForMixedCase}. Since $g_k=a_k\,b_k$ where $a_k$  equals~\eqref{Equ:ConstantPart} and has the property~\eqref{Equ:ShiftConstantPart}, we get $\s(g_k)=\s(h_k)\,g_k$.
	The proof of the computational part is the same as that of
Theorem~\ref{imptThm:CombinedShiftCoprimeAndContentsAsRPIExtension}. \qed
\end{proof}

We are now ready to complete the proof for Theorem~\ref{thm:ProblemRMHPE}. To link to the notations used there, we set $\bs{q}=(q_1,\dots,q_{e-1})$ and set further $(x,t_1,\dots,t_{e-1})=(\myt_1,\dots,\myt_e)$, in particular we use the shortcut $\bs{t}=(\myt_2,\dots,\myt_{e-1})$. 
Suppose we are given the products~\eqref{eqn:MixedHypergeoPdts} and that we are given the components as stated in Theorem~\ref{imptThm:MixedCaseCombinedPeriodZeroShiftCoprimeAndContentsAsRPIExtension}. 
Then we follow the strategy as in Subsection~\ref{SubSec:HypergeoemtricCase}.\\
$\bullet$ Take the $\KK'$-embedding $\tau:\KK'(x,\bs{t})\to\ringOfEquivSeqs[\KK^{\prime}]$ where
$\tau(\frac{a}{b})=\langle\ev\big(\frac{a}{b}, n \big)\rangle_{n \geq 0}$ for  $a,b\in\KK^{\prime}[x,\bs{t}]$ is defined by~\eqref{Equ:EvqMixed}. Then by iterative application of part~(2) of Lemma~\ref{lem:injectiveHom} we can construct the $\KK'$-homomorphism $\tau:\AA\to\ringOfEquivSeqs[\KK^{\prime}]$ determined by the homomorphic extension with
\begin{itemize}
\item $\tau(\vartheta)=\langle\zeta^n\rangle_{n\geq0}$, 
\item $\tau(y_i)=\langle\alpha_i^n\rangle_{n\geq0}$ for $1\leq i\leq w$ and 
\item $\tau(z_{i,j})=\langle\displaystyle\prod_{k=\ell'_{i,j}}^nf_{i,j}(k-1,{\bs q}^{k-1})\rangle_{n\geq0}$ with $\ell'_{i,j}=Z(f_{i,j})+1$ for $1\leq i\leq e$, $1\leq j\leq s_i$.
\end{itemize}
In particular, since $\dField{\AA}$ is an \rpiE-extension of $\dField{\KK'(x,\bs{t})}$, it follows by part~(3) of Lemma~\ref{lem:injectiveHom} that $\tau$ is a $\KK^{\prime}$-embedding.\\ 
$\bullet$ Finally, define for $1\leq i\leq m$ the product expressions
\begin{align*}
 G_{i}(n)=&r_{i}(n)\,(\zeta^n)^{\mu_{i}}\,(\alpha_{1}^n)^{u_{i,{1}}}\cdots (\alpha_{w}^n)^{u_{i,{w}}}\\
 &\Big(\myProduct{k}{\ell'_{1,1}}{n}{f_{1,1}(k-1,{\bs q}^{k-1})}\Big)^{v_{i,{1},1}}\cdots \Big(\myProduct{k}{\ell'_{1,s_1}}{n}{f_{1,s_1}(k-1,{\bs q}^{k-1})}\Big)^{v_{i,1,{s_1}}}\dots\\[-0.3cm]
&\Big(\myProduct{k}{\ell'_{e,1}}{n}{f_{e,1}(k-1,{\bs q}^{k-1})}\Big)^{v_{i,{e},1}}\cdots \Big(\myProduct{k}{\ell'_{e,s_e}}{n}{f_{e,s_e}(k-1,{\bs q}^{k-1})}\Big)^{v_{i,e,{s_e}}}
\end{align*}
and
define $\delta_{i}=\max(\ell_{i},\ell'_{1,1},\dots,\ell'_{e,s_e},Z(r_{i}))$. Then observe that $\tau(g_i)=\langle G'_i(n)\rangle_{n\geq0}$ with~\eqref{Equ:GiPrime}. Now set $Q_i(n):=c\,G_i(n)$ with $c=\frac{P_i(\delta_i)}{G_i(\delta_i)}\in\KK^{\prime}$. Then as for the proof of the rational case we conclude that $P_i(n)=Q_i(n)$ for all $n\geq\delta_i$. This proves part (1) of Theorem~\ref{thm:ProblemRMHPE}. Since $\tau$ is a $\KK'$-embedding, the sequences \vspace*{-0.25cm} 
$$\funcSeqA{\alpha_{1}^{n}}{n},\dots,\funcSeqA{\alpha_{w}^{n}}{n}, \funcSeqA{\myProduct{k}{\beta_{1}}{n}{f_{1,1}(k-1,{\bs q}^{k-1})}}{n},\dots,\funcSeqA{\myProduct{k}{\beta_{e,s_e}}{n}{f_{e,s_e}(k-1,{\bs q}^{k-1})}}{n}\vspace*{-0.25cm}$$ 
are among each other algebraically independent over $\tau\big(\KK'(x)\big)\big[\funcSeqA{\zeta^{n}}{n}\big]$
which proves property $(2)$ of Theorem~\ref{thm:ProblemRMHPE}.

\begin{mexample}\label{exa:MixedHyperGeoPrdtsInSeqSetting}
	Let $\KK = K(q_{1}, q_{2})$ be the rational function field over the algebraic number field
	$K=\QQ(\sqrt{-3},\sqrt{-13})$, and consider the mixed ${\bs q}=(q_1,q_2)$-multibasic
	hypergeometric product expression\vspace*{-0.25cm}
	\begin{equation}\label{eqn:MixedHyperGeoPrdtsInSeqSetting}
	\hspace*{-0.3cm}P(n)=\myProduct{k}{1}{n}{\tfrac{
			\sqrt{-13}\,\big(k\,q_{1}^{k}+1\big)}{k^{2}\,
			\big(q_{1}^{k+1}\,q_{2}^{k+1}+k+1\big)}} +
	\myProduct{k}{1}{n}{\tfrac{k^{2}\,\big(k+q_{1}^{k}\,q_{2}^{k}\big)^{2}}{\sqrt{-3}\,(k+1)^{2}}}+
	\myProduct{k}{1}{n}{\tfrac{169\,\big(k\,q_{1}^{k}\,q_{2}^{k}+q_{2}^{k}+k\,q_{1}^{k}+1\big)}{\big(k\,q_{1}^{k+2}+2\,q_{1}^{k+2}+1\big)\,k^{2}}}. \vspace*{-0.25cm}
	\end{equation}
	Now take the mixed ${\bs q}$-multibasic {\dfE}
	$\dField{\KK(x)(t_{1})(t_{2})}$ of $\dField{\KK}$ with  $\s(x)=x+1$,
	$\s(t_{1})=q_{1}\,t_{1}$ and $\s(t_{2})=q_{2}\,t_{2}$. 
        Note that $h_1(k,q_1^k,q_2^k)$, $h_2(k,q_1^k,q_2^k)$ and $h_3(k,q_1^k,q_2^k)$ with         
        $$h_1=\tfrac{\sqrt{-13}\,(x\,t_1+1)}{x^{2}\,(q_{1}\,t_1\,q_{2}\,t_2+x+1)},\,
			h_2=\tfrac{x^{2}\,(x+t_1\,t_2)^{2}}{\sqrt{-3}\,(x+1)^{2}},\,
			h_3=\tfrac{169\,(x\,t_1\,t_2+t_2+x\,t_1+1}{(x\,q_{1}^{2}t_1+2\,q_{1}^{2}t_2+1)\,x^{2}}\in\KK(x,t_{1}, t_{2})$$ 
	are the multiplicands of the above products, respectively. Applying Theorem~\ref{imptThm:MixedCaseCombinedPeriodZeroShiftCoprimeAndContentsAsRPIExtension} we
	construct the algebraic number field extension $\KK'=\QQ\big((-1)^{\frac{1}{2}},\sqrt{3},\sqrt{13}\big)$ of $\KK$ and take the \pisiSE-extension $\dField{\FF'}$ of $\dField{\KK'}$ with $\FF'=\KK'(x)(t_1)(t_2)$ where $\sigma(x)=x+1$, $\sigma(t_1)=q_1\,t_1$ and $\sigma(t_2)=q_2\,t_2$. On top of this mixed multibasic difference field over $\KK'$ we construct the 
		\rpiE-extension $\dField{\AA}$
		with $\AA=\FF'[\vartheta]\geno{y_{1}}\geno{y_{2}}\geno{z_{1}}\geno{z_{2}}\geno{z_{3}}\geno{z_{4}}$ where the \rE-monomial $\vartheta$ with
		$\s(\vartheta)=(-1)^{\frac{1}{2}}\,\vartheta$ and the \piE-monomials $y_1,y_2,y_3$ with $\s(y_{1})=\sqrt{3}\,y_{1}$ and $\s(y_{2})=\sqrt{13}\,y_{2}$ are used to scope the content of the polynomials in $h_1,h_2,h_3$. Furthermore, the \piE-monomials $z_1,z_2,z_3,z_4$ with $\s(z_{1})=(x+1)\,z_{1}$,
		$\s(z_{2})=\big((x+1)\,q_{1}\,t_{1}+1\big)\,z_{2}$,
		$\s(z_{3})=(q_{2}\,t_{2}+1)\,z_{3}$,
		$\s(z_{4})=(q_{2}\,q_{1}\,t_{2}\,t_{1}+x+1)\,z_{4}$ are used to handle the monic polynomials in $h_1,h_2,h_3$. 
	These \piE-monomials are constructed in an iterative fashion as worked out in the proof of Theorem~\ref{imptThm:MixedCaseCombinedPeriodZeroShiftCoprimeAndContentsAsRPIExtension}. In particular, within this construction we derive	
	\fontsize{9.2}{0}\selectfont
	       \begin{equation*}
		Q =
		\underbrace{\dfrac{(q_{2}\,q_{1}+1)\,\vartheta\,y_{2}\,z_{2}}{\big(q_{2}\,q_{1}\,t_{2}\,t_{1}+x+1\big)\,z_{1}^{2}\,z_{4}}}_{=:g_1}+\,\underbrace{\dfrac{\vartheta^{3}\,z_{4}^{2}}{\big(x+1\big)^{2}\,y_{1}}}_{=:g_2}+\,\underbrace{\dfrac{\big(q_{1}+1\big)\,\big(2\,q_{1}^{2}+1\big)\,y_{2}^{4}\,z_{3}}{\big((x+1)\,q_{1}\,t_{1}+1\big)\,\big((x+2)\,q_{1}^{2}\,t_{1}+1\big)\,z_{1}^{2}}}_{=:g_3}
		\end{equation*}
\normalsize	
 such that $\sigma(g_i)=\sigma(h_i)\,g_i$ holds for $i=1,2,3$.\\ 
	Now take the $\KK'$-embedding $\tau:\KK'(x,\bs{t})\to\ringOfEquivSeqs[\KK^{\prime}]$ where
$\tau(\frac{a}{b})=\langle\ev\big(\frac{a}{b}, n \big)\rangle_{n \geq 0}$ for  $a,b\in\KK^{\prime}[x,\bs{t}]$ is defined by~\eqref{Equ:EvqMixed}. Then by iterative application of part~(2) of Lemma~\ref{lem:injectiveHom} we can construct the $\KK^{\prime}$-embedding $\tau:\AA\to\ringOfEquivSeqs[\KK^{\prime}]$ determined by the homomorphic extension of $\tau(\vartheta) = \constSeqA{\big((-1)^{\frac{1}{2}}\big)}{\hspace*{-0.06cm}n}$, $\tau(y_{1})=\langle\big(\sqrt{3}\big)^{n}\rangle_{n\ge0}$, $\tau(y_{2})=\langle\big(\sqrt{13}\big)^{n}\rangle_{n\ge0}$, $\tau(z_{1})=\langle n!\rangle_{n\ge0}$, $\tau(z_{2})=\langle \prod_{k=1}^{n}(k\,q_{1}^{k}+1) \rangle_{n\ge0}$, $\tau(z_{3})=\langle \prod_{k=1}^{n}(q_{2}^{k}+1) \rangle_{n\ge0}$ and $\tau(z_{4})=\langle \prod_{k=1}^{n}(q_{2}^{k}\,q_{1}^{k}+k) \rangle_{n\ge0}$. By our construction we can conclude that $\tau(g_1)$, $\tau(g_2)$ and $\tau(g_3)$ equal the sequences produced by the three products in~\eqref{eqn:MixedHyperGeoPrdtsInSeqSetting}, respectively. In particular, $\tau(Q)=\langle P(n)\rangle_{n\geq0}$. 
Furthermore, if we define 	
	\fontsize{9.2}{0}\selectfont
		\begin{multline*}
		Q(n) = \dfrac{(q_{2}\,q_{1}+1)}{\big(q_{2}^{n+1}\,q_{1}^{n+1}+n+1
			\big)}\,\rootOfUnitySeq{2}{n}\,\algSeq{13}{n}\,\frac{1}{\polySeq{n!}{2}}\,\myProduct{k}{1}{n}{\big(k\,q_{1}^{k}+1\big)}\,\myProduct{k}{1}{n}{\frac{1}{\big(q_{2}^{k}\,q_{1}^{k}+k\big)}}\\[-0.2cm]
			+\,\dfrac{1}{\big(n+1\big)^{2}}\,\rootOfUnitySeqExp{2}{n}{3}\algSeqExp{3}{n}{-1}\left(\myProduct{k}{1}{n}{\big(q_{2}^{k}\,q_{1}^{k}+k\big)}\right)^{2}\\[-0.2cm]
			+\,\dfrac{\big(q_{1}+1\big)\,\big(2\,q_{1}^{2}+1\big)}{\big((n+1)\,q_{1}^{n+1}+1\big)\,\big((n+2)\,q_{1}^{n+2}+1\big)}\,\algSeqExp{13}{n}{4}\,\frac{1}{\polySeq{n!}{2}}\,\myProduct{k}{1}{n}{\big(q_{2}^{k}+1\big)}
		\end{multline*}
\normalsize
		then we can guarantee that $P(n)=Q(n)$ for all $n\geq1$. The sequences generated by
$\algSeq{3}{n},\,{\algSeq{13}{n}},\,{n!},\,
			\myProduct{k}{1}{n}{\big(k\,q^{k}_{1}+1\big)},\,
			\myProduct{k}{1}{n}{\big(q^{k}_{2}+1\big)},\,
			\myProduct{k}{1}{n}{\big(q^{k}_{2}\,q^{k}_{1}+k \big)}$ 
		are algebraically independent among each other over $\tau(\KK'(x,\bs{t}))[\langle ((-1)^{\frac12})^n\rangle_{n\geq0}]$ by construction.
\end{mexample}

\section{Conclusion}\label{Sec:Conclusion}

We extended the earlier work~\cite{schneider2005product,Schneider:14} substantially and showed that
any expression in terms of hypergeometric products $\ProdExpr(\KK(n))$ 
can be formulated in an \rpisiSE-extension if the original constant field $\KK$ satisfies certain algorithmic properties.
This is in particular the case if $\KK=K(\kappa_1,\dots,\kappa_u)$ is a rational function field over an
algebraic number field $K$. In addition, we extended this machinery for the class of mixed $\bs{q}$-multibasic hypergeometric products.
Internally, we rely on Ge's algorithm~\cite{ge1993algorithms} that solves the orbit problem in $K$ and we utilize heavily 
results from difference ring theory~\cite{schneider2010parameterized,Singer:08,schneider2016difference,schneider2017summation}.
This product machinery implemented in Ocansey's package \texttt{NestedProducts} in combination with the summation machinery available in \texttt{Sigma}~\cite{Sigma} yields a complete summation
toolbox in which nested sums defined over $\ProdExpr(\KK(n,\bs{q^n}))$ can be represented and simplified using the summation paradigms of telescoping, creative telescoping and recurrence solving~\cite{petkovvsek1996b,Sigma}.

\medskip

\noindent\textbf{Acknowledgement.} We would like to thank 
Michael Karr for his valuable remarks to improve the article.


\providecommand{\bysame}{\leavevmode\hbox to3em{\hrulefill}\thinspace}
\providecommand{\MR}{\relax\ifhmode\unskip\space\fi MR }
\providecommand{\MRhref}[2]{%
  \href{http://www.ams.org/mathscinet-getitem?mr=#1}{#2}
}
\providecommand{\href}[2]{#2}

\end{document}